\documentclass[11pt]{llncs}
\usepackage{graphicx,graphics,hyperref,color}
\usepackage{alphabeta}
\usepackage{float}
\setlength{\parindent}{0pt}
\usepackage{amsmath}
\usepackage[ruled,vlined,linesnumbered,longend]{algorithm2e}
\usepackage{algorithmic}

\newlength\myindent
\setlength\myindent{2em}

\usepackage{eqparbox}

\usepackage[margin=1in]{geometry}
\usepackage{thmtools}
\usepackage{preamble}
\usepackage{maths}

\newcommand{\vmax}{v_{\text{max}}}

\def\alg{\mathrm{ALG}}

\def\opt{\mathrm{OPT}}
\def\Exp{\mathbb{E}}

\allowdisplaybreaks 

\title{A Competitive Posted-Price Mechanism for Online Budget-Feasible Auctions}
%
\institute{School of ECE, National Technical University of Athens, Greece \and Université Paris-Dauphine, Université PSL, CNRS \and
 Archimedes, Athena Research Center, Greece\\\email{andcharalamp@gmail.com, fotakis@cs.ntua.gr, thanostolias@mail.ntua.gr, panagiotis.patsilinakos@dauphine.psl.eu}}
\author{Andreas Charalampopoulos\inst{1,3} \and Dimitris Fotakis\inst{1,3} \and Panagiotis Patsilinakos\inst{2} \and Thanos Tolias\inst{1,3} } 

\date{19/12/2024}

\begin{document}

\maketitle

\begin{abstract}
We consider online procurement auctions, where the agents arrive sequentially, in random order, and have private costs for their services. The buyer aims to maximize a monotone submodular value function for the subset of agents whose services are procured, subject to a budget constraint on their payments. We consider a posted-price setting where upon each agent's arrival, the buyer decides on a payment offered to them. The agent accepts or rejects the offer, depending on whether the payment exceeds their cost, without revealing any other information about their private costs whatsoever. We present a randomized online posted-price mechanism with constant competitive ratio, thus resolving the main open question of (Badanidiyuru, Kleinberg and Singer, EC 2012). Posted-price mechanisms for online procurement typically operate by learning an estimation of the optimal value, denoted as $\opt$, and using it to determine the payments offered to the agents. The main challenge is to learn $\opt$ within a constant factor from the agents' accept / reject responses to the payments offered. Our approach is based on an online test of whether our estimation is too low compared against $\opt$ and a carefully designed adaptive search that gradually refines our estimation. 
\end{abstract}

\def\N{\mathcal{N}}
\def\reals{\mathbb{R}}
\def\nats{\mathbb{N}}
\def\eps{\varepsilon}

\section{Introduction}
\label{s:intro}

We consider a model of procurement auctions, introduced by Singer \cite{Singer10}, where a set $\N$ of $n$ strategic agents are willing to offer their services to a buyer. Each agent $i \in \N$ has a private cost $c_i \in \reals_{\geq 0}$ for offering their service. The buyer aims to maximize a public monotone submodular value function $f : 2^{\N} \to \reals_{\geq 0}$ for the subset of agents hired, subject to a hard budget constraint that the total amount of payments to the agents should not exceed the buyer's budget $B$. 

In the algorithmic problem, the buyer is aware of the agent costs $c_1, \ldots, c_n$ and aims to compute a subset $S \subseteq \N$ of agents (a.k.a. a solution) that maximizes $f(S)$ subject to $\sum_{i \in S} c_i \leq B$. Maximizing a monotone submodular function subject to a budget (a.k.a.  knapsack) constraint is a classical NP-hard optimization problem whose approximability is well understood (see e.g., \cite{NemhauserWF78,KhullerMN99,Sviri04,BadaDO19}). 

In the mechanism design setting of \cite{Singer10,Singer13}, we aim to design \emph{individually rational} and \emph{truthful} (a.k.a. \emph{incentive compatible}) mechanisms that compensate each agent $i$ included in the solution with a carefully chosen payment $p_i$. Individual rationality requires that $p_i \geq c_i$, i.e., that the payment allocated to each agent in the solution should be no less than their cost. Truthfulness requires that the agent selection and the corresponding payments should ensure that no agent has an incentive to falsely report a higher cost to the mechanism, in an attempt to extract a larger payment. 
The mechanism must be budget-feasible with respect to the payments $p_i$ to the agents included in the solution, i.e., it must be $\sum_{i\in S} p_i \leq B$, and in the ideal case, the mechanism should also be computationally efficient.   
Conceptually, individual rationality and truthfulness require that the payments are large enough to cover the agent costs and incentivize truthful reporting, while budget feasibility requires that the payments are kept low, so that the value of the solution is maximized. 



Compensating strategic agents for their effort in a way that is incentive-compatible, value-efficient and budget-feasible is an intriguing algorithmic problem with significant practical applications, which include crowdsourcing \cite{SingerM13,SinglaK13,Liu2015}, crowdsensing \cite{Zheng2022}, participatory sensing \cite{Restuccia2016} (see also the references in the survey of Liu et al. \cite{LCLW2024_survey}).

Since its introduction by Singer \cite{Singer10} and due to its practical and theoretical significance, the problem of designing truthful mechanisms for budget-feasible procurement auctions has received significant attention. A long line of research has shown that budget-feasible procurement can be approximated within small constant factors by computationally efficient truthful mechanisms for additive \cite{GravinJLZ2020}, monotone submodular \cite{JalalyT2021,BalkanskiGGST2022} and non-monotone submodular valuations \cite{AmanatidisKS22,BalkanskiGGST2022} (see Section~\ref{sec:related} for a more detailed discussion), in the \emph{bidding} (a.k.a. \emph{sealed-bid}) setting, where the mechanism may ask the agents to report their costs. 

Previous work also considered \emph{online} budget-feasible procurement auctions with \emph{secretary arrivals}, where the agents are chosen adversarially and arrive sequentially in random order, and the mechanism has to decide irrevocably about selecting an agent in the solution and the corresponding payment, upon the agent's arrival and without any knowledge about future agent arrivals. With agent bidding,  online procurement auctions admit constant factor approximations for additive \cite{SingerM13}, monotone submodular \cite{Bada2012} and non-monotone submodular \cite{AmanatidisKS22} valuations. 

However, in practical applications, such as crowdsourcing, in addition to dealing with the agents online, it is desirable (and often necessary) that the mechanism interacts with the agents through \emph{sequential posted pricing}. 
A posted-price mechanism decides on the payment $p_i$ offered to an agent $i$ upon $i$'s arrival. Then, the agent accepts or rejects the offer, depending on whether $p_i \geq c_i$ or not, without revealing any other information about their private cost $c_i$ whatsoever. 
Posted-price mechanisms are used extensively in practice, due to their simplicity, transparency, obvious truthfulness \cite{Li2017} and agent privacy regarding their costs. E.g., posted pricing is the standard form of pricing in crowdsourcing applications, such as Mechanical Turk (see also the discussion in \cite{BalkHart16,ChawlaHMS2010} for the motivation and the advantages of sequential posted-price mechanisms). 

%
%

In the context of online budget-feasible procurement, Badanidiyuru, Kleinberg and Singer \cite{Bada2012} asked whether \emph{posted-price mechanisms are as powerful, in terms of the asymptotic behavior of their competitive ratio, as standard bidding mechanisms}. Towards resolving this question, they presented a randomized constant-competitive posted-price mechanism for symmetric submodular valuations%
\footnote{A function $f : 2^{\N} \to \reals_{\geq 0}$ is \emph{symmetric submodular} if there is a nondecreasing concave function $g:\nats \to \reals_{\geq 0}$ such that $f(S) = g(|S|)$ for all $S \subseteq \N$.}
and agents sampled independently from an unknown distribution. For monotone submodular valuations and secretary arrivals, Badanidiyuru et al. \cite{Bada2012} presented a randomized constant-competitive  online mechanism with agent bidding, and a randomized $O(\log n)$-competitive posted-price mechanism. Their main open question concerned the existence of a constant-competitive posted-price mechanism for monotone submodular valuations and secretary agent arrivals. 

Subsequently, Balkanski and Hartline \cite{BalkHart16} considered posted-price mechanisms in the Bayesian setting, where the agents arrive as independent samples from a known distribution. They presented mechanisms that are constant-competitive against the Bayesian optimal mechanism for additive, symmetric submodular and monotone submodular value functions. 

\subsection{Contribution and General Approach}
\label{sec:approach}

In this work, we present a randomized constant-competitive posted-price mechanism for online procurement auctions with monotone submodular valuations, thus resolving the main open question of Badanidiyuru et al. \cite{Bada2012}. More formally, we show the following:

\begin{theorem}\label{thm:main_informal}
There is a universally truthful $O(1)$-competitive randomized posted-price mechanism for budget-feasible online procurement auctions with secretary agent arrivals and  monotone submodular buyer's valuations.
\end{theorem}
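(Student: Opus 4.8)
The plan is to decouple the two difficulties---\emph{using} a good estimate of $\opt$ and \emph{learning} one---and to spend almost all of the effort on the second. First I would establish a ``known-scale'' building block: given a value $v$ with $v \in [\opt/\beta, \opt]$ for a constant $\beta$, consider the posted-price mechanism $\mathcal{M}_v$ that maintains a current set $S$ and, upon the arrival of agent $i$, offers the take-it-or-leave-it price $p_i = \rho\,\frac{f(S\cup\{i\})-f(S)}{v}\,B$ for a suitable constant $\rho$, adding $i$ to $S$ exactly when the offer is accepted. Budget feasibility is immediate: the payments telescope, so $\sum_i p_i = \frac{\rho B}{v}\,f(S_{\mathrm{final}})$, and since $f(S_{\mathrm{final}})\le\opt\le\beta v$ this is at most $\rho\beta B$, which is within $B$ once $\rho \le 1/\beta$. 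The value guarantee is the standard greedy-covering argument for budget-feasible submodular maximization: every agent of the optimal solution $S^\star$ is offered a price proportional to its marginal value, so each such agent is either hired or has large cost; telescoping the marginals against submodularity and using $\sum_{i\in S^\star} c_i \le B$ yields $f(S_{\mathrm{final}}) = \Omega(\opt)$. Crucially, $p_i$ depends only on the public $f$, the scale $v$, and the responses of \emph{earlier} agents, never on $c_i$, so $\mathcal{M}_v$ is truthful; fixing the random bits, every realization is truthful, giving universal truthfulness.

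The heart of the mechanism is a test, using only accept/reject feedback, of whether a guess $v$ is too small. I would run $\mathcal{M}_v$ on a batch of agents but cap its total spend at a phase budget $b$ and stop the batch the moment the cap is reached. The key observation is monotonicity in $v$: the offered prices scale like $1/v$, so a guess that is too small produces large prices, many acceptances, and rapid exhaustion of $b$, whereas a guess that is too large produces tiny prices, almost only rejections, and leaves $b$ largely unspent. Thus ``the cap $b$ is exhausted before the batch ends'' is a clean, observable binary signal that can be read off the responses alone, and it is exactly the event I would use to conclude ``$v$ is too low, increase it.'' Making this signal reliable---showing that it fires with high probability when $v \le c_1\opt$ and fails to fire when $v \ge c_2\opt$, for constants $c_1<c_2$, so that any non-triggering value is automatically $\Theta(\opt)$---is the first technical hurdle, and it is where the random arrival order enters: a random batch is a representative sample, so by the submodular sampling lemma its induced optimum is $\Theta(\opt)$ with good probability.

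With the test in hand, I would carry out a geometric (doubling) search for $\opt$ over a single random-order pass. Partition the stream into consecutive phases, associate with phase $k$ a guess $v_k = 2^k v_0$ and a cap $b_k$, run the capped $\mathcal{M}_{v_k}$, and advance to phase $k+1$ whenever the ``too low'' signal fires; when it first fails to fire, commit to the current guess and spend the remaining budget running $\mathcal{M}_{v_k}$ on the rest of the stream. Two balancing acts are needed. First, budget: the caps must satisfy $\sum_k b_k \le B$ while still leaving a constant fraction of $B$ for the committed phase; since a too-small guess exhausts its cap quickly, assigning geometrically growing caps $b_k$ lets the early, wasteful phases burn only a little, with the bulk reserved for the correct scale. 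Second, agents: each phase irrevocably consumes part of the stream, so I must argue that after the $\Theta(\log(\opt/v_0))$ search phases enough of the optimal value still lies ahead; randomizing the phase boundaries (a secretary-style random offset) makes the committed phase see a representative constant fraction of the instance, so that $\mathcal{M}_{v_k}$ harvests $\Omega(\opt)$ there with constant probability. To guard against instances whose value is concentrated on a single agent---which the accumulation-based test may miss---I would, with constant probability, instead run a classical single-item secretary that offers the whole budget $B$ to the running best-marginal agent, and randomize between the two sub-mechanisms.

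Finally I would condition on the good event---the search commits to some $v = \Theta(\opt)$, the committed phase sees a representative sample, and, in the concentrated case, the single-item branch succeeds---and combine the building block's $\Omega(\opt)$ value guarantee with the always-maintained budget feasibility and universal truthfulness to conclude an $O(1)$ competitive ratio. I expect the main obstacle to be the joint design of the test and the search: the mechanism has a single pass and a single realization of the costs, each test irrevocably spends both agents and budget, and the number of tests is the a priori unknown $\Theta(\log(\opt/v_0))$. Guaranteeing simultaneously that (i) the total spend over all search phases stays within $B$, (ii) the search nonetheless lands within a constant factor of $\opt$ with constant probability, and (iii) enough optimal value survives for the committed phase to harvest, under the correlations the random order introduces among the phase outcomes, is the crux of the argument; everything else reduces to either a standard greedy/submodularity calculation or a standard secretary estimate.
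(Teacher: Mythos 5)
Your skeleton matches the paper's (linear pricing at a guessed scale, an accept/reject test of whether the guess is too low, an adaptive search that commits when the test stops firing, and a secretary-style branch for value-concentrated instances), but two load-bearing claims in your proposal do not hold, and they are precisely the points the paper's four-period structure exists to circumvent. First, the two-sided test guarantee --- ``fires when $v \le c_1\opt$, fails to fire when $v \ge c_2\opt$, for constants $c_1 < c_2$'' --- is unobtainable from accept/reject feedback. The firing direction is fine (it is Property~\ref{property:b}, and needs the batch to be dense). But firing at threshold $v$ only requires collecting value $C\,a\,v$ from agents whose costs happen to lie below their offered prices, and a batch may contain near-zero-cost agents of total value up to $\approx \opt$; hence the signal can fire for any $v$ up to $\Theta(\opt/(C a))$, and $a$ must be $o(1)$ because your search runs many batches within a single stream. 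So ``first non-firing guess'' is not $\Theta(\opt)$; it can overshoot by an unbounded factor. The only sound tool against over-estimates is global accounting: every set the mechanism hires is budget-feasible, so its value is at most $\opt$, and a phase whose success would force collected value exceeding $\opt$ cannot succeed. Converting that into threshold control is exactly what dictates the paper's power-tower spacing of test thresholds (Lemma~\ref{lemma:PERIOD1}), and even then the binary search output is pinned down only within an $O(\log\log)$ factor (Lemma~\ref{lemma:dist}) --- which is why a fourth, doubling/halving Exploitation period is needed at all. Your search, which commits irrevocably at the first non-firing guess, has no analogue of this protection.

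Second, and independently fatal: your doubling search runs $\Theta(\log(\opt/\vmax))$ tests, each of which is sound only with constant probability (a batch is dense with probability about $0.9$, not with high probability), and a single false negative makes the search commit to some $v \ll \opt$, after which the committed phase harvests only $O(v)$ by the linear-pricing guarantee. The probability that all tests behave correctly is therefore $0.9^{\Theta(\log(\opt/\vmax))}$, which vanishes as the market grows, so the expected value is $o(\opt)$ and the mechanism is not $O(1)$-competitive. Repairing this requires doing two things simultaneously: amplifying each test by repeating it over $\Theta(\log\log)$ disjoint rounds (which needs a negative-association argument, since round outcomes are correlated), and shrinking the number of tested scales so the union bound survives --- the paper first localizes $\opt$ to two consecutive power-tower intervals with $O(\log^\ast n)$ tests and only then binary-searches over exponents, for $O((\log\log n)^2)$ rounds in total. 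Note also that soundness of each test forces its length parameter to satisfy $a = \Omega(\vmax/\opt)$ (the batch must contain $\Omega(1)$ optimal agents in expectation), so $\Theta(\log n)$ batches fitting into one stream would additionally require the non-constant large-market assumption $\opt/\vmax = \Omega(\log n)$, whereas the paper needs only a constant and removes even that via its MediumMarket and Dynkin branches. Your closing paragraph correctly identifies this joint tension as the crux, but the proposal does not supply the mechanism that resolves it.
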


The mechanism of Theorem~\ref{thm:main_informal} satisfies budget-feasibility with certainty, and is individually rational and universally (and obviously) truthful%
\footnote{A randomized mechanism is universally truthful if it can be expressed as a probability distribution over truthful deterministic mechanisms. A mechanism is \emph{obviously truthful} \cite{Li2017} if it has an equilibrium in obviously dominant strategies. A strategy is \emph{obviously dominant} if the best outcome under any possible deviation from it is no better than the worst outcome under the given strategy.}, 
with the latter properties following directly from the definition of posted-price mechanisms, where the agents to act in their best interest when selecting whether to accept the payment offered or not. 

Posted-price mechanisms for online procurement typically operate by learning an estimation of the optimal value $\opt$ (usually referred to as a \emph{threshold}, for brevity) $\hat{t}$ and using it to determine the payments (a.k.a. \emph{prices}) offered to the agents. We restrict our attention to \emph{linear-price} mechanisms, where the price offered to agent $i$ is $p_i = f_S(i) B / \hat{t}$, where $B$ is the buyer's budget, $\hat{t}$ is the mechanism's current threshold, $S$ is the subset of agents selected in the solution up to $i$'s arrival, and $f_S(i) = f(S \cup \{ i \}) - f(S)$ is the marginal value due to $i$'s inclusion in the solution.

One can show (see e.g., the proof of Lemma~\ref{lem:threshold} or \cite[Lemma~4.4]{Bada2012}) that linear-price mechanisms with thresholds $\hat{t}$ that remain within a constant fraction of $\opt$ for a constant fraction of the agent sequence are constant-competitive. Hence, the main challenge in the design of efficient linear-price mechanisms (and the main difference between them) is how to maintain a threshold $\hat{t}$ that mostly remains within a constant fraction of $\opt$. 

The posted-price mechanism of Badanidiyuru et al. \cite[Section~4]{Bada2012} uses linear prices and a fixed threshold $\hat{t}$ determined before the first offer to an agent. In Theorem~\ref{thm:non-adaptive-lower-bound}, we show that the competitive ratio of any linear-price mechanism that selects a fixed threshold without any knowledge of the agent costs is $\Omega(\log n)$. Hence, we consider \emph{adaptive linear-price} mechanisms based on a threshold that evolves over time. 

Our approach is inspired by the constant-competitive adaptive linear-price mechanism of \cite[Section~3]{Bada2012} for symmetric submodular valuations and agents sampled independently from an unknown distribution.
As in \cite[Section~3]{Bada2012}, our approach is based on an online test (cf. Mechanism~\ref{alg:TestTHRESHOLD}, usually referred to as \emph{TestThreshold}) that efficiently decides if the current threshold $\hat{t}$ is less than $\opt / 4$. TestThreshold comes with an one-sided guarantee: with constant probability, TestThreshold succeeds (i.e., it responds that the current threshold $\hat{t} \leq \opt/4$), if it is applied with any threshold $\hat{t} \leq \opt / 4$ to a random subset of agents of size $\Omega( \vmax n / \opt)$, where $\vmax = \max_{i \in \N} \{ f(\{i\}) \}$ is the maximum value of an agent (Property~\ref{property:b}). If TestThreshold is applied with a threshold $\hat{t} > \opt / 4$, there is no guarantee on its response. If TestThreshold fails (i.e., it responds that the current threshold $\hat{t} > \opt / 4$), we become aware that $\hat{t}$ is rather too large and should be decreased; otherwise, TestThreshold contributes $\Omega(\vmax \hat{t} / OPT)$ to the total value of the mechanism.  

The high level idea of our approach is to first obtain a rough estimation of $\opt$ and then apply binary search, guided by TestThreshold, to the possible range of $\opt$ defined by our initial estimation. At the conceptual level, our approach consists of three periods:
\begin{description}
\item[Learning $\vmax$:] 
As in standard secretary algorithms, we start with a \emph{learning period} that considers a constant fraction of the agents, where we do not make any offers and aim to estimate $\vmax$ with constant probability. We note that  $\vmax \leq \opt \leq n \cdot \vmax$.

\item[Binary Search:] 
Using TestThreshold, we apply binary search to the set $\{ \vmax, 2\vmax, 4\vmax, \ldots$, $2^{\log n} \vmax\}$ of different estimations of $\opt$ expressed as a power of $2$ times $\vmax$. This period aims to compute a threshold $\hat{t}$ within a constant fraction of $\opt$. Our application of binary search requires $\Theta(\log\log n)$ \emph{phases}, each consisting of $O(\log\log n)$ negatively-dependent \emph{rounds} where we apply TestThreshold (Appendix~
\ref{negdep}). Hence, we can prove that using $O((\log\log n)^2)$ applications of TestThreshold to the next constant fraction of agents, we end up with a threshold $\hat{t} \geq \opt/8$ with constant probability (Lemma~\ref{lemma:dist}). 

\item[Exploitation:]
Applying TestThreshold to the last constant fraction of agents, where every time the test succeeds $\hat{t}$ is doubled, and every time the test fails $\hat{t}$ is halved, we collect an expected total value that is within a constant fraction of $\opt$ (Lemma~\ref{lemma:compratio}). 
\end{description}

The three-period approach above comprises a constant-competitive posted-price mechanism for online budget-feasible procurement, but only under a \emph{large market assumption}. Specifically, in order to show a constant competitive ratio, we need to assume that $\opt / \vmax = \Omega((\log\log n)^2)$, i.e., that the optimal solution consists of sufficiently many agents.
Such large market is necessary in order to guarantee that each of the $O((\log\log n)^2)$ applications of TestThreshold considers a random subset of agents of size $\Omega( \vmax n / \opt)$. This is necessary, so that we can prove that the entire period of Binary Search succeeds with constant probability. 

We can deal with the case where $\opt / \vmax = o((\log\log n)^2)$ by applying Dynkin's secretary algorithm \cite{Dynkin63} with constant probability, which results in competitive ratio of $O((\log\log n)^2)$ (which can be improved to $O(\log\log n)$ with a careful analysis). 

To remove the large market assumption and obtain a constant-competitive mechanism, we assume that $\opt \geq \beta \vmax$, where $\beta$ is a sufficiently large constant. We define a power tower sequence as $\psi_1 = \beta$ and $\psi_{j+1} = 2^{\psi_j}\cdot \psi_j$ and apply TestThreshold with thresholds $\psi_j \vmax$ in order to determine a pair of consecutive intervals $[\psi_j \vmax, \psi_{j+1} \vmax)$ and $[\psi_{j+1} \vmax, \psi_{j+2}\vmax)$ such that $\opt$ is included in their union. We show how to compute such a pair of intervals with constant probability by $O(\log^\ast n)$ applications of TestThreshold to a constant fraction of the agent sequence in total (Lemma~\ref{lemma:Fit}). 
Selecting one of these intervals $[\psi_j \vmax, \psi_{j+1} \vmax)$ and $[\psi_{j+1} \vmax, \psi_{j+2}\vmax)$ at random and applying Binary Search and Exploitation, as described above, to the chosen interval (instead of the interval $[\beta \vmax, n \vmax)]$ initially considered) satisfies the required large market property and results in a constant-competitive posted-price mechanism, are required by 
Theorem~\ref{thm:main_informal}. 

\subsection{Related Work}
\label{sec:related}

From an algorithmic viewpoint, the greedy algorithm of Nemhauser et al. \cite{NemhauserWF78} gives an approximation ratio of $e/(e-1)$ for the algorithmic problem of maximizing a monotone submodular function subject to a knapsack constraint \cite{Sviri04}, and this approximation guarantee is best possible in polynomial-time  \cite{KhullerMN99}, under standard complexity assumptions. More recently, Badanidiyuru et al. \cite{BadaDO19} presented a $(9/8+\eps)$-approximation with  polynomially many demand queries.

Singer \cite{Singer10,Singer13} was the first to study the approximability of budget-feasible procurement auctions by truthful mechanisms and presented a randomized $O(1)$-approximation for monotone submodular valuations. Subsequently, Chen et al. \cite{ChenGL2011} significantly improved the approximation ratio to $7.91$ (resp. $3$) for randomized and $8.34$ (resp. $2+\sqrt{2}$) for deterministic mechanisms when the buyer's valuation is monotone submodular (resp. additive), using a greedy (resp. knapsack) based strategy. Chen et al. also proved an unconditional lower bound of $1+\sqrt{2}$ (resp. $2$) on the approximability of budget-feasible auctions with additive values by deterministic truthful (resp. randomized universally truthful) mechanisms. For additive values, Gravin et al. \cite{GravinJLZ2020} matched the best possible approximation ratio of $2$ for randomized mechanisms and presented a $3$-approximate deterministic mechanism. Anari et al. \cite{AnariGN2014} gave best possible $e/(e-1)$-approximation mechanisms for large markets. 

For monotone submodular valuations, Jalaly and Tardos \cite{JalalyT2021} presented a randomized polynomial-time $5$-approximation mechanism, thus improving on \cite{ChenGL2011}, 
and a deterministic (resp. randomized) $4.56$ (resp. $4$) approximation mechanism, which assumes access to an exact algorithm for value maximization.
The approximation ratio for large markets was improved to $2$ for deterministic (possibly exponential-time) mechanisms and to $3$ for randomized polynomial-time mechanisms by Anari et al. \cite{AnariGN2014}. Recently, Balkanksi et al. \cite{BalkanskiGGST2022} presented a deterministic polynomial-time clock auction that is $4.75$-approximate for monotone submodular valuations. The best known approximation ratio for monotone submodular valuations is $4.45$ for deterministic and $4.3$ for randomized mechanisms, achieved by the polynomial-time clock auction of Han et al. \cite{HanWHC23}. 

Polynomial-time constant-factor approximations are also known for non-monotone submodular valuations \cite{AmanatidisKS22,HuangHC023}, where the best known ratio is $64$ for deterministic  \cite{BalkanskiGGST2022} and $12$ for randomized mechanisms \cite{HanWHC23} (both achieved by clock auctions). Moreover, constant-factor approximation mechanisms (albeit non-polynomial time ones) are known for XOS valuations \cite{BeiCGL2012,AmanatidisBM17}, while for subadditive valuations the best known approximation guarantee is $O(\frac{\log n}{\log\log n})$ due to Balkanski et al. \cite{BalkanskiGGST2022}, which is achieved by a clock auction, matches the randomized approximation of \cite{BeiCGL2012} and improves on the $O(\log^3 n)$ deterministic approximation of \cite{DobzinskiPS11}. We refer an interested reader to the survey of Liu et al. \cite{LCLW2024_survey} for a detailed discussion of previous work on budget-feasible procurement. 

The online version of budget-feasible auctions, where the agents arrive sequentially in random order and the decision about their acceptance in the solution is online and irrevocable, was introduced by Singer and Mittal \cite{SingerM13} and Badanidiyuru et al. \cite{Bada2012}. 
Amanatidis et al. \cite{AmanatidisKS22} presented $O(1)$-competitive randomized online universally truthful mechanisms for monotone and non-monotone submodular valuations in the bidding setting. Online budget-feasible procurement is closely related to the \emph{submodular knapsack secretary} problem \cite{BateniHZ13,KesselheimT17}, where budget feasibility is with respect to the agent costs (which are revealed to the algorithm upon each agent's arrival). Feldman et al. \cite{FeldmanNS11} presented a randomized $20e$-approximation for submodular knapsack secretaries.

The clock auctions of Balkanksi et al. \cite{BalkanskiGGST2022} and Han et al. \cite{HanWHC23} achieve small constant approximation guarantees without resorting to  bidding (i.e., the agents never report their costs to the mechanism), but
they are not online, because the agents receive multiple offers by the mechanism. To the best of our knowledge, the online $O(\log n)$-competitive mechanism of \cite[Section~4]{Bada2012} and the constant-competitive mechanisms in the Bayesian setting of 
\cite{BalkHart16} are the only known posted-price mechanisms for online budget-feasible procurement auctions with monotone submodular valuations. 

A quite standard approach to the design of efficient budget-feasible mechanisms (see e.g., \cite[Section~5]{Bada2012} and in 
\cite[Section~4]{AmanatidisKS22}) and of online algorithms for submodular knapsack secretaries (see e.g., \cite{FeldmanNS11}) is to first learn the costs of a random subset of agents. Then, based on these costs, the mechanism approximates the optimal solution of the resulting random instance and uses its value as a threshold to post linear prices to the remaining agents. We should highlight that this approach does not fit in the framework of online posted-price mechanisms, since it requires knowledge of the costs of a significant fraction of agents (which is obtained through bidding). 
\section{Model, Definitions and Preliminaries}
\label{sec:prelim}

%
A set function $f : 2^{\N} \to \reals_{\geq 0}$ is \emph{non-decreasing} (often referred to as \emph{monotone}) if for all $S \subseteq T \subseteq \N$, $f(S) \leq f(T)$. A function $f$ is \emph{submodular} if $f$ has non-increasing marginal values, i.e., for all $S \subseteq T \subseteq \N$ and all $i \not\in T$, $f(T \cup \{i\}) - f(T) \leq f(S \cup \{i\}) - f(S)$. We let $f_S(i) = f(S \cup \{ i \}) - f(S)$ denote the marginal value of $i$ with respect to $S$. In this work, we consider valuation functions $f : 2^{\N} \to \reals_{\geq 0}$ that are normalized, i.e. have $f(\emptyset) = 0$, monotone and submodular. 

A valuation function $f$ is accessed by \emph{value queries}, which for any given set $S \subseteq \N$, return $f(S)$. We let $\vmax = \max_{i \in \N} \{ f(\{ i \})\}$ denote the maximum value of any agent. In the following, all logarithms are base-$2$ unless stated otherwise. 

\vskip2pt\textit{Budget-Feasible Procurement Auctions and Mechanisms.}
We consider procurement auctions with a set $\N = \{1, \ldots, n\}$ of agents. Each agent $i$ has a private cost $c_i \in \reals_{\geq 0}$ for participating in the solution. The buyer has a budget $B \in \reals_{\geq 0}$ and aims to maximize a monotone submodular valuation $f : 2^{\N} \to \reals_{\geq 0}$, subject to the constraint that the sum of payments to the agents must be at most $B$. 

A (deterministic) \emph{direct revelation mechanism} $\mathcal{M} = (A, p)$ is a pair consisting of an allocation function and a payment function. The allocation function $A: \mathbb{R}_{\geq 0}^{n} \to 2^{\N}$ maps a bid vector $\vec{b} = (b_1, \ldots, b_n)$ submitted by the agents to a solution $A(\vec{b}) = S \subseteq \N$. The payment function $p: \mathbb{R}_{\geq 0}^{n} \to \mathbb{R}_{\geq 0}^{n}$ computes the payments $p(\vec{b})$ allocated to the agents for the solution $A(\vec{b})$. A randomized mechanism is a probability distribution over deterministic mechanisms. 

Given a mechanism $\mathcal{M} = (A, p)$, each agent $i$ aims to maximize their \emph{utility} $u_i(\vec{b})$ through their bid. It is $u_i(\vec{b}) = p_i(\vec{b}) - c_i$, if $i$ is included in the solution $A(\vec{b})$, and $u_i(\vec{b}) =  p_i(\vec{b})$, otherwise. 

A mechanism is \emph{budget-feasible} if for every bid vector $\vec{b}$, $\sum_{i \in \N} p_i(\vec{b}) \leq B$. A mechanism is \emph{individually rational}, if for every bid vector $\vec{b}$ and all agents $i \in \N$, $u_i(\vec{b}) \geq 0$. Hence,  $p_i(\vec{b}) \geq 0$ for all agents $i \in \N$, and $p_i(\vec{b}) \geq c_i$ for all agents $i$ included in the solution $A(\vec{b})$. Due to individual rationality and the budget constraint, we always let $p_i(\vec{b}) = 0$ for all agents $i \not\in A(\vec{b})$.

A mechanism is \emph{truthful}, if reporting their cost $c_i$ is a dominant strategy for every agent $i$, i.e., for all $i\in \N$, all $b \in \reals_{\geq 0}$ and every bidding vector $\vec{b}_{-i}$ of the other agents, $u_i(\vec{b}_{-i}, c_i) \geq u_i(\vec{b}_{-i}, b)$. 

For randomized mechanisms, that we consider in this work, we require that the mechanism is budget feasible and individually rational \emph{with certainty} and \emph{universally truthful}, i.e., the mechanism is a probability distribution over deterministic truthful mechanisms. 

\vskip2pt\textit{Online Budget-Feasible Procurement Auctions.}
In the online setting, agents arrive sequentially and the decision about whether the present agent $i$ is included in the solution and $i$'s payment is \emph{irrevocable} and is taken upon $i$'s arrival, using only information about the agents having arrived before $i$ and without any knowledge about future agent arrivals whatsoever. 

In this work, we consider an online procurement auctions with \emph{secretary arrivals}, where the set $\N$ of agents is chosen adversarially and the agents arrive in random order (formally, the agents' arrival order is drawn uniformly at random from all permutations of $\N$). 

\vskip2pt\textit{Posted-Price Mechanisms.}
We restrict our attention to (online) sequential \emph{posted-price} mechanisms. A posted-price mechanism decides on the price $p_i$ of each agent $i$ upon $i$'s arrival and makes $i$ a take-it-or-leave-it offer. Agent $i$ accepts or rejects the offer, depending on whether $p_i \geq c_i$ or not, without revealing any other information about their private cost $c_i$ whatsoever. If agent $i$ accepts the offer, $i$ is included in the mechanism's solution and the budget currently available decreases by $p_i$. Otherwise, we let $i$'s price be $p_i = 0$ and agent $i$ is discarded. 

Randomized posted-price mechanisms are individually rational and universally (and obviously) truthful, because among the two options available (accept or reject the mechanism's offer), the agents select the option that maximizes their utility $\max\{ 0, c_i - p_i \}$. Moreover, posted-price mechanisms are budget feasible, assuming that each price offered doed not exceed the budget currently available. 

We further restrict our attention to \emph{linear-price} mechanisms, which maintain a threshold $\hat{t}$ and offer a price $p_i = f_S(i) B / \hat{t}$ to every agent $i$ (assuming that $p_i$ does not exceed the budget currently available). The threshold $\hat{t}$ is an estimation of the optimal value and is maintained based on the marginal values of previous agents and their accept / reject decisions. 

We say that a linear-price mechanism is \emph{adaptive}, if it may use  a different threshold for determining the linear price offered to each agent, and \emph{non-adaptive}, if the mechanism decides on a fixed threshold before its first offer to an agent and uses it for all remaining agents.

\vskip2pt\textit{Competitive Ratio.}
We evaluate the performance of our online mechanisms using the competitive ratio \cite{BoroYan1998}. We compare the performance our our mechanism against the offline optimum, denoted $\opt$, which is the value of an optimal solution $S^\ast \subseteq \N$ that maximized $f(S^\ast)$ subject to $\sum_{i \in S^\ast} c_i \leq B$. We note that $\opt = f(S^\ast)$ is defined with respect to a computationally unrestricted algorithm with full advance knowledge of the set $\N$ of agents and their costs $c_1, \ldots, c_n$. 

In our setting, the competitive ratio $cr$ of an algorithm is the worst-case ratio, over all possible sets of agents $\N$, of $\opt$ divided by the algorithm's expected cost $\mathbb{E}[\alg]$ (where the expectation is taken over both the agents' arrival order and the algorithm's random choices).

\section{A Logarithmic Lower Bound for Non-Adaptive Linear-Price Mechanisms}
\label{sec:lower_bound}

To motivate the use of non-adaptive linear-price mechanisms, we first prove the following: 

\begin{theorem}\label{thm:non-adaptive-lower-bound} Every randomized non-adaptive linear-price mechanism is at least $\frac{\log n}{2}$-competitive.
\end{theorem}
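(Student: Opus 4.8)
The plan is to exhibit a single family of instances on which any randomized non-adaptive linear-price mechanism must perform poorly, and to leverage Yao's principle: it suffices to construct a distribution over inputs (agent cost profiles) such that every \emph{deterministic} non-adaptive linear-price mechanism — equivalently, every fixed choice of threshold $\hat{t}$ together with a fixed tie-breaking rule — is $\Omega(\log n)$-competitive in expectation over that distribution. Since a non-adaptive mechanism commits to a threshold $\hat{t}$ before making any offer and without any knowledge of the agent costs, its price to an agent $i$ is entirely determined by $\hat t$, the budget $B$, and the marginal value $f_S(i)$ via $p_i = f_S(i)B/\hat t$. The central point I want to exploit is that a fixed $\hat t$ can only be within a constant factor of $\opt$ on a narrow band of instances, so I will design roughly $\log n$ scale-separated instances, each forcing a different value of $\opt$, so that no single threshold can serve more than one of them well.

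Concretely, I would work with an additive (hence monotone submodular) valuation, set $B=1$, and use $n$ agents partitioned by scale. For each $k \in \{0,1,\dots,\log n\}$ I define an instance $I_k$ in which the profitable structure lives at scale $2^{-k}$: for instance, $2^k$ agents each of value $v=1$ and cost $c = 2^{-k}$, so that $\opt(I_k)$ is achieved by buying all $2^k$ of them within budget $1$, giving $\opt(I_k) = 2^k$ (up to normalization I will fix in the details). The threshold that makes the linear price $p_i = f_S(i)B/\hat t = 1/\hat t$ affordable against cost $2^{-k}$ — i.e. profitable to accept — is tightly pinned to the scale $2^{k}$: if $\hat t$ is too small the offered prices overshoot and blow the budget after buying only $O(1)$ agents, collecting value $O(1)$ against $\opt = 2^k$; if $\hat t$ is too large the offered price $1/\hat t$ falls below the cost $2^{-k}$, every offer is rejected, and the mechanism collects $0$. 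In either mismatched regime the value obtained is at most a constant, while $\opt(I_k)$ grows with $k$, and I will tune the scales so that only a single value of $\hat t$ (within a constant window on the log scale) can be competitive for $I_k$.

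I would then take the input distribution to be uniform over $\{I_0, I_1, \dots, I_{\log n}\}$ (appropriately normalized so that all instances have comparable "target" structure and the counting works out to a clean $\log n / 2$). A deterministic non-adaptive mechanism picks one fixed $\hat t$; by the scale-separation argument its expected performance is good on at most $O(1)$ of the $\log n$ instances and essentially trivial on the rest, so its expected competitive ratio over the uniform distribution is $\Omega(\log n)$. Passing back through Yao's principle gives the bound for randomized mechanisms. Getting the constant down to exactly $\frac{1}{2}$ in the statement will require care in the normalization and in arguing that a mismatched threshold yields value at most $1$ (not merely $O(1)$); in particular I must handle the budget accounting precisely — one affordable offer at a too-small threshold already spends a large share of $B$, so at most a constant number of agents can be bought before the budget is exhausted.

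The step I expect to be the main obstacle is making the "mismatch is fatal" claim airtight across \emph{all} fixed tie-breaking and rounding behaviors, and doing so with the marginal-value dynamics of a general linear-price mechanism rather than a hand-picked one: I need the instances engineered so that, simultaneously for every $\hat t$, the budget-feasibility constraint $\sum p_i \le B$ forces early exhaustion under any too-small threshold and the affordability constraint $p_i \ge c_i$ fails under any too-large threshold, with the crossover windows for consecutive scales $2^k$ and $2^{k+1}$ disjoint so that a single $\hat t$ cannot straddle two instances. Choosing the per-instance costs and multiplicities so these windows tile the range $[\,\vmax,\, n\vmax\,]$ of possible $\opt$ values without overlap, while keeping the valuation genuinely submodular and the total agent count exactly $n$, is where the real bookkeeping lies; once that geometry is set up, the Yao averaging is routine.
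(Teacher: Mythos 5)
Your high-level architecture (scale-separated instances $I_k$ with identical unit-value agents at cost $\approx B/2^{k}$, a mismatch analysis of any fixed linear price, then Yao's principle) matches the paper's. But the specific input distribution you propose --- \emph{uniform} over $\{I_0,\dots,I_{\log n}\}$ --- does not give a super-constant lower bound, and no normalization rescues it. The problem is the asymmetry you yourself observe: a threshold that is too large collects $0$, but a threshold that is too small does not collect $O(1)$ --- it collects value $\approx \hat{t}$ (budget $B$ divided by the per-agent price $B/\hat{t}$). Hence the deterministic mechanism that fixes the \emph{largest} threshold $\hat{t}=n$ collects $n$ on $I_{\log n}$ and $0$ elsewhere, so under the uniform distribution $\Exp[\alg] \geq n/(\log n + 1)$, while $\Exp[\opt] = \frac{1}{\log n+1}\sum_{k} 2^k < \frac{2n}{\log n+1}$: the ratio of expectations is at most $2$. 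Since Yao's principle, for the competitive ratio $\sup_I \opt(I)/\Exp[\alg(I)]$, requires precisely this ratio of expectations to be large, you only get a constant. Your hedge of ``normalizing so all instances have comparable target structure'' fails in the opposite direction: if you rescale so that all $\opt(I_k)$ are equal, the single smallest acceptance threshold is accepted in every instance and collects full value everywhere, giving ratio $1$. The sentence ``good on at most $O(1)$ of the $\log n$ instances, hence ratio $\Omega(\log n)$'' is exactly where the argument breaks: under uniform weighting, being good on the one instance whose optimum is exponentially largest already captures a constant fraction of $\Exp[\opt]$.

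What the paper does, and what your construction is missing, is to choose a \emph{geometric} distribution that exactly neutralizes every fixed threshold: $I_i$ occurs with probability $p_i = 2^{-(i+1)}$ for $i < \log n$, and $I_{\log n}$ with probability $1/n$. Then a fixed price $B/2^i$ earns $2^i$ on each $I_j$ with $j \geq i$ and $0$ otherwise, so $\Exp[\alg] = 2^i\left(\sum_{j\geq i} 2^{-(j+1)} - \tfrac{1}{n}\right) + \tfrac{2^i}{n} = 1$ for \emph{every} choice of $i$ --- all thresholds are equalized at expected value $1$ --- while $\Exp[\opt] = \sum_{j<\log n} 2^{-(j+1)}2^j + 1 = \frac{\log n}{2} + 1$. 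This weighting (heavier mass on instances with smaller optima) is the essential quantitative idea of the proof, not a bookkeeping detail; your plan as stated stalls exactly at the point where you promise to ``make the counting work out.''
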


\begin{proof}[Proofsketch.]
We consider a family of instances $\mathcal{I} = \{ I_0, I_1, ..., I_{\log n} \}$. Every instance consists of $n$ value-private cost pairs $(v,c)$, where for instance $I_i$, $v = 1$ and $c = B/2^i$, for every $i \in \{0,1,...,\log n\}$. 

We construct a probability distribution $\mathcal{D}_{\mathcal{I}}$ over the family $\mathcal{I}$ of instances. For every $i = 0, 1, 2, \ldots, \log(n)-1$, we let $I_{i}$ occur with probability $p_{i} = 1/2^{i+1}$ in $\mathcal{D}_{\mathcal{I}}$. We let $I_{\log n}$ occur with probability $p_{\log n} = 1/n$ in $\mathcal{D}_{\mathcal{I}}$, so that the sum of $p_i$s is equal to $1$. To conclude the proof, we show that the expected value of any deterministic non-adaptive linear-price mechanism on $\mathcal{D}_{\mathcal{I}}$ is at most $1$, while the expected value of the optimal solution on $\mathcal{D}_{\mathcal{I}}$ is $\frac{\log n}{2}+1$. Then, the theorem follows from Yao's principle 
(\cite[Chapter~8.4]{BoroYan1998} and \cite{Yao1977}). We provide a detailed proof in Section~\ref{sec:app:lower_bound}.
\end{proof}

The key idea behind Theorem~\ref{thm:non-adaptive-lower-bound} is that a non-adaptive linear-price mechanism has to decide on a fixed threshold $\hat{t}$ before getting any information about the agents' private costs (which for posted-price mechanisms is obtained only by offering prices to the agents and collecting their accept / reject responses). Hence, essentially the best approach of non-adaptive linear-price mechanisms is to learn $v_{\max}$ and select a threshold $\hat{t}$ from $\{ 1, 2, 4, \ldots, 2^{\log n}\}$ uniformly at random (which is what the posted-price mechanism of \cite[Section~4]{Bada2012} actually does). Hence, to improve on the logarithmic competitive ratio of \cite[Section~4]{Bada2012}, we next present an adaptive linear-price mechanism. 
\def\Brem{B_{\text{rem}}}
\def\Nrem{\mathcal{N}_{\text{rem}}}
\def\tinit{t_{\text{init}}}
\def\tmin{t_{\min}}
\def\tmax{t_{\max}}
\def\val{\text{val}}
\def\Event{\mathcal{E}}
\def\Bin{\mathcal{B}}


\begin{figure}[t]
\centering\includegraphics[width=0.6\textwidth]{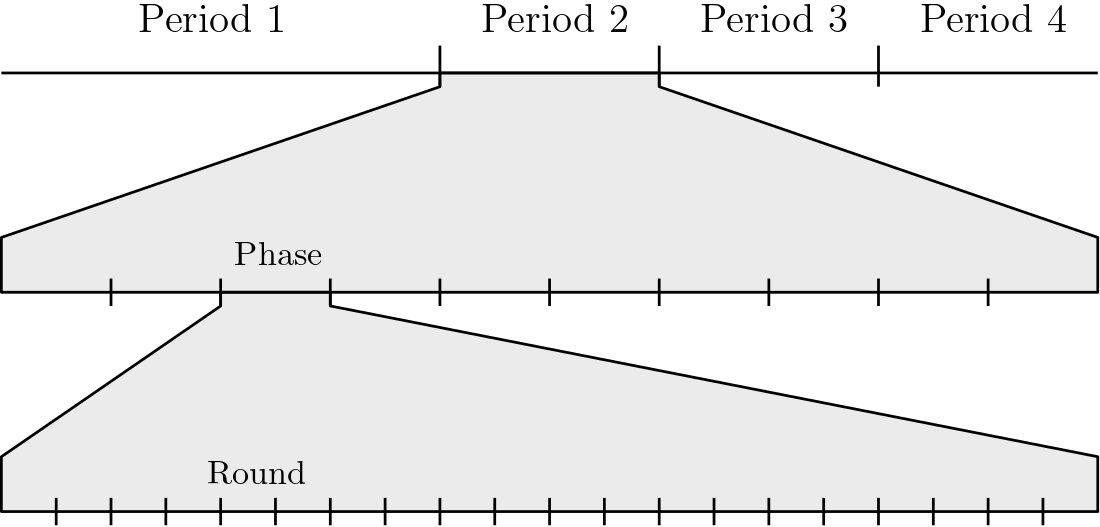}
    \caption{The input sequence is partitioned into four periods. Periods 2 to 4 are further partitioned into phases, with each phase consisting of multiple rounds.}
    \label{fig:phases}
\end{figure}

\begin{algorithm}[t]
\caption{LM-Mechanism}\label{alg:LMMECH}
\begin{algorithmic}[1]
\STATE \textbf{Input}: Set $\N$ of agents arriving in random order, budget $B$
\STATE \textbf{Initialization:} Solution $S \gets \emptyset$ and $\val \gets 0$, budget available $\Brem \gets B$, agents available $\Nrem \gets \N$
\STATE $\vmax, \Nrem \gets \text{LearningMaxValue}(\Nrem)$
\STATE $\tmin, \tmax, (S, \val, \Nrem, \Brem) \gets \text{PowerTowerSearch}((S, \val, \Nrem, \Brem), B, [\vmax, n\cdot\vmax])$
\STATE $\tinit, (S, \val, \Nrem, \Brem) \gets \text{BinarySearch}((S, \val, \Nrem, \Brem), B, [\tmin, \tmax])$
\STATE $(S, \val) \gets \text{Exploitation}((S, \val, \Nrem, \Brem), B, \tinit, [\tmin, \tmax])$
\RETURN $(S, \val)$
\end{algorithmic}
\end{algorithm}


\section{Posted-Price Mechanism: Building Blocks and Outline of the Analysis}
\label{sec:mech}

We next present the main steps of our posted-price \hyperref[alg:LMMECH]{LM-Mechanism} and the key ideas of its analysis. \hyperref[alg:LMMECH]{LM-Mechanism} assumes that $\opt / \vmax$ is sufficient large (LM stands for \emph{Large Market}). E.g., it suffices that $\opt / \vmax \geq 10^7$. In Section~\ref{sec:MainTheorem}, we use Dynkin's algorithm \cite{Dynkin63} and an adaptation of the posted-price mechanism in \cite[Section~4]{Bada2012} to deal with the case where $\opt / \vmax < 10^7$. 

As explained in Section~\ref{sec:approach}, \hyperref[alg:LMMECH]{LM-Mechanism} consists of $4$ periods. Each of the first three periods ``consumes'' a constant fraction of the agent sequence (and at most a constant fraction of the budget $B$) and provides the next period with a more refined estimation of $\opt$. More specifically, \hyperref[alg:LMMECH]{LM-Mechanism} proceeds along the following periods: 
\begin{description}
\item [Period 1 -- LearningMaxValue:] The first period calls Mechanism~\ref{alg:Vmax}, usually referred to as LearningMaxValue, which 
examines about $1/3$ of the agent sequence and returns the maximum value of an agent in this part. We let $\Event_1$ denote the event that the value returned by \hyperref[alg:Vmax]{LearningMaxValue} is indeed $\vmax$, which occurs with probability $1/3$. Assuming $\Event_1$, we have a first rough estimation of $\opt$, since $\vmax \leq \opt \leq n\cdot\vmax$. 

\item [Period 2 -- PowerTowerSearch:] The second period calls Mechanism~\ref{alg:PowerTower}, usually referred to as PowerTowerSearch,
which decides on two consecutive intervals $A = [a_{\min}, a_{\max}]$ and $B = [b_{\min}, b_{\max}]$ such that $a_{\max} = 2^{\frac{a_{\min}}{\vmax}}a_{\min}$ and $b_{\max} = 2^{\frac{b_{\min}}{\vmax}}b_{\min}$. PowerTowerSearch then selects one of the two intervals uniformly at random and returns its endpoints as $t_{\min}$ and $t_{\max}$. We let $\Event_2$ denote the event that $\opt \in A \cup B$ and $\Event_5$ denote the event that conditional on $\Event_2$, the right interval of $A$ and $B$ is chosen. 
In Section~\ref{sec:SecondPeriod}, we discuss the implementation of \hyperref[alg:PowerTower]{PowerTowerSearch} and prove that $\Event_2$ occurs with probability at least $0.9$. 

\item [Period 3 -- BinarySearch:] The third period calls Mechanism~\ref{alg:binarysearch}, usually referred to as BinarySearch, which performs binary search in the interval $[\tmin, \tmax]$ and returns $\tinit$. Assuming a proper execution of BinarySearch, which is implied by event $\Event_3$ formally defined later on, we have that $\opt / 4 \leq \tinit \leq O(\log\log(\tmax/\tmin))\opt$. The analysis of \hyperref[alg:binarysearch]{BinarySearch} is presented in Section~\ref{sec:ThirdPeriod}.

\item [Period 4 -- Exploitation:] The fourth period calls Mechanism~\ref{alg:exploitation}, usually referred to as Exploitation, which performs an adaptive search, starting with $\tinit$, on possible values of the mechanism's linear-price threshold that are powers of $2$ (see also \cite[Section~3]{Bada2012}). Assuming its proper execution, which is also implied by event $\Event_3$, Exploitation collects $\Omega(\opt)$ value from its part of the agent sequence. The analysis of \hyperref[alg:exploitation]{Exploitation} is presented in Section~\ref{sec:FourthPeriod}.
\end{description}

Different periods, as they run, update \hyperref[alg:LMMECH]{LM-Mechanism}'s state with consists of its current solution $S$, the total value collected so far $\val$, the budget available $\Brem$ and the sequence of agents not yet considered $\Nrem$. 
Each of the last three periods examine a (relatively small) constant fraction of the agent sequence. We let $\Event_4$ denote the event that the total number of agents examined by the $4$ periods of \hyperref[alg:LMMECH]{LM-Mechanism} is at most $n$ (which along with events $\Event_2$ and $\Event_3$ also imply that the total budget expended is at most $B$). The correctness of \hyperref[alg:LMMECH]{LM-Mechanism} is implied by event $\Event = \Event_1 \cap \Event_2 \cap \Event_3 \cap \Event_4 \cap \Event_5$. The correctness probability of \hyperref[alg:LMMECH]{LM-Mechanism} is analyzed in Section~\ref{sec:successProb}.


The two key technical ingredients in the analysis of \hyperref[alg:LMMECH]{LM-Mechanism} is a partitioning of the last $3$ periods into phases and rounds (as depicted Fig.~\ref{fig:phases}) and \hyperref[alg:TestTHRESHOLD]{TestEstimate}, which is applied to each round and tests if the current threshold is too large or too small. 
%
%
We next discuss their main properties. 


\subsection{Partitioning into Rounds and Phases}
\label{sec:rounds}

To partition the agent sequence into rounds, we employ a standard technique \cite{IndependentTrick}, where round lengths are drawn from a binomial distribution to ensure independence among agents and that the resulting prefix of the sequence is a \emph{random subset} of $\N$, where each agent participates with a given probability (often referred to as the round's \emph{participation probability} and denoted by $q$). 

Formally, for any suffix $\N' \subseteq \N$ of the agent sequence and any $a \in (0, 1)$, an \emph{$a$-round} (or simply a \emph{round}, if $a$ is clear from the context) is a prefix of $\N'$ with length drawn from a binomial distribution $\Bin(|\N'|, a)$ with parameters $|\N'|$ and $a$. We usually refer to $a$ as the the \emph{length parameter} (or the \emph{length probability}) of the round. 
Along with the random arrival order of the agents, selecting the length of a round as a binomially distributed random variable ensures that every two agents in $\N'$ are included in the given round independently with probability $a$ (which allows us to analyze the properties of each round using standard concentration inequalities). 

Given the agent sequence $\N$, we define a sequence of $\kappa \geq 1$ rounds with length parameters $a_1, \ldots, a_\kappa$ as follows: we select the length $x_1$ of the $1$st round from the binomial distribution $\Bin(n, a_1)$, let $n_1 = n-x_1$ be the number of agents remaining, select the length $x_2$ of the $2$nd round from $\Bin(n_1, a_2)$, $\ldots$, let $n_i= n_{i-1}-x_i$ be the number of agents remaining after the $i$-th round, select the length $x_{i+1}$ of the $(i+1)$-th round from $\Bin(n_i, a_{i+1})$, $\ldots$, select the length $x_{\kappa}$ of the $\kappa$-th round from $\Bin(n_{\kappa-1}, a_{\kappa})$. The following analyzes the distribution of the resulting random subsets of agents: 

\begin{lemma}\label{lemma:RoundPro}
For any $\kappa \geq 1$, we consider a sequence of round lengths distributed binomially as described above with length parameters $a_1, \ldots, a_\kappa$. Then, each agent $b \in \N$ is included in round $\kappa$ independently of the other agents with participation probability
\begin{equation}\label{eq:participation}
                q_{\kappa} = a_{\kappa}\prod_{i=1}^{\kappa-1}(1-a_{i})
\end{equation}
\end{lemma}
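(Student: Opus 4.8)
The plan is to prove the claim by reducing the binomial round-length construction to a simpler, fully independent per-agent process, which is exactly the equivalence underlying the ``independent trick'' of \cite{IndependentTrick}. Concretely, I want to show that the joint law of the first $\kappa$ rounds $(R_1, \ldots, R_\kappa)$ coincides with the law of the following independent process: each agent $b \in \N$, independently of all others, runs a sequence of coin flips with heads probabilities $a_1, a_2, \ldots$ and is assigned to the first round whose coin comes up heads. Under this process $b$ lands in round $\kappa$ with probability $a_\kappa \prod_{i=1}^{\kappa-1}(1-a_i)$ and membership is independent across agents, which is precisely the stated conclusion.

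First I would establish the base fact. If $\sigma$ is a uniformly random permutation of a set $M$ with $|M| = m$ and $X \sim \Bin(m, a)$ is drawn independently, then the prefix $R = \{\sigma(1), \ldots, \sigma(X)\}$ has exactly the distribution of the subset of $M$ obtained by including each element independently with probability $a$. This is a one-line computation: for fixed $S \subseteq M$ with $|S| = k$, conditioning on $X = k$ and using that the first $k$ elements of a uniform permutation form a uniform $k$-subset gives $\Pr[R = S] = \Pr[X = k]/\binom{m}{k} = a^{k}(1-a)^{m-k}$, which is exactly the independent-inclusion probability.

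Then I would argue by induction on $\kappa$, with the inductive invariant being the full distributional equivalence above for the first $\kappa$ rounds (not merely the marginal membership probability). For the inductive step, condition on the realized ordered prefix occupied by rounds $1, \ldots, \kappa-1$. The key structural fact is the ``uniform suffix'' property of random permutations: conditioned on this prefix, the surviving agents $\N_{\kappa-1} = \N \setminus (R_1 \cup \cdots \cup R_{\kappa-1})$ appear in uniformly random relative order. Since round $\kappa$ takes the first $x_\kappa \sim \Bin(|\N_{\kappa-1}|, a_\kappa)$ of them, the base fact applied to $M = \N_{\kappa-1}$ and $a = a_\kappa$ shows that, conditioned on the survivor set, each surviving agent enters round $\kappa$ independently with probability $a_\kappa$. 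Combined with the inductive hypothesis that each agent survives rounds $1, \ldots, \kappa-1$ independently with probability $\prod_{i=1}^{\kappa-1}(1-a_i)$, this yields $q_\kappa = a_\kappa \prod_{i=1}^{\kappa-1}(1-a_i)$ together with independence across agents.

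The main obstacle is the conditioning step, and in particular getting cross-agent independence rather than just the marginal probability. Independence is not immediate from the marginal figures alone; it must be carried through the induction, and it works only because I condition on the entire ordered history of rounds $1$ through $\kappa-1$ (rather than on the unordered survivor set or the round lengths in isolation). That level of conditioning is exactly what simultaneously licenses the uniform-suffix property and makes the base fact applicable at each step, so that the independent structure is preserved round by round. I would therefore phrase the induction in terms of the joint-distribution invariant, from which both the value $q_\kappa$ in \eqref{eq:participation} and the claimed independence follow at once.
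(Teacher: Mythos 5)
Your proof is correct, but it takes a genuinely different route from the paper's. The paper proves the lemma by direct computation: the marginal probability $q_\kappa$ is obtained by pushing the expectation of the binomial round lengths through nested sums over all possible length realizations, and independence is then verified by computing $\Prob{\{b_1,b_2\}\in R_\kappa}$ with binomial-coefficient identities and matching it to $q_\kappa^2$ --- that is, the paper checks independence only for \emph{pairs} of agents. Your argument instead establishes an exact distributional identity: a binomial-length prefix of a uniformly ordered set is precisely an independent-inclusion random subset, and an induction carrying this joint-law invariant (licensed by the uniform-suffix property of random permutations) shows that the whole round structure is equivalent to a fully independent per-agent coin-flipping process. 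This buys strictly more than the paper's computation as written: you obtain mutual independence across all agents (indeed the joint law of all rounds simultaneously), which is what the lemma statement asserts and what the Chernoff-bound application in the proof of Lemma~\ref{lemma:roundProb} actually requires, whereas pairwise independence alone would not suffice there; the paper's manipulations could be extended from pairs to arbitrary subsets of agents, but only at the cost of heavier multinomial bookkeeping. What the paper's approach buys in exchange is that it is purely computational and self-contained, using nothing about permutations beyond the uniformity of each agent's position. Your handling of the delicate point is also sound: conditioning on the entire ordered history (rather than on the unordered survivor set or the lengths alone) is exactly what makes the uniform-suffix property available, and since the resulting conditional law of $R_\kappa$ depends only on the survivor set $\N_{\kappa-1}$, the independence statement survives coarsening to the unordered rounds, so the induction closes.
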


\begin{proof}[Proofsketch.]
The proof consists of two parts. First we observe that due to secretary agent arrivals, the probability that an agent belongs to any fixed interval of length $n_i$ of an agent sequence of length $n$ is $n_i / n$. Applying standard properties of the binomial distribution repeatedly, we derive the desired participation probability. Next, we use secretary agent arrivals and the binomial distribution of round lengths and show that the probability that any two agents belong to round $\kappa$ is $q_\kappa^2$. 
%
%
A detailed proof is given in Appendix~\ref{A1}. 
\end{proof}

In our mechanism, we draw multiple rounds with significantly varying lengths, which complicates the calculation of each round's participation probability using \eqref{eq:participation}. The following lower bound on the participation probability allows us to argue about the properties of each round using its length parameter $a$. The proof requires the precise definition of the length parameters of all rounds considered by our mechanism and is deferred to Appendix~\ref{sec:PART}. For the proof, we use that the length parameter of every round is $o(1)$ and that their sum over all rounds is significantly less than $1$ (which also allows us to show that event $\Event_4$ occurs with probability at least $0.97$, see Lemma~\ref{lemma:Fit} in Appendix~\ref{A3}).  

\begin{lemma}\label{lemma:PART}
The participation probability of any round with length parameter $a$ is at least $6a/(10e)$. 
\end{lemma}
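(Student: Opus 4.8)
The plan is to reduce the claim to a single lower bound on a product of the form $\prod(1-a_i)$ and then apply a standard logarithmic inequality together with the quantitative facts about the length parameters recorded just before the statement. Consider a round that is the $\kappa$-th round of the overall sequence and has length parameter $a = a_\kappa$. By Lemma~\ref{lemma:RoundPro} its participation probability is
$$q_\kappa = a_\kappa \prod_{i=1}^{\kappa-1}(1-a_i) = a\prod_{i=1}^{\kappa-1}(1-a_i).$$
Since every factor $1-a_i$ lies in $(0,1)$, extending the product from $i \le \kappa-1$ to \emph{all} rounds only multiplies by additional factors bounded by $1$ and hence can only decrease it; therefore $\prod_{i=1}^{\kappa-1}(1-a_i) \ge \prod_{\text{all }i}(1-a_i)$, and it suffices to prove that the product over all rounds is at least $6/(10e)$. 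This would give $q_\kappa \ge 6a/(10e)$ uniformly over $\kappa$, independently of the round's position.

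First I would pass to logarithms and use the elementary inequality $\ln(1-x) \ge -x/(1-x)$, valid for all $x \in [0,1)$ (equivalently $1-x \ge e^{-x/(1-x)}$), to get
$$\prod_i (1-a_i) \ge \exp\Bigl(-\sum_i \frac{a_i}{1-a_i}\Bigr).$$
Next I would invoke the two structural facts stated before the lemma: each length parameter satisfies $a_i = o(1)$, so for $n$ large enough every factor $1/(1-a_i)$ is bounded by a constant arbitrarily close to $1$; and the sum $\sum_i a_i$ over all rounds is bounded by a constant strictly (and ``significantly'') below $1$. Combining these, $\sum_i a_i/(1-a_i) \le (1-o(1))^{-1}\sum_i a_i$ is bounded by a constant $C$, and the round lengths are calibrated (Appendix~\ref{sec:PART}) so that $C \le 1 + \ln(5/3)$. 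This yields $\prod_i(1-a_i) \ge e^{-C} \ge e^{-(1+\ln(5/3))} = \tfrac{3}{5}e^{-1} = 6/(10e)$, as required.

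The only genuine work lies in the last estimate, i.e. producing the explicit bound $\sum_i a_i/(1-a_i) \le 1 + \ln(5/3)$. This is exactly where the precise definition of the length parameters of all rounds used across \hyperref[alg:PowerTower]{PowerTowerSearch}, \hyperref[alg:binarysearch]{BinarySearch} and \hyperref[alg:exploitation]{Exploitation} enters: one must sum the length parameters of the $\Theta(\log\log n)$ phases of binary search, each contributing its $O(\log\log n)$ rounds, together with the $O(\log^\ast n)$ rounds of the power-tower search and the rounds of exploitation, and verify that the total — after the $1/(1-o(1))$ correction coming from the $a_i = o(1)$ hypothesis — stays below $1 + \ln(5/3)$. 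I expect this bookkeeping over the (geometrically controlled) length parameters, rather than any conceptual difficulty, to be the main obstacle; it is precisely what dictates the explicit constants chosen for the round lengths and, in turn, the constant $6/(10e)$ appearing in the statement.
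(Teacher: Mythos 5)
Your proposal has the right skeleton, and it is in fact the same reduction the paper uses: invoke Lemma~\ref{lemma:RoundPro}, extend the product $\prod(1-a_i)$ to all rounds (which only decreases it), and then show that this full product is at least $6/(10e) = 3/(5e)$. However, there is a genuine gap: the entire quantitative content of the lemma is the bound on that product, and you do not prove it. You assert that ``the round lengths are calibrated (Appendix~\ref{sec:PART}) so that $C \le 1 + \ln(5/3)$'' --- but Appendix~\ref{sec:PART} is precisely where this lemma is proved, so the citation is circular, and your closing paragraph concedes that this ``bookkeeping'' is still to be done. That bookkeeping \emph{is} the proof. Concretely, it requires three separate estimates that the paper carries out: (i) the learning period contributes a factor $1-1/3 = 2/3$; (ii) for Period~2 one needs $\prod_{i=2}^{T}(1-a_i)^{\gamma_i} \ge \prod_{i=2}^{T}(1-a_i\gamma_i) \ge \prod_{i=2}^{T}(1-i^{-10}) \ge 9/10$, which rests on the nontrivial estimate $a_i\gamma_i \le i^{-10}$ (Lemma~\ref{lemma:SeqBound}, itself proved via the power-tower growth of the $t_i$); and (iii) for Periods~3--4, the identity $2\ell a = 1/3$ (from $\ell = 1/(6a)$) gives $(1-a)^{2\ell} \ge 1/e$. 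Multiplying, $\tfrac{2}{3}\cdot\tfrac{9}{10}\cdot\tfrac{1}{e} = \tfrac{6}{10e}$. None of these three computations appears in your argument.

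There is also a smaller but real error in the step you do carry out: you claim every length parameter is $o(1)$, so that each factor $1/(1-a_i)$ is $1+o(1)$. This fails for the learning period, whose length parameter is $1/3$ (its factor in Lemma~\ref{lemma:RoundPro}'s product is $2/3$, and its term in your sum is $\tfrac{1/3}{2/3} = 1/2$, not $(1+o(1))\cdot\tfrac13$). The paper's preamble statement about $o(1)$ parameters refers to the rounds of Periods 2--4, not to the Period-1 draw of $\tau$, which nevertheless must be included in the product since it precedes every round. Your target inequality $\sum_i a_i/(1-a_i) \le 1+\ln(5/3) \approx 1.51$ does happen to be true (the sum is roughly $\tfrac12 + 0.001 + \tfrac13(1+o(1)) \approx 0.84$), so your route could be completed and would even yield slack; but as written, the uniform $(1-o(1))^{-1}$ correction is unjustified for the first term, and the decisive numerical bound is assumed rather than derived.
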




\textit{Good, Dense Rounds and Phases.}
%
%
Our goal is to use TestThreshold on rounds in order to assess wether a threshold $t$ underestimates the true optimum. To this end, we will call an $\left(a,t\right)$-round, a round with length parameter $a$ on which we test a threshold $t$. For our test to be correct consistently we need the expected number of agents included in both the optimal solution and our $\left(a,t\right)$-round to be at least a fixed constant, i.e. $a\ge \frac{\vmax}{\opt}$. Since we don't know the value of $\opt$ and $t$ is our estimate of it, we say that an $\left(a,t\right)$-round is `good' if $a\ge 81\, e\, \vmax/t$. 

We allocate a budget $B_j = 3 \, C \cdot a_j \cdot B$ to every round $j$ with length parameter $a_j$, where $C = 1/(8e)$ is a fixed constant used in the following sections. For each round $j$ with set of agents $\N_j$ and length parameter $a_j$, we let $S^\ast_j \subseteq \N_j$ be the set of agents in round $j$ with maximum value $f(S^\ast_j)$ subject to the budget constraint $\sum_{ i \in S_j^\ast} c_i \leq B_j$, and let $\opt_j = f(S^\ast_j)$ be the optimal value of round $j$. 

We say that an $\left(a_j,t_j\right))$-round $j$ is \emph{dense} if $\opt_j \geq C \cdot a_j \cdot \opt$. Intuitively, a round is dense if it gets a fair share (with respect to its length parameter) of the optimal value. The following shows that `good' rounds are dense with good probability. 

\begin{lemma}\label{lemma:roundProb}
A  `good' $\left(a_j,t_j\right)$-round $j$, such that $t_j\le \opt$, is dense with probability at least $0.9$.
\end{lemma}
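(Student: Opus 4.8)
The plan is to establish the existence, with probability at least $0.9$ over the random formation of round $j$, of a budget-feasible subset of $\N_j$ whose value already reaches the density target $C\,a_j\,\opt$; since $\opt_j$ is by definition the maximum value of any subset of $\N_j$ of cost at most $B_j$, this immediately yields $\opt_j \ge C\,a_j\,\opt$, i.e. that round $j$ is dense. Throughout I would fix a global optimum $S^\ast$ with $f(S^\ast)=\opt$ and $\sum_{i\in S^\ast}c_i\le B$. By Lemma~\ref{lemma:RoundPro} and Lemma~\ref{lemma:PART}, each agent of $S^\ast$ lands in round $j$ independently with probability $q_j\ge 6a_j/(10e)$, so $R:=S^\ast\cap\N_j$ is an independent $q_j$-subsample of $S^\ast$. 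The first observation is that sampling preserves value in expectation: since the multilinear extension of a monotone submodular $f$ is concave along the nonnegative direction $\mathbf 1_{S^\ast}$, we get $\Exp[f(R)]\ge q_j f(S^\ast)=q_j\,\opt\ge \tfrac{6}{10e}a_j\opt=4.8\,C a_j\opt$, a fixed multiple of the target. Crucially, the good condition $a_j\ge 81e\,\vmax/t_j$ together with $t_j\le\opt$ gives $\vmax\le a_j\opt/(81e)$, so the target $C a_j\opt=a_j\opt/(8e)$ exceeds $\vmax$ by the factor $81/8$; equivalently the ``effective number of agents'' $\Exp[f(R)]/\vmax$ is a large constant (at least about $48$), which is exactly what powers the concentration below.

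The second step handles the budget, which is the heart of the matter. Taking all of $R$ is not an option, since its expected cost $q_j\sum_{i\in S^\ast}c_i$ can be of order $a_j B$, far above the round's budget $B_j=3C a_j B$. I would therefore first discard the expensive agents of $S^\ast$, i.e. those with $c_i>\tau$ for a threshold $\tau=\Theta(B_j)$. At most $B/\tau$ such agents exist, and by submodularity they carry total value at most $(B/\tau)\vmax$, which by $\vmax\le a_j\opt/(81e)$ is only a small constant fraction $\eta\,\opt$ of $\opt$; hence the cheap part $D:=\{i\in S^\ast:c_i\le\tau\}$ still satisfies $f(D)\ge(1-\eta)\opt$, while every agent in $D$ has cost at most $\tau$. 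Working now with the cheap sample $R':=D\cap\N_j$, I would extract a feasible solution as the densest prefix of $R'$ (agents ordered by marginal value per cost) fitting in $B_j$: when $\mathrm{cost}(R')\le B_j$ all of $R'$ is feasible, and otherwise the concavity through the origin of the greedy-density value profile yields a feasible subset of value at least $B_j\cdot f(R')/\mathrm{cost}(R')$. Recalling that $B_j=3C a_j B$ gives $C a_j=B_j/(3B)$, the bottleneck becomes showing that the density of $R'$ is at least $\tfrac13$ the density $\opt/B$ of $S^\ast$, which is precisely the inequality $B_j\,f(R')/\mathrm{cost}(R')\ge C a_j\opt$.

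Finally I would close the two tails by concentration and a union bound. For the numerator, $f$ changes by at most $\vmax$ when one agent is added, so $f(R')$ is a self-bounding function of the independent inclusion indicators; the self-bounding (Boucheron--Lugosi--Massart) inequality, together with $\Exp[f(R')]\ge q_j(1-\eta)\opt=\Omega(\vmax)$ with a large hidden constant, gives $f(R')\ge\tfrac12\Exp[f(R')]$ with probability at least $0.95$. For the denominator, each agent of $D$ contributes cost at most $\tau=\Theta(B_j)\ll \Exp[\mathrm{cost}(R')]$, so Bernstein's inequality gives $\mathrm{cost}(R')\le 2\,\Exp[\mathrm{cost}(R')]\le 2q_j\sum_{i\in D}c_i$ with probability at least $0.95$. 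On the intersection of these two events the density of $R'$ is at least $\tfrac{1-\eta}{4}\cdot\opt/B$, and tracking the true constants (where the factor-$3$ slack in $B_j$ and the large constant $81e$ leave room) gives density at least $\tfrac13\,\opt/B$, hence a feasible subset of value $\ge C a_j\opt$; a union bound over the two failure events yields the claimed probability $0.9$.

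The step I expect to be the main obstacle is the budget/feasibility control in the second and third paragraphs. Unlike the value, the cost of the sample does not concentrate for free, because individual agent costs are a priori unbounded (a single agent may cost up to $B$), so a naive Markov bound controls the cost only up to a large constant factor and destroys the density estimate. The resolution rests on balancing three ingredients against one another: the expensive/cheap split (to make per-agent costs small enough for a Bernstein-type bound), the bound on the value lost to expensive agents (affordable only because the good condition forces $\vmax\ll a_j\opt$), and the factor-$3$ slack built into $B_j$. Choosing the threshold $\tau$ so that all three simultaneously cooperate, and verifying that the constants line up to a final probability of at least $0.9$, is the delicate part of the proof.
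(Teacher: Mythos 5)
You have correctly identified where the difficulty lies (the budget side), but the resolution you propose does not close, and this is a genuine gap rather than loose bookkeeping. First, your own constants are already inconsistent: your two concentration events give density at least $\frac{1-\eta}{4}\cdot\frac{\opt}{B}$, which is strictly below the bound $\frac{1}{3}\cdot\frac{\opt}{B}$ that you yourself state is needed (and the densest-prefix extraction additionally loses one item, i.e.\ an additive $\vmax$, which your inequality omits). More importantly, no re-tuning of $\tau$ and the deviation parameters can fix this. The adversary may arrange the costs so that the expected sampled cost $q_j\,c(D)$ is of the same order as $B_j$; then a Bernstein bound at scale $B_j$ with per-item costs capped at $\tau = B_j/\theta$ has exponent roughly $\delta^2\theta/2$, so forcing the cost-overshoot probability below $0.05$ at overshoot factor $1+\delta$ requires $\theta \geq 6(1+\delta/3)/\delta^2$. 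On the other hand, the value surviving your cheap/expensive split is only $f(D) \geq \opt - (B/\tau)\vmax$, and goodness gives merely $\vmax \leq a_j\opt/(81e)$ while $B/B_j = 8e/(3a_j)$, so the discarded value is $(B/\tau)\vmax \leq \frac{8\theta}{243}\opt$, i.e.\ about $\theta/30$ of $\opt$. Meanwhile the density-to-target conversion needs $3(1-\delta_1)(1-\eta)/(1+\delta) \geq 1+\frac{8}{81}$, and the value concentration (with $\Exp[f(R')]/\vmax \approx 48$) forces $\delta_1 \geq 0.35$; together these cap $\theta$ below roughly $13 - 17\delta$, whereas Bernstein demands $\theta \geq 6(1+\delta/3)/\delta^2$, and these two regions are disjoint for every $\delta>0$. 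The structural reason is that sampling the true optimum $S^\ast$ gives you no pointwise relation between the cost and the value of a subset, only a relation in expectation, and expectation-level control of the cost is provably too weak here.

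The paper's proof avoids cost concentration entirely, and that is the idea your proposal is missing. Instead of subsampling $S^\ast$, it fixes a \emph{reference solution} $S$ obtained by virtually running linear pricing with the fixed threshold $\hat t = \frac{k+1}{2k}\opt$ (where $k = \opt/\vmax$) on the whole instance; Lemma~\ref{lem:threshold} guarantees $f(S) \geq \frac{k-1}{2k}\opt$. The payoff of this construction is deterministic, pointwise cost control: every accepted agent's cost is at most its offered price, which is proportional to its marginal value, so \emph{every} subset $X \subseteq S$ satisfies $c(X) \leq \frac{2k}{k+1}\cdot\frac{f(X)}{\opt}\cdot B$ (Corollary~\ref{cor4.2}), and a sampled subset retains at least the sum of its original marginals by submodularity (Lemma~\ref{lem4.7}). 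Hence a prefix of the sampled part of $S$ with value just above $C a_j \opt$ is budget-feasible within $B_j$ \emph{with certainty} (this is Lemma~\ref{lemmaTool}, where goodness absorbs the additive $\vmax$), and the only probabilistic ingredient left is a single lower-tail Chernoff bound on the weighted Bernoulli sum of marginals --- your self-bounding inequality for $f(R')$ could serve that same purpose. If you wish to salvage your framework, replacing $S^\ast$ by such a price-filtered reference solution is exactly the repair your budget step needs.
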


The high level intuition behind the proof of Lemma~\ref{lemma:roundProb} is that a good round is a random set that includes, in expectation, sufficiently many agents of the optimal solution. The requirement on the round being `good' is mostly technical and is used in the proof of Lemma~\ref{lemma:roundProb}. Lemma~\ref{lemma:roundProb} is key to the  analysis of \hyperref[alg:LMMECH]{LM-Mechanism} and can be found in Section~\ref{proof4.5}.

Applying \hyperref[alg:TestTHRESHOLD]{TestEstimate} to a dense round with threshold $t \leq \opt / 4$ should verify that $t$ is not too large and can be increased. To amplify the confidence of our test, we apply \hyperref[alg:TestTHRESHOLD]{TestEstimate} to a sequence of rounds. A \emph{$\left(\delta,a,t\right)$-phase} is a sequence of $\delta$ consecutive rounds all with the same length parameter $a$ and all testing the same threshold $t$ (see also Fig.~\ref{fig:phases}). 
We note that \hyperref[alg:TestTHRESHOLD]{TestEstimate} is always applied with the same threshold $t$ to every round of a given phase.
A $\left(\delta,a,t\right)$-phase is \emph{dense} if at least $\delta / 2$ of its rounds are dense. In Appendix~\ref{negdep}, we show that: 

\begin{lemma}\label{lemma:PhaseProbability}
A $\left(\delta,a,t\right)$-phase, consisting of `good' rounds and with $t\le \opt$, is dense with probability at least $1-\exp(\frac{-4\cdot\delta}{45})$.
\end{lemma}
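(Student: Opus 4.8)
The plan is to treat the phase as a sequence of $\delta$ trials, one per round, where the $j$-th trial succeeds when round $j$ is dense, and to bound the probability of ``at least $\delta/2$ successes'' via a Chernoff-type bound. Concretely, for $j = 1, \dots, \delta$ let $X_j$ be the indicator that round $j$ is dense, and set $Y = \sum_{j=1}^{\delta} X_j$. Since the phase consists of good rounds all testing the same threshold $t \leq \opt$, Lemma~\ref{lemma:roundProb} gives $\Pr[X_j = 1] \geq 0.9$ for every $j$, so $\mu := \Exp[Y] = \sum_{j} \Pr[X_j = 1] \geq 0.9\,\delta$. A phase is dense exactly when $Y \geq \delta/2$, hence it suffices to bound $\Pr[Y < \delta/2]$, a lower-tail deviation well below the mean.

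The key structural point is that the $X_j$ are \emph{negatively dependent}, which is what licenses a Chernoff bound even though the rounds are not independent. This stems from the round construction of Section~\ref{sec:rounds}: each agent is placed in at most one round, so the family of indicators ``agent $b$ lies in round $j$'' is negatively associated (the standard balls-into-bins/permutation structure). The event ``round $j$ is dense'', i.e. $\opt_j \geq C\, a_j\, \opt$, is a non-decreasing function of the set of agents placed in round $j$, since enlarging $\N_j$ can only increase the budget-constrained optimum $\opt_j$; moreover distinct rounds depend on disjoint blocks of the assignment indicators. By the closure of negative association under monotone functions of disjoint coordinate sets, the variables $X_1, \dots, X_\delta$ are negatively associated, so the multiplicative Chernoff lower-tail bound applies to $Y$.

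It then remains to compute constants. Apply $\Pr[Y \leq (1-\gamma)\mu] \leq \exp(-\gamma^2 \mu / 2)$ with the threshold $(1-\gamma)\mu = \delta/2$, so that $\gamma = 1 - \tfrac{\delta/2}{\mu}$. Writing $\mu = \lambda\,(\delta/2)$ with $\lambda \geq 1.8$ (from $\mu \geq 0.9\,\delta$), we get $\gamma = (\lambda-1)/\lambda$ and $\gamma^2 \mu / 2 = \tfrac{\delta}{4}\cdot\tfrac{(\lambda-1)^2}{\lambda}$. Since $(\lambda-1)^2/\lambda$ is increasing for $\lambda > 1$ and equals $16/45$ at $\lambda = 1.8$, we obtain $\gamma^2 \mu / 2 \geq 4\delta/45$ for every $\mu \geq 0.9\,\delta$. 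Hence $\Pr[Y < \delta/2] \leq \exp(-4\delta/45)$, and the phase is dense with probability at least $1 - \exp(-4\delta/45)$, as claimed.

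The main obstacle I expect is making the negative-dependence argument fully rigorous: verifying that the round-assignment indicators are negatively associated under the binomial/secretary construction, and confirming that the relevant closure property (monotone functions on disjoint coordinate blocks preserve negative association, and that negative association suffices for the Chernoff lower tail) holds in this exact setting. This is precisely the content deferred to Appendix~\ref{negdep}; the numeric constants, by contrast, are routine once the Chernoff bound is available.
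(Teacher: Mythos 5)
Your proof is correct and takes essentially the same route as the paper's: per-round denseness indicators with success probability at least $0.9$ from Lemma~\ref{lemma:roundProb}, negative association obtained from the agent-to-round assignment indicators and closure under monotone functions of disjoint coordinate blocks, and a multiplicative Chernoff lower-tail bound with identical constants ($\mu \ge 0.9\,\delta$, threshold $\delta/2 = \mu/1.8$, exponent $4\delta/45$). The only cosmetic difference is that the paper applies the monotone-closure step to thresholded weighted sums $\sum_i w_i Y_i^j$ (the proxy of Lemma~\ref{lemmaTool}), whereas you apply it directly to the denseness event via monotonicity of $\opt_j$ in the round's agent set; both are valid.
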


For the proof of Lemma~\ref{lemma:PhaseProbability}, we observe that the events that different $a$-rounds are dense are negatively associated, in the sense that conditional on a round not being dense, the event that another round is dense becomes more likely. Then, Lemma~\ref{lemma:PhaseProbability} by an application of standard concentration bounds. 

\begin{remark}We should highlight that the randomness in the agent arrivals and the internal randomness of \hyperref[alg:LMMECH]{LM-Mechanism} are only used (i) to partition the agent sequence into rounds consisting of random subsets of agents; and (ii) to select the right interval between $A$ and $B$ in Period $2$. Everything else in \hyperref[alg:LMMECH]{LM-Mechanism} is deterministic. In fact, once the rounds are formed, \hyperref[alg:TestTHRESHOLD]{TestEstimate} and the other parts of  \hyperref[alg:LMMECH]{LM-Mechanism} operate without assuming anything about the order in which the agents arrive. \qed
 \end{remark}

\subsection{Estimating the Probability of Success}
\label{sec:successProb}

In the following, we always assume that \hyperref[alg:LMMECH]{LM-Mechanism} gathers positive value only in realizations where event $\Event$ holds (and always condition on $\Event$, even without mentioning that). As discussed after the outline of \hyperref[alg:LMMECH]{LM-Mechanism}, $\Event$ is the conjunction of events $\Event_1, \ldots, \Event_5$ formally defined below.

    \begin{enumerate}
        \item [$\Event_1$:] An agent with the maximum utility $\vmax$ is included in Period 1.
        
        \item [$\Event_2$:] In Period 2, the interval $[\vmax, n \cdot \vmax]$ is partitioned into sub-intervals. \hyperref[alg:PowerTower]{PowerTowerSearch} identifies two consecutive intervals $A$ and $B$. Then, $\Event_2$ denotes that $\opt / 8 \in A \cup B$.
        
        \item [$\Event_3$:] Every phase during periods $3$ and $4$ is dense with respect to the threshold $\hat{t}$ used by \hyperref[alg:TestTHRESHOLD]{TestEstimate} for that phase.

        \item [$\Event_4$:] The sum of the realized round lengths drawn by \hyperref[alg:LMMECH]{LM-Mechanism} is less than $n$.
        
        \item [$\Event_5$:] Given $\Event_2$, the interval $I$ of $A$ and $B$ chosen uniformly at random satisfies $\opt / 8 \in I$. 
    \end{enumerate}

The following shows that event $\Event$ occurs with constant probability/

\begin{lemma}\label{lemma:GoodEvent}
    Event $\Event$ happens with probability at least $1/20$.
\end{lemma}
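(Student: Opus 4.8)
The plan is to lower bound $\Pr[\Event]$ by unrolling the conjunction $\Event=\Event_1\cap\Event_2\cap\Event_3\cap\Event_4\cap\Event_5$ with the chain rule, in the order in which \hyperref[alg:LMMECH]{LM-Mechanism} ``decides'' these events, and then to lower bound each conditional factor separately using the per-period guarantees already in hand. The reason to use the chain rule rather than a single union bound over the complements is that $\Event_1$ and $\Event_5$ only hold with probability bounded away from $1$ (namely $1/3$ and $1/2$), so they must enter multiplicatively; only the high-probability events $\Event_3,\Event_4$ will be handled by a union bound on failures.

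Concretely I would write
\[
\Pr[\Event]=\Pr[\Event_1]\cdot\Pr[\Event_2\mid\Event_1]\cdot\Pr[\Event_5\mid\Event_1\cap\Event_2]\cdot\Pr[\Event_3\cap\Event_4\mid\Event_1\cap\Event_2\cap\Event_5].
\]
The first three factors come directly from earlier sections: $\Pr[\Event_1]=1/3$, since the top-value agent lands in the first third of the sequence with probability $1/3$; $\Pr[\Event_2\mid\Event_1]\ge 0.9$ from the analysis of \hyperref[alg:PowerTower]{PowerTowerSearch}, which, given that $\vmax$ was learned correctly, returns consecutive intervals $A,B$ with $\opt/8\in A\cup B$; and $\Pr[\Event_5\mid\Event_1\cap\Event_2]=1/2$, because conditional on $\Event_2$ the two intervals are disjoint, $\opt/8$ lies in exactly one of them, and the mechanism picks one uniformly at random using fresh internal randomness.

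For the last factor I would pass to complements and union bound,
\[
\Pr[\Event_3\cap\Event_4\mid\Event_1\cap\Event_2\cap\Event_5]\ \ge\ 1-\Pr[\overline{\Event_3}\mid\cdots]-\Pr[\overline{\Event_4}\mid\cdots],
\]
where $\Pr[\overline{\Event_4}\mid\cdots]\le 0.03$ is exactly Lemma~\ref{lemma:Fit}. For $\overline{\Event_3}$ I would union bound over the (at most polylogarithmically many) phases of Periods~3 and~4: conditional on $\Event_2\cap\Event_5$ the selected interval $[\tmin,\tmax]$ contains $\opt/8$, so by the construction of \hyperref[alg:binarysearch]{BinarySearch} and \hyperref[alg:exploitation]{Exploitation} every such phase meets the hypotheses of Lemma~\ref{lemma:PhaseProbability} (it is made of `good' rounds testing a threshold $t\le\opt$), whence it fails to be dense with probability at most $\exp(-4\delta/45)$. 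Choosing the round multiplicity $\delta$ per phase (as fixed by the design of those periods) large enough that the number of phases times $\exp(-4\delta/45)$ is a small constant, the union bound makes $\Pr[\overline{\Event_3}\mid\cdots]$ a small constant.

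Putting the factors together gives $\Pr[\Event]\ge \tfrac13\cdot 0.9\cdot\tfrac12\cdot\bigl(1-\Pr[\overline{\Event_3}]-0.03\bigr)$; since the parenthesized factor is at least $1/3$ (indeed much larger once $\Pr[\overline{\Event_3}]$ is a small constant) and $\tfrac13\cdot 0.9\cdot\tfrac12=0.15$, this exceeds $1/20$. I expect the genuine work to sit in the $\Event_3$ step: one must verify, conditional on the correct interval being chosen, that \emph{all} phases of Periods~3 and~4 satisfy the `good' rounds/$t\le\opt$ hypotheses of Lemma~\ref{lemma:PhaseProbability}, and then balance the number of phases against $\delta$ in the union bound. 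A secondary point is the conditioning: the per-period bounds (Lemma~\ref{lemma:Fit} and Lemma~\ref{lemma:PhaseProbability}) must remain valid after conditioning on $\Event_1\cap\Event_2\cap\Event_5$, which holds because the randomness they rely on --- the round lengths and the random subsets within each phase --- is independent of the fixing of $\vmax$, the interval pair, and the interval choice.
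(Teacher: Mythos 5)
Your proposal is correct and rests on the same per-period ingredients as the paper ($\Pr[\Event_1]\geq 1/3$, the $0.9$ bounds for $\Event_2$ and $\Event_3$, the $0.97$ bound for $\Event_4$, and the fair coin for $\Event_5$), but it organizes them through a genuinely different decomposition. The paper applies one union bound to all four complements, \emph{including} the probability-$2/3$ event $\lnot\Event_1$: it bounds $\Pr[\lnot\Event_1\cup\lnot\Event_2\cup\lnot\Event_3\cup\lnot\Event_4]\leq 2/3+0.1+0.1+0.03\leq 0.9$, hence $\Pr[\Event_1\cap\Event_2\cap\Event_3\cap\Event_4]\geq 1/10$, and then multiplies by $\Pr[\Event_5\mid\Event_2]=1/2$ to get exactly $1/20$. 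So your opening claim that the low-probability events \emph{must} enter multiplicatively is overstated: absorbing a probability-$2/3$ failure into the union bound is legitimate whenever the sum of failure probabilities stays below $1$, which here it does, barely. What your chain-rule decomposition buys is a better constant (roughly $0.13$ rather than $0.05$) and no reliance on that near-tight sum; what it costs is that every per-period bound must now be invoked in conditional form ($\Pr[\Event_2\mid\Event_1]$, $\Pr[\Event_3\cap\Event_4\mid\Event_1\cap\Event_2\cap\Event_5]$), forcing the independence argument you sketch at the end. The paper needs an argument of that kind only once, for the $\Event_5$ coin flip; in fairness, its nominally unconditional Lemmas~\ref{lemma:E2}, \ref{lemma:BinaryProb} and \ref{lemma:Fit} implicitly carry the same conditioning on $\vmax$ having been learned correctly, so your treatment is no less rigorous on this point. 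Finally, note that your union bound over the phases of Periods~3 and~4 with per-phase failure probability $\exp(-4\delta/45)$ is precisely the content of Lemma~\ref{lemma:BinaryProb}, so you could cite it directly instead of re-deriving it.
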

\begin{proof}
In Appendix~\ref{A3}, we show that $\Prob{\Event_2} \ge 0.9$ in Lemma~\ref{lemma:E2}, $\Prob{\Event_3} \ge 0.9$ in Lemma~\ref{lemma:BinaryProb} and $\Prob{\Event_4} \geq 0.97$ in Lemma~\ref{lemma:Fit}.
We also have that $\Prob{\Event_1} \geq 1/3$ due to random agent arrivals and the properties of the binomial distribution. Finally, we argue that $\Prob{\Event_5\,|\,\Event_2} = 1/2$. Given that the set of possible realizations is restricted to a subset of those where event $\Event_2$ occurs, $\Event_5$ occurs with probability $1/2$, independently of any other events in the realizations considered in the conditioning on $\Event_2$. I.e., we can think that once we condition that the event $\Event_2$ occurs, the mechanism flips a fair coin (which is completely independent from $\Event_2$ and also from $\Event_1 \cap \Event_3 \cap \Event_4$): with probability $1/2$, the mechanism selects the right interval (and proceeds collecting value), and with probability $1/2$, the mechanism selects the wrong interval (and does not collect anything). Then, 
\[
\Prob{\lnot \Event_1 \cup \lnot \Event_2 \cup \lnot \Event_3 \cup \lnot \Event_4} \leq 2/3 + 0.1 +0.1 + 0.03 \leq 0.9 \]
Putting everything together:
$\Prob{\mathcal{E}} = \Prob{\Event_1 \cap \Event_2 \cap \Event_3 \cap \Event_4} \cdot \Prob{\Event_5 \,|\,\Event_2} \geq \frac{1}{10} \cdot \frac{1}{2} = 1/20$.
%
\end{proof}

\subsection{Estimation Testing}
\label{sec:estimate}

The key building block of our \hyperref[alg:LMMECH]{LM-Mechanism} is an online test, usually referred to as \hyperref[alg:TestTHRESHOLD]{TestThreshold}, which is applied to a $(\delta,a,t)$-phase. \hyperref[alg:TestTHRESHOLD]{TestThreshold}, whose pseudocode can be found in Mechanism~\ref{alg:TestTHRESHOLD}, determines whether:

\begin{quote}
    \textit{it is possible to collect value at least $C\cdot a\cdot \hat{t}$, where $C = 1/(7e)$ by applying linear-pricing with threshold $\hat{t}$ to a round with length parameter $a$.}
\end{quote}

\hyperref[alg:TestTHRESHOLD]{TestThreshold} determines the set $\N_{\text{round}}$ of agents of each round by drawing their sizes from a binomial distribution $\Bin(|\Nrem|, a)$. Then, it offers every agent $b \in \N_{\text{round}}$ the corresponding linear price $p_b = f_S(b) B /\hat{t}$ and collects value $f_S(b)$, if the price $p_b$ is accepted. We write that \hyperref[alg:TestTHRESHOLD]{TestThreshold} succeeds in a round, if it collects value at least  $C\cdot a\cdot \hat{t}$ from the present round, and fails, otherwise. We write that \hyperref[alg:TestTHRESHOLD]{TestThreshold} succeeds in a phase (or just succeeds) when at least half of the rounds are successful.

\hyperref[alg:TestTHRESHOLD]{TestThreshold} is the only point where \hyperref[alg:LMMECH]{LM-Mechanism} makes offers to the agents, collects value and expends budget. Periods $2$ and $3$ determine the length parameters and the number of phases to which \hyperref[alg:TestTHRESHOLD]{TestThreshold} is applied, and use their success / fail responses to guide our adaptive search for a good estimate of $\opt$. Then, Period $4$ applies \hyperref[alg:TestTHRESHOLD]{TestThreshold} to phases with appropriate length parameters in order to collect a total value of $\Omega(\opt)$. 
If during the execution of \hyperref[alg:LMMECH]{LM-Mechanism}, either the budget or the agent sequence is exhausted, or an agent with value larger than $\vmax$ is found, event $\Event$ is violated and \hyperref[alg:TestTHRESHOLD]{TestThreshold} (and \hyperref[alg:LMMECH]{LM-Mechanism}) abort (without  any value).

\def\valest{\mathrm{val}\mbox{-}\mathrm{test}}
\begin{algorithm}[t]
  \caption{TestThreshold}\label{alg:TestTHRESHOLD}
 \begin{algorithmic}[1]
     \STATE \textbf{Input:} Current state $(S, \val, \Nrem, \Brem)$, budget $B$, phase size $\delta$, length parameter $a$, threshold $\hat{t}$.
     \STATE \textbf{Initialization:} $C \leftarrow \frac{1}{7e}$, successes$\gets 0$
    \FOR{$j=1$ to $\delta$}
    \STATE Draw $\tau \sim \Bin(|\Nrem|,a)$, $\mathcal{N}_{\text{round}} \leftarrow \Nrem[1 : \tau]$
    \STATE $\Nrem \leftarrow \Nrem \setminus \mathcal{N}_{\text{round}},i \leftarrow 1, \valest \leftarrow 0$
    \STATE \textbf{if} $\Nrem = \emptyset$ \textbf{then} abort \textbf{end if}
     \WHILE{$i \leq |\mathcal{N}_{\text{round}}| $}
     \STATE \textbf{if} $f(\mathcal{N}_{\text{round}}[i]) > v_{max}$ \textbf{then} abort \textbf{end if}
    \STATE \textbf{else} 
    \begin{ALC@g}
    \STATE $p_i \leftarrow f_S(\mathcal{N}_{\text{round}}[i]) \cdot B\,/\, \hat{t} $\label{step:price}
    \IF{$ p_i \geq c_i$}
    \STATE $S \gets S\cup \{\mathcal{N}_{\text{round}}[i]\}$
    \STATE $\val \gets \val + f_S(\mathcal{N}_{\text{round}}[i])$\ \  and\ \ $\valest \leftarrow \valest + f_S(\mathcal{N}_{\text{round}}[i])$
    \STATE $\Brem \leftarrow \Brem - p_i$ 
    \STATE \textbf{if} $\Brem \leq 0$ \textbf{then} abort \textbf{end if}
    \ENDIF
    \end{ALC@g}
    \STATE \textbf{end if}
    \STATE \textbf{if} $\valest \ge C \cdot a\cdot \hat{t}$ \textbf{then} $\text{successes} \gets \text{successes} +1$ \textbf{end if}\label{step:test}
    \STATE $i \gets i + 1$
    \ENDWHILE
    \ENDFOR
    \STATE \textbf{if} $\text{successes} \ge \delta /2$ \textbf{then} $(1, S, \val, \Nrem, \Brem)$ \textbf{else }$(0, S, \val, \Nrem, \Brem)$
 \end{algorithmic}
\end{algorithm}


We next consider applications of \hyperref[alg:TestTHRESHOLD]{TestThreshold} to `good' rounds/phases and establish its main properties. We highlight that every time \hyperref[alg:TestTHRESHOLD]{TestThreshold} is applied in periods $2$, $3$ and $4$, the length parameter $a$ of each round respects the goodness property with respect to the chosen $t$ (see Proposition~\ref{prop:good1} in Section~\ref{sec:SecondPeriod}, Proposition~\ref{prop:good2} in Section~\ref{sec:ThirdPeriod} and Proposition~\ref{prop:good3} in Section~\ref{sec:FourthPeriod}). 
In fact, this is the key objective behind how the four periods of \hyperref[alg:LMMECH]{LM-Mechanism} are structured: 
\hyperref[alg:PowerTower]{PowerTowerSearch}, \hyperref[alg:binarysearch]{BinarySearch} and 
\hyperref[alg:exploitation]{Exploitation} are carefully designed in order to ensure that  \hyperref[alg:TestTHRESHOLD]{TestThreshold} is applied to `good' rounds.
Therefore, due to Lemma~\ref{lemma:roundProb} (resp. Lemma~\ref{lemma:PhaseProbability}), every round (resp. phase) to which \hyperref[alg:TestTHRESHOLD]{TestThreshold} is applied is dense with probability at least $0.9$ (resp. close enough to $1$).

Due to the linear pricing scheme in step~\ref{step:price} and the success criterion in  step~\ref{step:test}, every application of \hyperref[alg:TestTHRESHOLD]{TestThreshold} to a `good' $(a,\hat{t})$-round (regardless of its success or failure) expends a budget of at most 
\[ \frac{C \cdot a \cdot \hat{t} + \vmax}{\hat{t}} \cdot B \leq \left( C \cdot a + \frac{\vmax}{\hat{t}}\right) B \leq 2 \cdot C \cdot a \cdot B\,,
\]
where the last inequality holds because the round is good. Thus:

\begin{property}\label{property:a}
When \hyperref[alg:TestTHRESHOLD]{TestThreshold} is applied to a `good' $\left(a,\hat{t}\right)$-round the budget expended does not exceed the round's budget $B_j = 3\cdot C\cdot a\cdot B$. 
\end{property}

Using Property~\ref{property:a}, that \hyperref[alg:TestTHRESHOLD]{TestThreshold} and that the sum of the length parameters of all rounds is less than $1$, we prove the following in Appendix~\ref{budgetproofs}:

\begin{lemma}\label{BUDGET}
The total budget expended by \hyperref[alg:LMMECH]{LM-Mechanism} is at most $B/10$.
\end{lemma}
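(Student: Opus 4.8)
The plan is to bound the total budget expended by summing the budget used across all rounds, since \hyperref[alg:TestTHRESHOLD]{TestThreshold} is the only point where \hyperref[alg:LMMECH]{LM-Mechanism} makes offers and expends budget. First I would invoke Property~\ref{property:a}, which states that every application of \hyperref[alg:TestTHRESHOLD]{TestThreshold} to a `good' $(a,\hat{t})$-round expends at most the round's allocated budget $B_j = 3\cdot C\cdot a_j\cdot B$. Since the preceding discussion establishes (via Proposition~\ref{prop:good1}, Proposition~\ref{prop:good2} and Proposition~\ref{prop:good3}) that every round to which \hyperref[alg:TestTHRESHOLD]{TestThreshold} is applied across periods $2$, $3$ and $4$ respects the goodness property, Property~\ref{property:a} applies to every round the mechanism ever runs. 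Therefore the total budget is bounded by $\sum_j 3\cdot C\cdot a_j\cdot B$, where the sum ranges over all rounds drawn by the mechanism.

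The main step is then to control $\sum_j a_j$, the sum of the length parameters over all rounds. The excerpt repeatedly signals (see the discussion before Lemma~\ref{lemma:PART} and the hypothesis of the present lemma) that the length parameters are chosen so that their total is significantly less than $1$. I would therefore factor the total budget as
\[
\sum_j 3\cdot C\cdot a_j\cdot B \;=\; 3\,C\,B\sum_j a_j\,.
\]
Substituting the fixed constant $C = 1/(8e)$ and using the bound $\sum_j a_j < 1$ (which must be verified by the explicit accounting of all length parameters assigned in periods $2$--$4$, deferred to Appendix~\ref{sec:PART}), the total budget is strictly less than $3B/(8e)$. Since $3/(8e) \approx 0.138$, this already gives a bound close to the claim; to reach the stated $B/10$ one needs the slightly sharper guarantee $\sum_j a_j \le 8e/30 = 4e/15 \approx 0.725$, which the design of the length parameters is engineered to satisfy with room to spare.

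I expect the main obstacle to be the careful bookkeeping that establishes $\sum_j a_j$ is small enough, rather than any conceptual difficulty. Each of periods $2$, $3$ and $4$ contributes $\Theta(\log^\ast n)$, $O((\log\log n)^2)$ and $\Theta(\log\log n)$ applications of \hyperref[alg:TestTHRESHOLD]{TestThreshold} respectively, each with its own length parameter, and one must verify that the sum of all these length parameters, taken over the entire nested phase/round structure, stays below the required threshold uniformly in $n$. This is precisely the same accounting invoked for event $\Event_4$ (cf. Lemma~\ref{lemma:Fit}) and for Lemma~\ref{lemma:PART}, so I would structure the proof to reuse that computation: having fixed the length parameters in Appendix~\ref{sec:PART} to guarantee $\sum_j a_j$ is a small absolute constant, the budget bound follows immediately by multiplying through by $3CB$. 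The one subtlety worth flagging is that the per-round bound in Property~\ref{property:a} counts the \emph{allocated} round budget, which dominates the \emph{actual} expenditure (the mechanism aborts once $\Brem \le 0$), so the worst case is the allocated total and no additional slack argument is needed.
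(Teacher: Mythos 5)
Your proposal is correct and follows essentially the same route as the paper: the paper's proof likewise applies Property~\ref{property:a} round by round and then bounds the sum of length parameters, splitting it into Period~2 (where Lemma~\ref{lemma:SeqBound} gives $\sum_{i\ge 2}\gamma_i a_i \le \sum_{i\ge 2} i^{-10}$, hence at most $0.04\,B$) and Periods~3--4 (where $2\ell$ rounds with $\ell = 1/(6a)$ give exactly $6C\ell a B = CB \le 0.06\,B$), summing to $B/10$. The bookkeeping you defer is precisely this computation, and your observation that the crude bound $\sum_j a_j < 1$ is insufficient while the actual sum ($\approx 1/3$ plus a negligible Period-2 term) clears the threshold $1/(30C)$ with room to spare matches the paper's numbers.
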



\begin{property}\label{property:b}
\hyperref[alg:TestTHRESHOLD]{TestThreshold} succeeds (i.e., it collects value at least $C\cdot a \cdot \hat{t}$) when applied to an $\left(a_j,\hat{t}_j\right)$-round that is dense and $\hat{t}_j \le \opt / 4$.
\end{property}
\begin{proof}
Let $S_j$ be the part of the solution obtained by applying \hyperref[alg:TestTHRESHOLD]{TestThreshold} with threshold $\hat{t}$ to an $a_j$-round $j$ that is dense with respect to $\hat{t}$. We assume that the value of each agent $b$ is $f(b) \leq\vmax$, as otherwise \hyperref[alg:TestTHRESHOLD]{TestThreshold} aborts. Let also $S^\ast_j$ denote the optimal solution for round $j$.

Suppose some agents $ b \in S_j^{*} \setminus S_j $ were included in the optimal solution but not in $S_j$. This can happen for one of two reasons: either their private costs exceeded the mechanism's offer, i.e., $ c_b > f_{S_j}(b)\,B\,/\,\hat{t}$, or $\Brem$ was insufficient to make an appropriate offer to $ b $. However, by design, only the first case can occur. Due to our linear pricing scheme, no agent is ever offered a price larger than $\vmax\,B\,/\,\hat{t} \le \frac{B}{1024}$, where the inequality holds because we always use thresholds $\hat{t} \geq 1024\,\vmax$. Combining this with Lemma~\ref{BUDGET}, we conclude that all agents $ b \in S_j^{*} \setminus S_j $ declined our offer.

We can rewrite the sum of the values of these agents as:
\begin{equation}\label{eq:basic}
            \sum_{b\in S_j^{*} \setminus S_j} f_{S_j}(b) = \sum_{b\in S_j^{*} \setminus S_j} \frac{f_{S_j}(b)}{c_b} \cdot c_b < \sum_{b\in S_j^{*} \setminus S_j}\frac{\hat{t}}{B}\cdot c_b \le \hat{t}\cdot \frac{B_j}{B} = 3 \cdot C\cdot a_j \cdot \hat{t}
\end{equation}
The second to last inequality follows from the hypothesis $ c_b > f_{S_j}(b)\,B\,/\,\hat{t}$ and the last inequality follows from the fact that $S_j^{*}$ is budget-feasible with respect to $B_j = 3 \cdot C\cdot a_j\cdot B$.

Using monotonicity and submodularity of the valuation function we have:
\begin{equation}\label{eq:basic2}
    f(S_j^{*})-f(S_j) \le f(S_j^{*}\cup S_j) -f(S_j) \le \sum_{b\in S_j^{*}\setminus S_j} f_{S_j}(b)
\end{equation}

Applying \eqref{eq:basic} to \eqref{eq:basic2}, we get that $f(S_j) \geq f(S_j^{*}) - 3 \cdot C\cdot a_j \cdot \hat{t}$. Finally, using the hypothesis that round $j$ is dense, and hence $f(S^\ast_j) \geq C \cdot a_j \cdot \opt$, we obtain that:
\[
        f(S_j) \ge C \cdot a_j \cdot \opt -  3\cdot C\cdot a_j \cdot \hat{t}
\]
Since $\hat{t} \le \frac{\opt}{4}$, 
$f(S_j)\ge 4\cdot C \cdot a_j \cdot \hat{t} -3 \cdot C\cdot a_j\cdot\hat{t} = C\cdot a_j \cdot \hat{t}$.
Thus, \hyperref[alg:TestTHRESHOLD]{TestThreshold} it collects value at least $C\cdot a_j \cdot \hat{t}$ and succeeds. 
\end{proof}
 The Lemma above extends to phases the following way:
 \begin{lemma}\label{lemma:phasesucc}
     \hyperref[alg:TestTHRESHOLD]{TestThreshold} succeeds (i.e., it collects value at least $C\cdot a \cdot \hat{t}$ and returns $1$) when applied to a $\left(\delta_j,a_j,\hat{t}_j\right)$-phase that is dense and $\hat{t}_j \le \opt / 4$.
 \end{lemma}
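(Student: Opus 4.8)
The plan is to reduce the phase-level claim directly to the single-round guarantee already established in Property~\ref{property:b}, together with the two relevant definitions. Recall that a $\left(\delta_j, a_j, \hat{t}_j\right)$-phase is a sequence of $\delta_j$ rounds that all share the same length parameter $a_j$ and all test the same threshold $\hat{t}_j$, and that the phase is declared \emph{dense} precisely when at least $\delta_j/2$ of its constituent rounds are dense. The crucial observation is that, because a phase fixes one threshold $\hat{t}_j$ across all of its rounds, the hypothesis $\hat{t}_j \le \opt/4$ transfers verbatim from the phase to each of its rounds. Hence every dense round inside the phase satisfies both hypotheses of Property~\ref{property:b}: it is dense and its threshold is at most $\opt/4$.

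First I would invoke Property~\ref{property:b} separately for each of the (at least $\delta_j/2$) dense rounds guaranteed by the density of the phase. For each such round, Property~\ref{property:b} yields that \hyperref[alg:TestTHRESHOLD]{TestThreshold} collects value at least $C\cdot a_j\cdot \hat{t}_j$ from that round, i.e., the round is successful in the sense of the test in step~\ref{step:test}, so the counter $\mathrm{successes}$ is incremented on it. Summing over these rounds, at the end of the outer loop we have $\mathrm{successes}\ge \delta_j/2$, which is exactly the condition under which \hyperref[alg:TestTHRESHOLD]{TestThreshold} returns $1$. This establishes that the phase succeeds. For the accompanying value bound, I would simply add up the per-round guarantees: the number of successful rounds is at least $\lceil \delta_j/2\rceil \ge 1$, and each contributes at least $C\cdot a_j\cdot \hat{t}_j$, so the total value collected over the phase is at least $\lceil \delta_j/2\rceil\, C\cdot a_j\cdot \hat{t}_j \ge C\cdot a_j\cdot \hat{t}_j$, matching the stated bound.

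There is no genuine obstacle here: the lemma is a direct corollary of Property~\ref{property:b} combined with the definitions of a dense phase and of phase success. The only point that warrants a line of care is confirming that each dense round within the phase genuinely meets the hypotheses of Property~\ref{property:b}, which, as noted above, is immediate from the fact that a phase uses a single common threshold $\hat{t}_j$ for all of its rounds.
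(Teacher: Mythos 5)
Your proof is correct and follows exactly the route the paper intends: the paper states this lemma as an immediate extension of Property~\ref{property:b} (without spelling out the details), and your argument — applying Property~\ref{property:b} to each of the at least $\delta_j/2$ dense rounds, noting the shared threshold $\hat{t}_j \le \opt/4$, and concluding via the success counter in step~\ref{step:test} that the phase returns $1$ with value at least $C\cdot a_j\cdot \hat{t}_j$ — is precisely that extension made explicit.
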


The next four sections discuss how periods $1$ to $4$ are structured so that \hyperref[alg:TestTHRESHOLD]{TestThreshold} is applied to phases so that (conditional on event $\Event$) a total value of $\Omega(\opt)$ is collected. 




\section{First Period -- Learning the Maximum Value}
\label{sec:FirstPeriod}

To learn $\vmax$, we sample the first $\tau$ agents, where $\tau \sim \Bin(n,1/3)$. Conditional on event $\Event$, the agent with the maximum value $\vmax$ is among the first $\tau$ agents. In this case, we can use the highest observed value to construct the interval $[\vmax, n \cdot \vmax]$, which is guaranteed to contain $\opt$, due to the submodularity of $f$. The pseudocode for the first period is given in Mechanism~\ref{alg:Vmax}.

\begin{algorithm}[t]
  \caption{LearningMaxValue}\label{alg:Vmax}
 \begin{algorithmic}[1]
     \STATE \textbf{Input:} Current set $\Nrem$ of agents available 
     \STATE Sample $\tau \sim \Bin(|\Nrem|, 1/3)$, 
     $\mathcal{N}_{\text{period}} \leftarrow \Nrem[1 : \tau]$
    \STATE Offer price $p=0$ to the agents in $\mathcal{N}_{\text{period}}$
    \STATE Let $\vmax \leftarrow \max_{b \in \mathcal{N}_{\text{period}}}f(b)$ and $\Nrem \leftarrow \Nrem \setminus \mathcal{N}_{\text{period}}$
    \STATE \textbf{return} $\vmax, \Nrem$
 \end{algorithmic}
\end{algorithm}

\section{Second Period -- Power Tower Search}
\label{sec:SecondPeriod}

Next, we further refine our estimation of $\opt$, narrowing it down to two intervals of power tower length. To this end, we partition the interval $[\vmax,n\cdot \vmax]$ into an interval sequence  $\mathcal{T}$ as follows:
\begin{equation}
    \mathcal{T} = \left\{[t_1,t_2],(t_2,t_3],\dots,(t_T,t_{T+1}]\right\},
\end{equation} 
where $t_1 = \vmax, t_2 = 10^7\,\vmax$, $t_i = 2^{\frac{t_{i-1}}{\vmax}}\cdot t_{i-1}$ for $2<i<T+1$, and $t_{T+1}=n\cdot \vmax$, with $T=|\mathcal{T}|$. 
Using \hyperref[alg:TestTHRESHOLD]{TestThreshold}, we test all intervals $(t_i, t_{i+1}]$, for $i = 2, \ldots, T-1$, so that we find the correct interval to pick. We exclude $[t_1, t_2]$ and $(t_T, t_{T+1}]$, because the correct interval can be inferred without directly testing these values. To test each interval $(t_i, t_{i+1}]$, we apply \hyperref[alg:TestTHRESHOLD]{TestThreshold} with threshold $\hat{t} = t_i$ to a $(\gamma_i, a_i,t_i)$-phase, where the length parameter is $a_i = \frac{81\cdot e\cdot \vmax}{t_i}$ and the number of rounds is $\gamma_i =\frac{3}{2}\cdot\log\left(\frac{t_i}{\vmax}\right)$. The pseudocode of \hyperref[alg:PowerTower]{PowerTowerSearch} is given in Mechanism~\ref{alg:PowerTower}.

\begin{algorithm}[t]
  \caption{PowerTowerSearch}\label{alg:PowerTower}
 \begin{algorithmic}[1]
     \STATE \textbf{Input:} Current state $(S, \val, \Nrem, \Brem)$, budget $B$, interval $[\vmax, n \cdot \vmax]$.
    \STATE \textbf{Construct:} sequence $\mathcal{T}$ of interval s.t. $t_1 \gets \vmax, t_2 \gets 10^7\cdot \vmax$, $t_i \gets 2^{\frac{t_{i-1}}{\vmax}}\cdot t_{i-1}$ and $t_{T+1} \gets n\cdot \vmax$.
    \STATE \textbf{Initialization:} Length parameters $a_i = \frac{81\cdot e\cdot \vmax}{t_i}$ and phase sizes $\gamma_i =\frac{3}{2}\cdot\log\left(\frac{t_i}{\vmax}\right) $, $\text{index} \gets 1$.
    \FOR{$i=2$ to $|\mathcal{T}|$}
    \STATE $(\text{hit}, S, \val, \Nrem, \Brem) \leftarrow \text{TestThreshold}((S, \val, \Nrem, \Brem), B,\gamma_i, a_i, t_i )$
    \STATE \textbf{if} {$\text{hit} = 1$} \textbf{then} $\text{index} \gets i$ \textbf{end if}
    \ENDFOR
    \STATE With probability $1/2$: $t_{\min} \gets \mathcal{T}[\text{index}-1]$ and $t_{\max} \gets \mathcal{T}[\text{index}]$.
    \STATE Otherwise $t_{\min} \gets \mathcal{T}[\text{index}]$ and $t_{\max} \gets \mathcal{T}[\text{index}+1]$.
    \STATE \textbf{return} $t_{\min}, t_{\max}, (S, \val, \Nrem, \Brem)$.
 \end{algorithmic}
\end{algorithm}

A $(a_i, \gamma_i,t_i)$-phase is considered \emph{successful}, if \hyperref[alg:TestTHRESHOLD]{TestThreshold} applied with threshold $\hat{t} = t_i$ returns success in at least $\gamma_i/2$ rounds. Each length parameter $a_i$ is chosen such that every $(a_i,t_i)$-round is `good'.
 
\begin{proposition}\label{prop:good1}
    During PowerTowerSearch, all phases consist of `good' $(a_i,t_i)$-rounds.
\end{proposition}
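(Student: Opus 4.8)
The plan is to treat Proposition~\ref{prop:good1} as a direct verification against the definition of a `good' round, since the length parameters in \hyperref[alg:PowerTower]{PowerTowerSearch} are chosen precisely so that goodness holds. Recall that an $(a,t)$-round is `good' exactly when $a \ge 81\,e\,\vmax/t$, where $t$ is the threshold tested on that round. The second period applies \hyperref[alg:TestTHRESHOLD]{TestThreshold} with threshold $\hat{t} = t_i$ to each $(\gamma_i, a_i, t_i)$-phase, and by construction the length parameter of that phase is $a_i = 81\,e\,\vmax/t_i$. So the whole content of the proposition is that this definitional choice meets the goodness threshold.

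Concretely, I would fix an arbitrary index $i$ in the range $\{2, \ldots, T\}$ over which the main loop of \hyperref[alg:PowerTower]{PowerTowerSearch} iterates, and substitute $a_i = 81\,e\,\vmax/t_i$ into the goodness condition $a_i \ge 81\,e\,\vmax/t_i$: it holds with equality. Since every round of the $i$-th phase carries the common length parameter $a_i$ and tests the common threshold $t_i$, each is a good $(a_i, t_i)$-round, and hence the entire phase consists of good rounds; taking the union over $i \in \{2, \ldots, T\}$ gives the statement for all phases of \hyperref[alg:PowerTower]{PowerTowerSearch}. To be fully rigorous I would also confirm that each $a_i$ is a legitimate length parameter, i.e. $a_i \in (0,1)$, so that a round length can be drawn from $\Bin(|\Nrem|, a_i)$. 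This uses the lower endpoint $t_2 = 10^7\,\vmax$: because the sequence $t_2, t_3, \ldots$ is strictly increasing, we have $t_i \ge t_2 = 10^7\,\vmax > 81\,e\,\vmax$ for every $i \ge 2$, so $a_i = 81\,e\,\vmax/t_i < 1$, while positivity is immediate from $\vmax > 0$.

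There is essentially no obstacle here: the proposition is a definitional check engineered by the very choice $a_i = 81\,e\,\vmax/t_i$, and the goodness inequality is in fact tight in every phase. The only point I would flag explicitly is that the large-market constant $10^7$ appearing in $t_2$ is what guarantees $a_i < 1$, ensuring the $a_i$ are well-defined probabilities rather than merely nonnegative reals; this is precisely where the large-market regime assumed by \hyperref[alg:LMMECH]{LM-Mechanism} enters.
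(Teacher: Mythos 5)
Your proof is correct and matches the paper's reasoning: the paper gives no explicit argument for Proposition~\ref{prop:good1} precisely because, as you observe, the length parameters in \hyperref[alg:PowerTower]{PowerTowerSearch} are defined as $a_i = 81\,e\,\vmax/t_i$, so the goodness condition $a_i \ge 81\,e\,\vmax/t_i$ holds with equality by construction. Your additional check that $a_i \in (0,1)$ (via $t_i \ge t_2 = 10^7\,\vmax > 81\,e\,\vmax$) is a sound extra rigor point that the paper leaves implicit.
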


The phase lengths $\gamma_i$ are carefully chosen so that (i) the total number of agents examined during this period is at most a constant fraction of $n$; and (ii) we ensure that a phase with threshold much larger than $\opt$ cannot be successful. Formally:

\begin{lemma}\label{lemma:PERIOD1}
    Let $\frac{\opt}{8}\in (t_{i-1}, t_{i}]$. Then,   for every $j \geq i+1$, the $(\gamma_j, a_j,t_j)$-phase tested with threshold $\hat{t} = t_j$ is not successful. 
\end{lemma}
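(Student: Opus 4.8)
The plan is to show that the entire $(\gamma_j,a_j,t_j)$-phase collects too little total value to contain $\gamma_j/2$ successful rounds. Since \hyperref[alg:TestTHRESHOLD]{TestThreshold} declares a round successful only after it has collected value at least $C\,a_j\,t_j$, and the per-round collected values sum to the total value $V$ collected over the whole phase, the number of successful rounds is at most $V/(C\,a_j\,t_j)$. The crucial observation is that $C\,a_j\,t_j = \tfrac{1}{7e}\cdot 81e\,\vmax = \tfrac{81}{7}\vmax$ is a fixed constant multiple of $\vmax$ that does \emph{not} depend on $j$. Hence it suffices to prove that $V \big/ \big(\tfrac{81}{7}\vmax\big) < \gamma_j/2$.

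The next step is to bound $V$ by essentially $\opt$, exploiting that $t_j$ is astronomically larger than $\opt$. Let $A$ be the set of all agents whose offers are accepted during the phase (we may assume no agent of value exceeding $\vmax$ appears, as otherwise \hyperref[alg:TestTHRESHOLD]{TestThreshold} aborts and the phase is trivially unsuccessful). Telescoping the marginals gives $V = f(S_0\cup A) - f(S_0) \le f(A)$ by submodularity, where $S_0$ is the solution at the start of the phase, while summing the acceptance conditions $c_b \le f_{S_b}(b)\,B/t_j$ over $b\in A$ (with $S_b$ the running solution when $b$ is considered) gives $\sum_{b\in A} c_b \le V\,B/t_j$. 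Since every accepted agent has value at most $\vmax$ and $t_j \ge t_2 = 10^7\,\vmax$, each accepted cost is at most $\vmax B/t_j \le B/10^7 \ll B$, so $A$ partitions into $m \le \big(\sum_{b\in A} c_b\big)/B + 1 \le V/t_j + 1$ groups that are each budget-feasible for $B$. Subadditivity of the monotone submodular $f$ then yields $f(A) \le m\,\opt \le (V/t_j + 1)\,\opt$; combining with $V\le f(A)$ and solving gives $V \le \opt/(1-\opt/t_j)$.

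Finally I would plug in the bracket hypothesis. From $\opt/8 \in (t_{i-1}, t_i]$ we get $\opt \le 8\,t_i$, while for every $j \ge i+1$ we have $t_j \ge t_{i+1} = 2^{t_i/\vmax}\,t_i$ with $t_i/\vmax \ge 10^7$, so $\opt/t_j$ is negligible and $V \le 8\,t_i\,(1+o(1))$. Thus the number of successful rounds is at most $\tfrac{8\,t_i}{(81/7)\vmax}(1+o(1)) = \tfrac{56}{81}\cdot\tfrac{t_i}{\vmax}\,(1+o(1))$. On the other hand, $\gamma_j/2 = \tfrac{3}{4}\log(t_j/\vmax) \ge \tfrac{3}{4}\log(t_{i+1}/\vmax) = \tfrac{3}{4}\big(\tfrac{t_i}{\vmax} + \log\tfrac{t_i}{\vmax}\big) > \tfrac{3}{4}\cdot\tfrac{t_i}{\vmax}$. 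Since $\tfrac{56}{81} \approx 0.69 < 0.75 = \tfrac{3}{4}$, the number of successful rounds is strictly below $\gamma_j/2$, so \hyperref[alg:TestTHRESHOLD]{TestThreshold} returns $0$ and the phase is not successful. This argument covers all $j \ge i+1$ at once, because the value bound $\tfrac{56}{81}\cdot\tfrac{t_i}{\vmax}$ is independent of $j$ while $\gamma_j/2$ is nondecreasing in $j$.

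The main obstacle is the tightness of the value bound: the decisive comparison is between $56/81 \approx 0.69$ and $3/4 = 0.75$, leaving little slack, so the constant in the partitioning/subadditivity step and the (negligible but nonzero) correction factor $\opt/(t_j-\opt)$ must be tracked carefully rather than absorbed loosely. It is precisely the power-tower growth $t_{i+1} = 2^{t_i/\vmax}\,t_i$ that makes $\gamma_{i+1} = \tfrac{3}{2}\log(t_{i+1}/\vmax) \ge \tfrac{3}{2}\cdot t_i/\vmax$ large enough for the success bar $\gamma_j/2$ to dominate the $j$-independent quantity $V/(C a_j t_j)$; a merely geometric grid of thresholds would not supply enough rounds to force failure, which is exactly why \hyperref[alg:PowerTower]{PowerTowerSearch} uses a power tower.
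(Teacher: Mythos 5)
Your proof is correct, and its skeleton is essentially the paper's: a successful phase would force at least $\gamma_j/2$ rounds each contributing value $C\,a_j\,t_j = \tfrac{81}{7}\vmax$ (a $j$-independent quantity), which the power-tower growth $\gamma_j \ge \tfrac{3}{2}\,t_i/\vmax$ makes incompatible with the total value available; even your decisive inequality $\tfrac{56}{81} < \tfrac{3}{4}$ is exactly the paper's $\tfrac{3}{4}\cdot\tfrac{81}{7} > 8$ rearranged. The one genuine difference is how the phase value is capped by $\opt$. The paper does it in one line: every accepted agent is paid at least its cost and TestThreshold aborts before overspending $B$, so the accepted set is itself budget-feasible and the collected value can never exceed $\opt$ --- a direct ``contradiction to the definition of $\opt$'' that leans on the mechanism's global budget feasibility (and on the fact that a budget abort makes the phase unsuccessful anyway). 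You instead prove a phase-local bound $V \le \opt/(1-\opt/t_j)$ using only the acceptance inequalities $c_b \le f_{S_b}(b)\,B/t_j$: greedily partition the accepted agents into roughly $V/t_j + 1$ budget-feasible groups and apply subadditivity of the monotone submodular $f$. This costs you the $(1+o(1))$ bookkeeping --- and note the packing bound should strictly be $m \le \bigl(\sum_{b} c_b\bigr)/\bigl(B(1-\vmax/t_j)\bigr) + 1$, a harmless correction given $t_j \ge 10^7\vmax$ --- but it buys an argument that is self-contained within the phase and never invokes what the mechanism did in earlier periods. Both routes are sound; the paper's is shorter precisely because budget feasibility is already enforced by the abort rule, while yours would survive even in a variant mechanism without that global guarantee.
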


\begin{proof}
    Without loss of generality, let $t_{i} = \frac{\opt}{8}$. We prove that the $(\gamma_{i+1}, a_{i+1},t_{i+1})$-phase tested with threshold $\hat{t} = t_{i+1}$ cannot be successful (if proven for $j=i+1$, this must also apply to every $j > i+1$). 
    
    If the $(\gamma_{i+1}, a_{i+1},t_{i+1})$-phase tested with threshold $\hat{t} = t_{i+1}$ was successful, the total value that we would have collected during that phase would be:
\[
            \text{Value} \ge \frac{\gamma_{i+1}}{2} \cdot C \cdot a_{i+1} \cdot t_{i+1}  
\]
    Using the definitions of $a_i$ and $\gamma_i$, we get that:
\[
         \text{Value}\ge \frac{\frac{3}{2}\cdot \log\left(\frac{t_{i+1}}{\vmax}\right)}{2} \cdot C \cdot \frac{81\cdot e \cdot \vmax}{t_{i+1}} \cdot t_{i+1} 
\]
    Since $t_{i+1} = 2^{t_{i}/\vmax}\cdot t_i$ and using that $C = 1/(7e)$, we obtain that:
    \begin{equation}
         \text{Value}> 8\cdot\log\left(2^{\frac{t_i}{\vmax}}\cdot \frac{t_{i}}{\vmax}\right)\cdot \vmax
    \end{equation}
    Since $2^{\frac{t_i}{\vmax}}\cdot \frac{t_{i}}{\vmax} \ge 2^{\frac{t_i}{\vmax}}$, we get 
    $\text{Value} > 8 \cdot t_i =\opt$, a contradiction to the definition of $\opt$.
\end{proof}

After testing all $(\gamma_i, a_i,t_{i})$-phases corresponding to intervals $(t_i, t_{i+1}]$, each with threshold $t_i$, we select the interval corresponding to the last successful phase and the interval preceding it. I.e., if $(t_i, t_{i+1}]$ corresponds to the last successful phase, we select the intervals $(t_{i-1}, t_i]$ and $(t_i, t_{i+1}]$. In the definition of event $\Event_2$, we require that either $\frac{\opt}{8} \in (t_{i-1}, t_{i}]$ or $\frac{\opt}{8} \in (t_{i}, t_{i+1}]$, where $(t_{i-1}, t_i]$ and $(t_i, t_{i+1}]$ are the intervals selected by PowerTowerSearch. Let $i^\ast$ be such that $\frac{\opt}{8} \in (t_{i^{*}}, t_{i^{*}+1}]$. Then, for the event $\Event_2$ to occur, we require that either the $(\gamma_{i^\ast}, a_{i^\ast}, t_{i^\ast})$-phase or the $(\gamma_{i^\ast+1}, a_{i^\ast+1}, t_{i^\ast+1})$-phase is the last successful phase. 

We next show that the event $\Event_2$ is implied by the event that the $(\gamma_{i^\ast}, a_{i^\ast}, t_{i^\ast})$-phase is dense. So, let us assume that the $(\gamma_{i^\ast}, a_{i^\ast}, t_{i^\ast})$-phase is dense, which occurs with probability at least $0.9$, by Lemma~\ref{lemma:E2} (which, in turn, follows from Lemma~\ref{lemma:PhaseProbability}). Hence, with probability at least $0.9$, the $(\gamma_{i^\ast}, a_{i^\ast}, t_{i^\ast})$-phase is successful. If the $(\gamma_{i^\ast}, a_{i^\ast}, t_{i^\ast})$-phase is the last successful phase, then we select the intervals $(t_{i^\ast-1}, t_{i^\ast}]$ and $(t_{i^\ast}, t_{i\ast+1}]$ and the event $\Event_2$ occurs. Otherwise, Lemma \ref{lemma:PERIOD1} implies that the $(\gamma_{i^\ast+1}, a_{i^\ast+1}, t_{i^\ast+1})$-phase must be the last successful phase. Then, we select the intervals $(t_{i^\ast}, t_{i^\ast+1}]$ and $(t_{i^\ast+1}, t_{i\ast+2}]$ and the event $\Event_2$ occurs again.

\section{Third Period -- Binary Search}
\label{sec:ThirdPeriod}
  
  We next show how to apply \emph{binary search} to the interval returned by Period $2$. To apply binary search to the given interval $(t_{\min},t_{\max}]$, it is essential that the feedback on each estimation is correct with high probability. To achieve this, we test each estimation $\hat{t}$ of the optimum on $(m,a,\hat{t})$-phases, where $m=\left\lceil 8\cdot \log\log\left(\frac{t_{\max}}{t_{\min}}\right)\right\rceil $ rounds and $a=\frac{1}{6\,\left\lceil\log \log\left(\frac{t_{\max}}{t_{\min}}\right)\right\rceil \cdot m}$. The value of $m$ is set so that we can lower bound the probability of the event $\Event_3$ by union bound on the number of different phases. The length parameter $a$ is set so that the total number of phases used in Period $3$ times the number $m$ of rounds in each phase times $a$ is at most $1/3$ (which implies that Period $3$ ``consumes'' about $1/3$ of the agent sequence). We next show that this choice of $a$ makes our rounds good:
  \begin{proposition}\label{prop:good2}
    $(a,t_{\min})$-rounds (and thus all rounds) during BinarySearch are good.
  \end{proposition}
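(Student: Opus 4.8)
The plan is to reduce the statement to a single numerical inequality and then discharge it using the power-tower structure of the interval that Period~2 hands to BinarySearch. Recall that an $(a,t)$-round is `good' exactly when $a \geq 81\,e\,\vmax/t$. Throughout BinarySearch the length parameter $a$ is held fixed, while every threshold $\hat t$ that is ever tested lies in $(t_{\min}, t_{\max}]$ and hence satisfies $\hat t \geq t_{\min}$. Since $81\,e\,\vmax/\hat t \leq 81\,e\,\vmax/t_{\min}$, it suffices to prove goodness for the \emph{smallest} threshold, i.e. to show that the $(a, t_{\min})$-round is good; goodness of every other $(a,\hat t)$-round is then immediate, which is the ``and thus all rounds'' part of the claim.

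Next I would rewrite the target inequality $a \geq 81\,e\,\vmax/t_{\min}$ by substituting the definitions $a = \tfrac{1}{6\lceil \log\log(t_{\max}/t_{\min})\rceil\, m}$ and $m = \lceil 8\log\log(t_{\max}/t_{\min})\rceil$, turning it into
\[
\frac{t_{\min}}{\vmax} \;\geq\; 81\,e \cdot 6\,\bigl\lceil \log\log(t_{\max}/t_{\min})\bigr\rceil \cdot \bigl\lceil 8\log\log(t_{\max}/t_{\min})\bigr\rceil .
\]
The crucial simplification comes from the power-tower relation that PowerTowerSearch guarantees for the returned interval, namely $t_{\max} = 2^{t_{\min}/\vmax}\,t_{\min}$. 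Writing $x = t_{\min}/\vmax$, this yields $t_{\max}/t_{\min} = 2^{x}$, so $\log(t_{\max}/t_{\min}) = x$ and $\log\log(t_{\max}/t_{\min}) = \log x$. The inequality therefore collapses to
\[
x \;\geq\; 81\,e \cdot 6\,\lceil \log x\rceil\,\lceil 8\log x\rceil ,
\]
i.e. $x$ must dominate a fixed quadratic in $\log x$.

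To finish I would invoke the lower bound $x = t_{\min}/\vmax \geq 10^7$. This holds because both intervals produced by PowerTowerSearch obey the power-tower relation, which only the generic intervals $(t_i,t_{i+1}]$ with $i \geq 2$ do, so the chosen $t_{\min}$ is at least $t_2 = 10^7\,\vmax$. Since the right-hand side above is $O((\log x)^2)$ while $x$ grows exponentially in $\log x$, and $x/(\log x)^2$ is increasing in this range, it suffices to verify the inequality at the boundary $x = 10^7$: there $\log_2 x \approx 23.25$, giving a right-hand side of about $81\,e \cdot 6 \cdot 24 \cdot 187 \approx 5.9\times 10^6 < 10^7$, which closes the argument.

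The main obstacle—and the reason the constant $10^7$ appears in the definition $t_2 = 10^7\vmax$—is precisely this base case: the numerical inequality at $x = 10^7$ is tight up to a factor below $2$, so the base of the power-tower sequence in Period~2 must be chosen at least this large for the BinarySearch rounds to be good. Everything else is a routine monotonicity check confirming that the inequality persists for all $x \geq 10^7$.
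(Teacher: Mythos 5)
Your proposal is correct and follows essentially the same route as the paper's proof: reduce goodness to the inequality $\frac{t_{\min}}{\vmax} \ge 486\,e\,\bigl\lceil\log\log(t_{\max}/t_{\min})\bigr\rceil\cdot\bigl\lceil 8\log\log(t_{\max}/t_{\min})\bigr\rceil$, collapse the double logarithms via the power-tower relation $t_{\max} = 2^{t_{\min}/\vmax}\,t_{\min}$, and conclude from $t_{\min}/\vmax \ge 10^7$. The only difference is that you spell out the boundary computation at $x = 10^7$ and the monotonicity check, which the paper compresses into the phrase ``which is true for $\frac{t_{\min}}{\vmax}\ge 10^7$.''
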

  
  \begin{proof}
  We need to verify that $\frac{t_{\min}}{\vmax} \ge \frac{81\,e}{a}$. Using our choice of $a$, we obtain that: 
      \begin{equation}
          \begin{split}
      \frac{t_{\min}}{\vmax} \ge 486\cdot e \cdot \left\lceil\log\log\left(\frac{t_{\max}}{t_{\min}}\right)\right\rceil\cdot \left\lceil 8\cdot  \log\log\left(\frac{t_{\max}}{t_{\min}}\right)\right\rceil 
          \end{split}
      \end{equation}
      Using the fact that $t_{\max} = 2^{\frac{t_{\min}}{\vmax}}\cdot t_{\min}$, we get:
      \begin{equation}
          \frac{t_{\min}}{\vmax}\ge 486\cdot e\cdot \left\lceil\log\left(\frac{t_{\min}}{\vmax}\right)\right\rceil\cdot \left\lceil 8\cdot  \log\left(\frac{t_{\min}}{\vmax}\right)\right\rceil\,, 
      \end{equation}
      which is true for $\frac{t_{\min}}{\vmax}\ge  10^7$.
  \end{proof}
The number of rounds needed to conduct BinarySearch is $\ell = \frac{1}{6\cdot a}$. The pseudocode for BinarySearch is presented in Mechanism~\ref{alg:binarysearch}.

\begin{algorithm}[t]
\caption{BinarySearch}\label{alg:binarysearch}
\begin{algorithmic}[1]
\STATE \textbf{Input}: Current state $(S, \val, \Nrem, \Brem)$, budget $B$, search interval $[t_{\min},t_{\max}]$
\STATE \textbf{Initialization:} Phase size $m \gets \left \lceil 8\cdot\log\log\left(\frac{t_{\max}}{t_{\min}}\right)\right\rceil$\,, length parameter $a \gets \frac{1}{6\,\left\lceil\log\log\left(\frac{t_{\max}}{t_{\min}}\right)\right\rceil\cdot m}$, \\
\hspace*{2.1cm}$\text{low} \gets 0$, 
$\text{high} \gets \left\lceil \log\left(\frac{t_{\max}}{t_{\min}}\right) \right\rceil$, $\text{mid} \gets \left\lceil (\text{high} + \text{low})/2 \right\rceil$
    \WHILE{$\text{low} \leq \text{high}$}
    \STATE $\text{hit} \gets 0$
    \STATE $(\text{hit}, S, \val, \Nrem, \Brem) \leftarrow \text{TestThreshold}((S, \val, \Nrem, \Brem), B,m, a, 2^{\text{mid}}\cdot t_{\min})$
    \STATE \textbf{if} $\text{hit} = 1$ \textbf{then} $\text{low} \gets \text{mid}$, $\text{mid} \gets \lceil (\text{high} + \text{low})/2 \rceil$.
    \STATE \textbf{else} $\text{high} \gets \text{mid}$, $\text{mid} \gets \lfloor \text{high} + \text{low})/2 \rfloor$ \textbf{end if}
    \ENDWHILE
    \STATE \textbf{return} $(2^{\text{mid}}\cdot t_{\min}, (S, \val, \Nrem, \Brem))$.
\end{algorithmic}
\end{algorithm}

By Lemma~\ref{lemma:phasesucc}, if all thresholds not greater than $\frac{\opt}{4}$ are tested on dense phases, we cannot end up with a substantial underestimation of $\opt$ after conducting BinarySearch. Below we prove an even sharper bound on the possible estimates that BinarySearch may return.
\begin{lemma}\label{lemma:dist}
    Let $\hat{t}$ be the estimate \hyperref[alg:binarysearch]{BinarySearch} returns, under event $\mathcal{E}$. Then:
    \begin{equation}
        \frac{\opt}{8} \le \hat{t} \le 84\cdot e\cdot \left\lceil\log\log\left(\frac{t_{\max}}{t_{\min}}\right)\right\rceil\cdot \opt
    \end{equation}
\end{lemma}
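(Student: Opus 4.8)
The plan is to read \hyperref[alg:binarysearch]{BinarySearch} as a binary search over the exponent $k$, where the threshold tested at each phase is $2^{k}\,t_{\min}$ and $k$ ranges over $\{0,1,\dots,H\}$ with $H=\lceil\log(t_{\max}/t_{\min})\rceil$. Since each phase uses $m$ rounds and there are exactly $\lceil\log\log(t_{\max}/t_{\min})\rceil$ phases, the search runs one phase per level of a depth-$\lceil\log\log(t_{\max}/t_{\min})\rceil$ search, which suffices to pin the bracket $[\text{low},\text{high}]$ down to essentially one level. Conditioning on $\Event$, every phase is dense, so the response of \hyperref[alg:TestTHRESHOLD]{TestThreshold} is reliable in the precise senses used below, and I would prove the two inequalities separately. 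Throughout I write $k_0=\lfloor\log_2(\opt/(4\,t_{\min}))\rfloor$ for the largest exponent with $2^{k_0}t_{\min}\le\opt/4$; note that $\Event$ (via $\Event_2,\Event_5$) guarantees $t_{\min}<\opt/8\le t_{\max}$, whence $k_0\ge 1$ and, since $2^{k_0+1}t_{\min}>\opt/4$, also $2^{k_0}t_{\min}>\opt/8$.

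For the lower bound $\hat t\ge\opt/8$, I would invoke Lemma~\ref{lemma:phasesucc} together with the density of every phase under $\Event$: any phase tested with a threshold $\hat t\le\opt/4$ succeeds, i.e.\ returns $\text{hit}=1$. Hence \hyperref[alg:binarysearch]{BinarySearch} moves upward ($\text{low}\gets\text{mid}$) whenever $\text{mid}\le k_0$, and it only executes $\text{high}\gets\text{mid}$ after a failure, which by the contrapositive of Lemma~\ref{lemma:phasesucc} forces $2^{\text{mid}}t_{\min}>\opt/4$, i.e.\ $\text{mid}>k_0$. Thus the bracket $[\text{low},\text{high}]$ is never pushed below $k_0$, so the returned exponent is at least $k_0$ and $\hat t=2^{\text{mid}}t_{\min}\ge 2^{k_0}t_{\min}>\opt/8$. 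The only boundary case is when $\opt/4$ exceeds the entire range ($k_0\ge H$); then every in-range test succeeds, the search returns $2^{H}t_{\min}=t_{\max}$, and $t_{\max}\ge\opt/8$ again gives the bound.

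For the upper bound, the key point is that the returned threshold is the threshold of a \emph{successful} phase (the search settles on the last level at which $\text{hit}=1$). If an $(m,a,\hat t)$-phase is successful, then at least $m/2$ of its rounds each collect value at least $C\cdot a\cdot\hat t$, so the total value collected during the phase is at least $\tfrac{m}{2}\,C\,a\,\hat t$. On the other hand, the value collected during a single phase equals the increase $f(S_{\text{end}})-f(S_{\text{start}})$ of the running solution (the per-agent marginals telescope), and since the final solution is budget-feasible we have $f(S_{\text{end}})\le\opt$; hence any phase collects at most $\opt$. Combining the two bounds gives $\tfrac{m}{2}\,C\,a\,\hat t\le\opt$, that is
\begin{equation}
\hat t\;\le\;\frac{2\,\opt}{m\,C\,a}.
\end{equation}
Substituting $C=1/(7e)$ and $a=\frac{1}{6\,\lceil\log\log(t_{\max}/t_{\min})\rceil\,m}$ cancels the factor $m$, yielding $m\,C\,a=\tfrac{1}{42\,e\,\lceil\log\log(t_{\max}/t_{\min})\rceil}$ and therefore $\hat t\le 84\,e\,\lceil\log\log(t_{\max}/t_{\min})\rceil\,\opt$, as claimed.

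The step I expect to be the main obstacle is making the two pieces of ``binary-search bookkeeping'' fully rigorous: that the returned exponent is at least $k_0$ (for the lower bound) and that it coincides with the threshold of a successful phase (for the upper bound). The difficulty is that the tested predicate $P(k)=[\text{phase at level }k\text{ succeeds}]$ is guaranteed monotone only below $k_0$ (all succeed) and above the level where $2^{k}t_{\min}>84\,e\,\lceil\log\log(t_{\max}/t_{\min})\rceil\,\opt$ (all fail, by the value bound above), but may behave arbitrarily in between; moreover the updates set $\text{low}\gets\text{mid}$ and $\text{high}\gets\text{mid}$ rather than $\text{mid}\pm 1$. I would resolve this by carrying the loop invariant that $\text{low}$ is either $0$ or an exponent at which the most recent test succeeded, while $\text{high}$ is either $H$ or an exponent at which a test failed, and by checking that at termination $\text{mid}$ equals this last successful $\text{low}$; the two endpoint estimates established above then give both inequalities without any appeal to global monotonicity of $P$.
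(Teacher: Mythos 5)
Your proposal is correct and follows essentially the same route as the paper's proof: the lower bound comes from Lemma~\ref{lemma:phasesucc} (under $\Event$, no dense phase with threshold at most $\opt/4$ can fail, so the search never settles below $\opt/8$), and the upper bound comes from the observation that a successful $(m,a,\hat{t}')$-phase collects value at least $\tfrac{m}{2}\,C\,a\,\hat{t}'$, which cannot exceed $\opt$, forcing $\hat{t}' \le 84\,e\,\bigl\lceil\log\log\bigl(\tfrac{t_{\max}}{t_{\min}}\bigr)\bigr\rceil\cdot\opt$. The extra bookkeeping you supply (the exponent $k_0$, the low/high invariants, the boundary case) merely makes explicit the binary-search mechanics that the paper's proof treats implicitly.
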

\begin{proof}
    Suppose $\hat{t}<\frac{\opt}{8}$. This would imply that an intermediate estimate $\hat{t}' \in \left[\frac{\opt}{8},\frac{\opt}{4}\right]$ failed in a dense phase. However, this cannot happen under event $\mathcal{E}$, due to Lemma  \ref{lemma:phasesucc}.
    Now consider the case where $\hat{t}> 84 \, e\, \left\lceil\log\log\left(\frac{t_{\max}}{t_{\min}}\right)\right\rceil\cdot \opt$. This would mean that a dense phase succeeded using an overestimated value  $\hat{t}' >84\, e\, \left\lceil\log\log\left(\frac{t_{\max}}{t_{\min}}\right)\right\rceil\cdot \opt$. Specifically, in that phase, the accumulated value would be:
    \begin{equation}
        \begin{split}
            \text{Value}\ge \frac{m}{2}\cdot C \cdot a \cdot \hat{t}' 
        \end{split}
    \end{equation}
    By using the assumption of the overestimation $\hat{t}'$, along with the definitions of $C$ and $a$ we get:
    \begin{equation}
        \begin{split}
            \text{Value}> \frac{m}{2}\cdot \frac{1}{7e} \cdot \frac{1}{6\left\lceil\log \log\left(\frac{t_{\max}}{t_{\min}}\right)\right\rceil \cdot m} \cdot 84\, e\, \left\lceil\log\log\left(\frac{t_{\max}}{t_{\min}}\right)\right\rceil\cdot \opt=\opt
        \end{split}
    \end{equation}    
    which clearly contradicts the definition of $\opt$.

   Thus, we conclude that $\hat{t} \in \left[\frac{\opt}{8},\ \ 84 \, e\,\log\log\left(\frac{t_{\max}}{t_{\min}}\right)\cdot \opt\right]$
\end{proof}

\section{Fourth Period -- Exploitation}\label{sec:FourthPeriod}

Lemma \ref{lemma:dist} indicates that, after running BinarySearch, our estimation is at most $84\cdot e \cdot \left\lceil\log\log\left(\frac{t_{\max}}{t_{\min}}\right)\right\rceil$ times $\opt$. In the final stage of our mechanism, we aim to converge to a threshold $\hat{t}$ close to $\opt$ and apply $\hat{t}$ to a constant fraction of agents, thereby collecting a total value of $\Omega(\opt)$. 

\begin{algorithm}[t]
\caption{Exploitation}\label{alg:exploitation}
\begin{algorithmic}[1]
\STATE \textbf{Input}: Current state $(S, \val, \Nrem, \Brem)$, budget $B$, initial threshold $\tinit$, search interval $[t_{\min}, t_{\max}]$
\STATE \textbf{Initialization:} $\hat{t} \gets \tinit$, Phase size $m \gets \left \lceil 8\cdot\log\log\left(\frac{t_{\max}}{t_{\min}}\right)\right\rceil$\,, length parameter $a \gets \frac{1}{6\,\left\lceil\log\log\left(\frac{t_{\max}}{t_{\min}}\right)\right\rceil\cdot m}$
    \FOR{$i = 1$ to $\frac{1}{6 \cdot m \cdot a} = \left\lceil\log\log\left(\frac{t_{\max}}{t_{\min}}\right)\right\rceil$}
    \STATE $(\text{hit}, S, \val, \Nrem, \Brem) \leftarrow \text{TestThreshold}((S, \val, \Nrem, \Brem), B, a, \hat{t})$
    \STATE \textbf{if} $\text{hit} =1$ \textbf{then} $\hat{t} \gets 2 \cdot \hat{t}$ \textbf{else} $\hat{t} \gets \hat{t}/2$ \textbf{end if}
    \ENDFOR
    \STATE \textbf{return} $(S, \val)$.
\end{algorithmic}
\end{algorithm}

\hyperref[alg:exploitation]{Exploitation}, whose pseudocode is presented in Mechanism~\ref{alg:exploitation}, is applied across  a total number of $\ell/m$ $(m,a,\hat{t})$-phases, thus using $\ell = 1/(6a)$ rounds in total. Under the event $\Event$, it holds that $\tmin \le \frac{\opt}{8}$. Thus, we can leverage Proposition~\ref{prop:good2} to claim the following:
\begin{proposition}\label{prop:good3}
    Conditioning on the event $\Event$, all rounds during Exploitation are good. 
\end{proposition}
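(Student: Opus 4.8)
The plan is to reduce the claim to a single invariant on the thresholds used by \hyperref[alg:exploitation]{Exploitation}: every threshold $\hat{t}$ that \hyperref[alg:TestTHRESHOLD]{TestThreshold} is called with during this period satisfies $\hat{t} \ge \tmin$. Recall that an $(a,\hat{t})$-round is good exactly when $\hat{t}/\vmax \ge 81e/a$, and that \hyperref[alg:exploitation]{Exploitation} uses the same length parameter $a = \frac{1}{6\lceil\log\log(\tmax/\tmin)\rceil m}$ as \hyperref[alg:binarysearch]{BinarySearch}. Proposition~\ref{prop:good2} established precisely that $\tmin/\vmax \ge 81e/a$ for this $a$; hence any round whose threshold satisfies $\hat{t} \ge \tmin$ is automatically good, since the goodness condition is monotone in the threshold. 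So it suffices to lower bound every threshold used in \hyperref[alg:exploitation]{Exploitation} by $\tmin$.

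Next I would prove, by induction over the phases of \hyperref[alg:exploitation]{Exploitation}, the stronger statement that every threshold $\hat{t}$ used satisfies $\hat{t} \ge \opt/8$, which implies $\hat{t} \ge \tmin$ because $\tmin \le \opt/8$ under event $\Event$. The base case is the starting threshold $\tinit$: by Lemma~\ref{lemma:dist}, the estimate returned by \hyperref[alg:binarysearch]{BinarySearch} satisfies $\tinit \ge \opt/8$ conditional on $\Event$. For the inductive step, assume $\hat{t} \ge \opt/8$ at the beginning of some phase. By the first paragraph this phase consists of good rounds, and by event $\Event_3$ (a part of $\Event$) the phase is dense. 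If $\hat{t} \le \opt/4$, then Lemma~\ref{lemma:phasesucc} guarantees that \hyperref[alg:TestTHRESHOLD]{TestThreshold} succeeds, so the threshold is doubled to $2\hat{t} \ge \opt/4 \ge \opt/8$. The only remaining possibility is that \hyperref[alg:TestTHRESHOLD]{TestThreshold} fails, and by the contrapositive of Lemma~\ref{lemma:phasesucc} this forces $\hat{t} > \opt/4$, so halving gives $\hat{t}/2 > \opt/8$. In either case the threshold at the start of the next phase is again at least $\opt/8$, completing the induction.

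The main obstacle is the apparent circularity: invoking Lemma~\ref{lemma:phasesucc} to control the threshold update requires the current phase to be good, yet goodness is exactly what we are trying to establish. The induction above is designed to break this loop. At the start of each phase the induction hypothesis already supplies $\hat{t} \ge \opt/8 \ge \tmin$, which via Proposition~\ref{prop:good2} yields goodness of that phase \emph{before} any appeal to Lemma~\ref{lemma:phasesucc}; thus goodness is derived from the threshold bound carried by the hypothesis, and Lemma~\ref{lemma:phasesucc} is applied only to propagate the bound to the next phase, with density supplied externally by conditioning on $\Event_3$. The one step to verify with care is that the threshold can never be halved while below $\opt/4$, which is precisely the contrapositive of Lemma~\ref{lemma:phasesucc} and the only place where both density and goodness of the current phase are simultaneously invoked.
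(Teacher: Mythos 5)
Your proof is correct and follows essentially the same route as the paper: the paper likewise combines $\tmin \le \opt/8$ (guaranteed by $\Event$), the goodness bound of Proposition~\ref{prop:good2} together with monotonicity of goodness in the threshold, and the fact that Lemma~\ref{lemma:phasesucc} prevents the threshold from ever falling below $\opt/8$. Your explicit induction over phases merely spells out, and correctly resolves, the goodness-versus-threshold circularity that the paper's terse argument leaves implicit.
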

The mechanism operates by doubling the estimation after every successful phase and halving it after every failed phase. Under the event $\Event$, the threshold $\hat{t}$ never falls below $\opt / 8$, by Lemma \ref{lemma:phasesucc}. Combining this with Lemma~\ref{lemma:dist}, we conclude that we can have no more than \[ \frac{\frac{\ell}{m} + \log\log\log\left(\frac{t_{\max}}{t_{\min}} \right)+ \log(84e)+2}{2} \] failed phases; exceeding this would cause our estimation to drop below \( \frac{\opt}{8} \). Consequently, we are guaranteed at least \( \frac{\frac{\ell}{m} -\log\log\log\left(\frac{t_{\max}}{t_{\min}} \right)- \log(84e)-2}{2}\) successful phases, which ensures a constant competitive ratio. Formally,

\begin{lemma} \label{lemma:compratio}
    Conditional on the event $\Event$, Exploitation collects a total value of at least $\frac{\opt}{4032e}$.
\end{lemma}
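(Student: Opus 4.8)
The plan is to assemble three facts that hold under event $\Event$ and then track constants carefully. Under $\Event$ every round of Exploitation is `good' (Proposition~\ref{prop:good3}), every phase is dense (ensured by the component event $\Event_3$), and the estimate entering Exploitation satisfies $\opt/8 \le \tinit \le 84\,e\,\lceil\log\log(\tmax/\tmin)\rceil\,\opt$ (Lemma~\ref{lemma:dist}). Write $L = \lceil\log\log(\tmax/\tmin)\rceil$. Recall that Exploitation runs exactly $L$ phases, each an $(m,a,\hat t)$-phase with $m = \lceil 8\log\log(\tmax/\tmin)\rceil$ and $a = 1/(6Lm)$, doubling $\hat t$ after a successful phase and halving it after a failed one; since we condition on $\Event$, \hyperref[alg:TestTHRESHOLD]{TestThreshold} never aborts, so these dynamics are well defined.

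\textbf{Step 1 (the threshold never drops below $\opt/8$).} First I would establish a ratchet property: $\hat t \ge \opt/8$ holds throughout Exploitation. The key observation is Lemma~\ref{lemma:phasesucc}: since every phase is dense and good, whenever $\hat t \le \opt/4$ the phase \emph{succeeds}, so $\hat t$ is doubled rather than halved. Hence a halving can occur only from a value $\hat t > \opt/4$, leaving $\hat t > \opt/8$ afterward; together with $\tinit \ge \opt/8$ and the fact that doublings only increase $\hat t$, an induction over the $L$ phases gives $\hat t \ge \opt/8$ at every step.

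\textbf{Step 2 (enough successful phases).} Let $s$ and $F$ denote the numbers of successful and failed phases, so $s + F = L$ and the final threshold equals $\tinit\cdot 2^{\,s-F}$. By Step~1 this final threshold is at least $\opt/8$, while Lemma~\ref{lemma:dist} gives $\tinit \le 84\,e\,L\,\opt$; hence $2^{\,s-F} \ge (\opt/8)/(84\,e\,L\,\opt) = 1/(672\,e\,L)$, i.e.\ $s - F \ge -\log(672\,e\,L)$. Combining with $s+F=L$ yields $s \ge \tfrac12\bigl(L - \log L - \log(672\,e)\bigr)$, a lower bound of the form $\tfrac12(L - \log L - O(1))$.

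\textbf{Step 3 (value accounting), and the main obstacle.} Each of the $s$ successful phases has, by the success criterion in step~\ref{step:test}, at least $m/2$ successful rounds, each contributing value at least $C\,a\,\hat t$; using $\hat t \ge \opt/8$ from Step~1, each successful phase contributes at least $\tfrac{m}{2}\,C\,a\,(\opt/8)$. Summing over the $s$ successful phases and substituting $C = 1/(7e)$ and $a = 1/(6Lm)$, the total collected value is at least $s\cdot\opt/(672\,e\,L)$. The only genuinely quantitative point, and the main obstacle, is to verify this is at least $\opt/(4032\,e)$, which reduces to $s \ge L/6$; plugging the Step~2 bound, this becomes an inequality of the shape $L \ge \tfrac32\log L + O(1)$. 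This fails for small $L$ and is exactly where the large-market regime is used: since $\tmax = 2^{\tmin/\vmax}\tmin$, we have $\log\log(\tmax/\tmin) = \log(\tmin/\vmax) \ge \log(10^7) > 23$, so $L \ge 24$ and the inequality holds (barely). The delicacy is that the whole bound closes only because the large-market assumption forces $L$ large enough that the logarithmic losses of Step~2 erode merely a bounded fraction of the $L$ phases, leaving a constant fraction of successful phases and hence value $\Omega(\opt)$ with the stated constant $1/(4032e)$.
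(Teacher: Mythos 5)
Your proof is correct and follows essentially the same route as the paper's: the ratchet property $\hat{t} \ge \opt/8$ via Lemma~\ref{lemma:phasesucc}, the doubling/halving walk combined with Lemma~\ref{lemma:dist} to lower-bound the number of successful phases, per-phase value accounting with $C\,a\,m/2 \cdot \opt/8$, and the large-market bound $\tmin/\vmax \ge 10^7$ to close the final inequality. The only (harmless) difference is bookkeeping: your success-count bound carries $\log(672e) = \log(84e)+3$ where the paper writes $\log(84e)+2$, and you verify the final numeric inequality directly at $L \ge 24$ instead of invoking monotonicity of the paper's function $g$ — both versions still clear the $\frac{\opt}{4032e}$ bound.
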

\begin{proof}
    As we have argued already, we are guaranteed to have at least $\frac{\frac{\ell}{m} -\log\log\log\left(\frac{t_{\max}}{t_{\min}} \right)- \log(84e)-2}{2}$ successful Phases, which in turn means that the value gathered is greater than:
\[
        \text{Value}\ge \underbrace{\frac{\frac{\ell}{m} -\log\log\log\left(\frac{t_{\max}}{t_{\min}} \right)- \log(84e)-2}{2}}_{\text{successful Phases}} \cdot \underbrace{\frac{m}{2} \cdot C \cdot a\cdot \frac{\opt}{8}}_{\text{Value of each Phase}} 
\]
    Now using the definitions of $a$, $\ell$, $C$, and $t_{\max}$, we get that
\[
        \text{Value} \ge\left(\frac{1}{6} - \frac{\log\log\log\left(2^{\frac{t_{\min}}{\vmax}}\right) + \log(84e)+2}{6\cdot\log\log\left(2^{\frac{t_{\min}}{\vmax}}\right)}\right)\cdot \frac{\opt}{224e}
\]
    Finally, by the monotonocity of $g(x) = \frac{-\log\log\log(x) - \log(84e)-2}{6\cdot\log\log(x)}$, along with the fact that $\frac{t_{\min}}{\vmax} \ge  10^7,$ we conclude that $\text{Value}  \ge \frac{1}{4032e}\cdot \opt$.
\end{proof}

\section{Putting Everything Together and Removing the Large Market Assumption}\label{sec:MainTheorem}




Below we present the proof of the competitiveness of \hyperref[alg:LMMECH]{LM-Mechanism}.

\begin{theorem}\label{theorem:lm}
    Assuming that $\opt > 10^{7}\cdot \vmax$, \hyperref[alg:LMMECH]{LM-Mechanism} is $O(1)$-competitive.
\end{theorem}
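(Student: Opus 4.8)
The plan is to assemble the period-level guarantees already established for \hyperref[alg:LMMECH]{LM-Mechanism} into a single lower bound on its expected collected value, and then read off the competitive ratio. Since individual rationality and universal truthfulness hold by construction for any posted-price mechanism, and budget-feasibility \emph{with certainty} follows from Lemma~\ref{BUDGET} (the total expenditure is capped at $B/10 < B$), the only remaining task is to show that $\opt/\Exp[\ALG]$ is bounded by a constant, i.e.\ that $\Exp[\ALG] \ge \opt / O(1)$.

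First I would invoke the large-market hypothesis $\opt > 10^7\,\vmax$ to confirm that the conditional value guarantee is actually available. This hypothesis is precisely what Propositions~\ref{prop:good2} and~\ref{prop:good3} consume: it forces $t_{\min}/\vmax \ge 10^7$ on the interval returned by Period~2, which certifies that every round tested during \hyperref[alg:binarysearch]{BinarySearch} and \hyperref[alg:exploitation]{Exploitation} is `good'. With goodness in hand, Lemma~\ref{lemma:compratio} applies, so conditional on the success event $\Event = \Event_1\cap\Event_2\cap\Event_3\cap\Event_4\cap\Event_5$, \hyperref[alg:exploitation]{Exploitation} (and hence \hyperref[alg:LMMECH]{LM-Mechanism}) collects value at least $\opt/(4032e)$.

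Next I would combine this conditional bound with the success probability. By Lemma~\ref{lemma:GoodEvent}, $\Prob{\Event} \ge 1/20$, and by the standing convention that the mechanism aborts with value $0$ (never negative) whenever $\Event$ is violated, the contribution of the complementary event to $\Exp[\ALG]$ is nonnegative. Hence, by the law of total expectation,
\[
\Exp[\ALG] \;\ge\; \Prob{\Event}\cdot \Exp[\ALG \mid \Event] \;\ge\; \frac{1}{20}\cdot\frac{\opt}{4032e}\,.
\]
Rearranging yields $\opt/\Exp[\ALG] \le 20\cdot 4032\,e = 80640\,e = O(1)$, which is the claimed constant competitive ratio.

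I do not expect a genuine obstacle in this final assembly: the analytical heavy lifting has all been discharged by the period-level lemmas, and what remains is essentially a one-line application of the law of total expectation together with nonnegativity of the objective. The only points that require care are bookkeeping ones. I must ensure that the probability bound of Lemma~\ref{lemma:GoodEvent} and the conditional value bound of Lemma~\ref{lemma:compratio} are taken over one common probability space — the joint randomness of the secretary arrival order and the mechanism's internal coin flips — so that forming the product $\Prob{\Event}\cdot\Exp[\ALG\mid\Event]$ is legitimate; and I must thread the hypothesis $\opt > 10^7\,\vmax$ through to each proposition (\ref{prop:good2},~\ref{prop:good3}) that relies on it, so that the goodness of all rounds, and therefore Lemma~\ref{lemma:compratio}, genuinely holds under the conditioning.
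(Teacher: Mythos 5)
Your proof is correct and follows essentially the same route as the paper's: condition on the success event $\Event$, apply Lemma~\ref{lemma:GoodEvent} to get $\Prob{\Event} \geq 1/20$, apply Lemma~\ref{lemma:compratio} for the conditional value bound $\opt/(4032e)$, and multiply to obtain $\Exp[\ALG] \geq \frac{1}{20}\cdot\frac{\opt}{4032e}$. Your additional bookkeeping (budget feasibility via Lemma~\ref{BUDGET}, threading the large-market hypothesis through Propositions~\ref{prop:good2} and~\ref{prop:good3}) is implicit in the paper's argument and does not constitute a different approach.
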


\begin{proof}
In the four periods of \hyperref[alg:LMMECH]{LM-Mechanism}, we assume that the event $\Event$ occurs. Lemma~\ref{lemma:GoodEvent} proves that $\Event$ happens with probability at least $1/20$. Conditional on the event $\Event$, the \hyperref[alg:LMMECH]{LM-Mechanism} achieves a competitive ratio of $4032e$ due to Lemma~\ref{lemma:compratio}. Overall, the expected total value of \hyperref[alg:LMMECH]{LM-Mechanism} is  at least $\frac{1}{20} \cdot \frac{1}{4032e}\cdot \opt$.
\end{proof}

Finally, we need to remove the Large Market Assumption that $\opt > 10^{7}\cdot \vmax$. To this end, we present the mechanism PostedPrices, whose pseudocode can be found in Mechanism~\ref{alg:PPMECH}.
\hyperref[alg:PPMECH]{PostedPrices} invokes three different mechanisms, with a constant probability each, which deal with instances of different market size (i.e., magnitude of $\frac{\opt}{\vmax}$). We are now ready to prove our main result:

\begin{algorithm}[t]
\caption{PostedPrices}\label{alg:PPMECH}
\begin{algorithmic}[1]
\STATE \textbf{Input}: Set of agents $\mathcal{N}$, budget $B$
\STATE With probability $0.1$, execute Dynkin's algorithm on $\N$.
\STATE With probability $0.1$, execute \hyperref[alg:MMMECH]{MediumMarket} on the set of agents $\N$ with budget $B$.
\STATE With probability $0.8$, execute \hyperref[alg:LMMECH]{LM-Mechanism} on the set of agents $\N$ with budget $B$.
\end{algorithmic}
\end{algorithm} 

\begin{algorithm}[t]
\caption{MediumMarket}\label{alg:MMMECH}
\begin{algorithmic}[1]
\STATE \textbf{Input}: Set of agents $\mathcal{N}$, budget $B$
\STATE Learn $\vmax$ using \hyperref[alg:Vmax]{LearningMaxValue}

\STATE Pick uniformly at random a threshold $t \in \left\{2^6 \cdot \vmax,2^8 \cdot \vmax,\dots,2^{23} \cdot \vmax\right\}$

\STATE Apply linear pricing with threshold $t$ and budget $B$ to the rest of the agent sequence.

\end{algorithmic}
\end{algorithm}

\begin{theorem}
    \hyperref[alg:PPMECH]{PostedPrices} is a universally truthful $O(1)$-competitive posted-price mechanism for online budget-feasible procurement auctions with secretary agent arrivals and  monotone submodular buyer's valuations.
\end{theorem}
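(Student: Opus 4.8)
The plan is to obtain the structural guarantees almost for free by inheritance from the three component mechanisms, and then to establish the $O(1)$ competitive ratio by a case analysis on the market size $r := \opt/\vmax$ (recall $\vmax \le \opt \le n\vmax$), showing that in each of three regimes the component mechanism that \hyperref[alg:PPMECH]{PostedPrices} invokes with constant probability already collects $\Omega(\opt)$ in expectation. Summing over the three branches then gives $\Exp[\alg] = \Omega(\opt)$ regardless of $r$.

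First I would dispatch the structural properties. Each of Dynkin's algorithm, \hyperref[alg:MMMECH]{MediumMarket} and \hyperref[alg:LMMECH]{LM-Mechanism} only ever makes take-it-or-leave-it offers and never solicits a bid, so each is a posted-price mechanism and is therefore individually rational and (obviously) truthful, as noted after the definition of posted-price mechanisms. Each is also budget-feasible with certainty: \hyperref[alg:LMMECH]{LM-Mechanism} by Lemma~\ref{BUDGET} together with its abort conditions, \hyperref[alg:MMMECH]{MediumMarket} because its linear prices are charged against the running budget, and Dynkin because it makes a single offer of at most $B$. As a fixed convex combination ($0.1+0.1+0.8=1$) of these three universally truthful, budget-feasible posted-price mechanisms, \hyperref[alg:PPMECH]{PostedPrices} inherits all three properties; this part is routine.

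It remains to bound the competitive ratio, splitting on $r$ with overlapping boundaries chosen to cover $[1,n]$. In the \emph{large-market} regime $r > 10^7$, Theorem~\ref{theorem:lm} gives that \hyperref[alg:LMMECH]{LM-Mechanism} is $O(1)$-competitive, and since it runs with probability $0.8$, $\Exp[\alg] \ge 0.8\cdot\Omega(\opt) = \Omega(\opt)$. In the \emph{small-market} regime $r \le 2^8$, Dynkin (probability $0.1$) observes a $1/e$-fraction and then offers price $B$ to the first agent exceeding all previously seen values; with probability at least $1/e$ this is the global maximizer, which is affordable since $\vmax \le \opt$, so it accepts and contributes $\vmax \ge \opt/2^8 = \Omega(\opt)$. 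In the \emph{medium-market} regime $2^8 \le r \le 10^7$, \hyperref[alg:MMMECH]{MediumMarket} (probability $0.1$) first learns $\vmax$ correctly with probability $\Omega(1)$ and then draws one of the $O(1)$ thresholds $\{2^6\vmax,\ldots,2^{23}\vmax\}$ uniformly at random; because consecutive thresholds differ by a factor of at most $4$ and the set spans the whole medium range, with probability $\Omega(1)$ the drawn threshold $t$ satisfies $\opt/16 \le t \le \opt/4$. Conditioned on this I would reuse the accounting of Property~\ref{property:b}: either the running budget is exhausted, in which case the total payment $(B/t)\,f(S)\approx B$ forces $f(S)\ge t-\vmax = \Omega(\opt)$, or every surviving agent is offered its linear price and the declined-agent bound gives $f(S) > f(S^\ast \cap \Nrem) - t$. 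Since each agent survives the learning phase independently with constant probability, concavity of the multilinear extension along $\mathbf 1_{S^\ast}$ yields $\Exp[f(S^\ast\cap\Nrem)] = \Omega(\opt)$, hence $f(S^\ast\cap\Nrem) = \Omega(\opt)$ with probability bounded away from $0$; in both cases $f(S)=\Omega(\opt)$. Multiplying the $\Omega(1)$ probabilities by the $0.1$ invocation probability keeps $\Exp[\alg] = \Omega(\opt)$, which is the claimed $O(1)$-competitiveness.

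The main obstacle, and where I would spend the most care, lives entirely in the medium-market regime. First I must verify that the three regimes leave no gap: that Dynkin's constant-factor guarantee reaches up to the cutoff $2^8$, and that the largest \hyperref[alg:MMMECH]{MediumMarket} threshold $2^{23}\vmax$ together with the $10^7\vmax$ cutoff of \hyperref[alg:LMMECH]{LM-Mechanism} genuinely cover the overlap region $[2^{23},10^7]\cdot\vmax$ (note $2^{23} < 10^7 < 2^{25}$, so the factor-$4$ spacing must be checked against the exact boundary constants). Second, I must make the single-shot linear-pricing guarantee fully precise — in particular that exhausting the budget is itself a favorable event, and that a random surviving subset retains an $\Omega(\opt)$-value budget-feasible solution with probability bounded away from $0$ — which is exactly the $O(1)$-threshold specialization of the $O(\log n)$-competitive mechanism of \cite[Section~4]{Bada2012} and reuses the ``dense random subset'' reasoning behind Lemma~\ref{lemma:roundProb} and Property~\ref{property:b}.
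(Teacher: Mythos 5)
Your overall architecture is exactly the paper's: a three-way case split on $\opt/\vmax$ matched to the three sub-mechanisms of \hyperref[alg:PPMECH]{PostedPrices}, with truthfulness and budget feasibility inherited from the components, and the medium regime handled by a single-shot linear-pricing guarantee applied to the agents surviving the learning phase. The paper's Case~2 does this via Lemma~\ref{lem:threshold}: with probability $0.9$ the remaining agents retain value at least $\opt/4$, and with probability $1/18$ the threshold lands in $[\opt/16,\opt/8]$, which yields value at least $\bigl(\tfrac14-\tfrac{\vmax}{\opt}\bigr)\cdot\tfrac{\opt}{4}>\tfrac{\opt}{18}$.

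There is, however, a genuine gap in your medium-regime constants, and it sits precisely at your regime boundary. You take the threshold window to be $[\opt/16,\opt/4]$ (you need the upper end $\opt/4$ so that the smallest threshold $2^6\vmax$ is reachable when $\opt=2^8\vmax$, your small/medium cutoff), but your declined-agents bound is $f(S)>f(S^\ast\cap\Nrem)-t$, and the retained value you can cheaply certify is only $\Omega(\opt)$ with an unspecified constant. If that constant is $1/4$ --- which is what the natural argument and the paper's proof give --- then at $\opt=2^8\vmax$ the only admissible threshold is $2^6\vmax=\opt/4$ and the bound reads $f(S)>\opt/4-\opt/4=0$: vacuous. ``$\Omega(\opt)$ minus $\opt/4$ is $\Omega(\opt)$'' simply does not follow unless the hidden retained-value constant strictly exceeds your window's upper constant, and you never track this. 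The paper avoids the cancellation by choosing the window $[\opt/16,\opt/8]$, whose upper end is strictly below the retained value $\opt/4$ (equivalently, $h\le 1/2$ in Lemma~\ref{lem:threshold}), which in turn forces the small-market cutoff up to $1024=2^{10}$ so that $2^6\vmax\le\opt/8$ remains reachable; Dynkin then covers everything below $1024\vmax$ at a loss of only $1024e$. Your proposal is repairable in either of two ways --- raise the boundary to $2^9$ or $2^{10}$ and shrink the window to $[\opt/16,\opt/8]$ as the paper does, or prove the sharper claim that $f(S^\ast\cap\Nrem)\ge\opt/2$ with probability bounded away from zero (which needs a reverse-Markov argument on $\Exp[f(S^\ast\cap\Nrem)]\ge\tfrac23(\opt-\vmax)$, taking care that conditioning on the max-value agent being in the learning sample removes that agent from $\Nrem$) --- but as written the step fails.
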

\begin{proof}
    Universal truthfulness follows from the fact that 
    \hyperref[alg:PPMECH]{PostedPrices} is a probability distribution over three universally truthful (and posted-price) mechanisms. 
    
    Regarding the competitive ratio of \hyperref[alg:PPMECH]{PostedPrices}, we consider the following cases:
    \begin{enumerate}
        \item $\opt<1024\cdot \vmax$: Then Dynkin's algorithm is executed with probability $0.1$ and collects an expected value of at least $\frac{1}{10}\cdot\frac{\opt}{1024 \cdot e}$\,.
        \item $1024\cdot\vmax \le\opt<10^7\cdot \vmax$: Then \hyperref[alg:LMMECH]{MedianMarket} is executed with probability $0.1$. With probability at least $0.9$, the optimal value for the set of agents not used for the calculation of $\vmax$ by \hyperref[alg:Vmax]{LearningMaxValue}, in step~2, is at least $\opt / 4$. With probability at least $1/18$, the threshold $t$ selected is such that $\frac{\opt}{16}\le t \le \frac{\opt}{8}$\,. Thus, by Lemma~\ref{lem:threshold}, the chosen threshold $t$ results in at least $(\frac{1}{4}-\frac{\vmax}{\opt})\cdot \frac{\opt}{4}>\frac{\opt}{18}$ value. Overall the expected total value is at least:
        \begin{equation}
            \frac{1}{10}\cdot \frac{9}{10}\cdot \frac{1}{18} \cdot\frac{\opt}{18} = \frac{\opt}{3600}
        \end{equation}
        \item $10^7\cdot \vmax\le\opt$: Then, \hyperref[alg:LMMECH]{LM-Mechanism} is executed with probability $0.8$, which by Theorem \ref{theorem:lm} results in an expected total value of at least
        a $0.8\cdot\frac{1}{20}\cdot \frac{\opt}{4032e}$\,.
    \end{enumerate}
 Putting the three cases together, we conclude that the competitive ratio of \hyperref[alg:PPMECH]{PostedPrices} is $O(1)$. We note that at many places, we prioritized simplicity over trying to optimize the resulting constant.
\end{proof}

\section{Conclusions}

In this work, we have introduced a randomized, constant-competitive posted-price mechanism for online procurement auctions, thus resolving the main open question of Badanidiyuru, Kleinberg and Singer \cite{Bada2012}. Our findings demonstrate that in online procurement auctions, sequential posted-price mechanisms can be as powerful, in terms of the asymptotics of their competitive ratio, as seal-bid mechanisms, whose performance has been extensively studied. We note that despite our mechanism is elaborate to design and analyze, its interface with the agents is simple and transparent. 

In addition to improving the constants in our analysis, an interesting direction for further research is whether our refined adaptive search can be generalized and applied to other posted-price mechanisms, such as combinatorial auctions with submodular bidders \cite{Assadi019,AssadiS20,DuttingKL24}, towards  improved competitive ratios. 

\section{Acknowledgement}
This work has been partially supported by project MIS 5154714 of the National Recovery and Resilience Plan Greece 2.0 funded by the European Union under the NextGenerationEU Program.

\bibliographystyle{IEEEtran} 
\bibliography{references} 

\appendix

\clearpage\appendix
\section*{\Large Appendix}

\section{Detailed Proof of Theorem~\ref{thm:non-adaptive-lower-bound}}
\label{sec:app:lower_bound}


We consider a family of instances $\mathcal{I} = \{ I_0, I_1, ..., I_{\log n} \}$. Every instance consists of $n$ value-private cost pairs $(v,c)$, where for instance $I_i$, $v = 1$ and $c = B/2^i$, for every $i \in \{0,1,...,\log n\}$ (i.e., $I_i$ consists of $n$ value-cost pairs $(1, B/2^i)$).

We construct a probability distribution $\mathcal{D}_{\mathcal{I}}$ over the family $\mathcal{I}$ of instances such that the expected value of any deterministic non-adaptive linear-price mechanism on $\mathcal{D}_{\mathcal{I}}$ is at most $1$, while the expected value of the optimal solution on $\mathcal{D}_{\mathcal{I}}$ is $\frac{\log n}{2}+1$. Then, we apply Yao's principle, which extends the lower bound to any randomized mechanism (\cite[Chapter~8.4]{BoroYan1998} and \cite{Yao1977}). 

Specifically, for every $i = 0, 1, 2, \ldots \log(n)-1$, we let $I_{i}$ appear with probability $p_{i} = 1/2^{i+1}$ in $\mathcal{D}_{\mathcal{I}}$. We let $I_{\log n}$ appear with probability $p_{\log n} = 1/n$ in $\mathcal{D}_{\mathcal{I}}$, so that the sum of $p_i$'s is equal to $1$. The optimal solution in each $I_{i}$ is $2^i$. Therefore, the expected value of the optimal solution is 
\[ \Exp[\opt] = \frac{n}{n} + \sum_{i=0}^{\log(n)-1} \frac{2^i}{2^{i+1}}
=\frac{\log n}{2}+1\,.
\]

We let now fix any deterministic non-adaptive linear-price mechanism $\alg$. For simplicity and without loss of generality, we assume that $\alg$ is aware of $v$ in advance (clearly, this can only work in $\alg$'s favor). 
Since $\alg$ knows $v$ and does not get any information about the agents' private costs before it makes its first offer, $\alg$ can be regarded as selecting an arbitrary fixed index $i \in \{ 0, 1, \ldots, \log n\}$ and an arbitrary fixed threshold $\hat{t} \in [2^i, 2^{i+1})$. Then, $\alg$ determines its linear-price, which is the same for every agent. 

For simplicity, we let $\alg$'s fixed price be $B/2^i$, for some fixed $i \in \{ 0, 1, \ldots, \log n\}$. Then, $\alg$'s value for instance $I_j$ is $0$, if $j < i$, and $2^i$ for all $j = i, \ldots, \log n$. Therefore, the expected value of $\alg$'s solution is 
\[ \Exp[\alg] = \frac{2^i}{n} + 2^i \sum_{j=i}^{\log(n)-1} \frac{1}{2^{j+1}}
= \frac{2^i}{n} + 2^i \left(\frac{1}{2^i} - \frac{1}{n}\right) = 1\,.
\]
Then, the theorem follows from Yao's principle. \qed

\section{Proofs Missing from Section~\ref{sec:rounds}} 
\subsection{Partitioning into Rounds}\label{A1}
Suppose we wish to construct $\kappa$ rounds with length parameters $\{a_1,a_2,\dots,a_{\kappa}\}$. We will analytically compute the participation probability of the $\kappa$-th round.

\begin{proof}[Proof of Lemma \ref{lemma:RoundPro}:]
    Consider an agent $b$ and let $R_\kappa$ denote the set of agents in the $\kappa$-th round, then:
    \begin{equation}
        \Prob{b \in R_\kappa} = \sum_{x_{1} = 0}^{n_1}\Prob{x_{1}}\sum_{x_{2} = 0}^{n_{2}}\dots\sum_{x_{\kappa-1} = 0}^{n_{\kappa-1}}\Prob{x_{\kappa-1}}\sum_{x_{\kappa} = 1}^{n_{\kappa}}\Prob{x_{\kappa}}\cdot \Prob{b\in R_\kappa| x_1,\dots x_{\kappa}}.
    \end{equation}
    Where $n_1=n$ and $n_i = n-\sum_{j=1}^{i-1}x_i$. By using the fact that the order of the agents follows a uniform distribution  we get that:
    \begin{equation}
        \Prob{b\in R_\kappa} =\sum_{x_{1} = 0}^{n_1}\Prob{x_{1}}\sum_{x_{2} = 0}^{n_{2}}\dots\sum_{x_{\kappa-1} = 0}^{n_{\kappa-1}}\Prob{x_{\kappa-1}}\sum_{x_{\kappa} = 1}^{n_{\kappa}}\Prob{x_{\kappa}}\cdot \frac{x_{\kappa}}{n}.
    \end{equation}
     By definition, $\sum_{x_{\kappa} = 1}^{n_{\kappa}}\Prob{x_{\kappa}}\cdot x_{\kappa} = \E{}{\kappa\text{-th round length}} = a_{\kappa}\cdot n_{\kappa}$, thus:
    \begin{equation}
        \Prob{b\in R_\kappa} =\sum_{x_{1} = 0}^{n_1}\Prob{x_{1}}\sum_{x_{2} = 0}^{n_{2}}\dots\sum_{x_{\kappa-1} = 0}^{n_{\kappa-1}}\Prob{x_{\kappa-1}}\frac{n_{\kappa}}{n}\cdot a_{\kappa}.
    \end{equation}
     Observing that $n_{\kappa} = n_{\kappa-1} - x_{\kappa-1}$ and applying the same argument as before repeatedly, we finally get:
        \begin{equation}
        \Prob{b\in R_\kappa} =(1-a_1)\cdot(1-a_2)\cdots (1-a_{\kappa-1})\cdot a_{\kappa}.
    \end{equation}

    Now it remains to prove the independency among agents.

    Suppose 2 agents $b_1,b_2$, then:
    \begin{equation}
        \Prob{\{b_1,b_2\}\in R_\kappa} =\sum_{x_{1} = 0}^{n_1-2}\Prob{x_{1}}\sum_{x_{2} = 0}^{n_{2}-2}\dots\sum_{x_{\kappa-1} = 0}^{n_{\kappa-1}-2}\Prob{x_{\kappa-1}}\sum_{x_{\kappa} = 2}^{n_{\kappa}}\Prob{x_{\kappa}}\cdot \Prob{\{b_1,b_2\}\in R_\kappa|x_1,\dots,x_{\kappa}}.
    \end{equation}
    Since agent order follows a uniform distribution and the round lengths are drawn from a binomial distribution, we get:
    \begin{equation}
        \Prob{\{b_1,b_2\}\in R_\kappa} =\sum_{x_{1} = 0}^{n_1-2}\Prob{x_{1}}\sum_{x_{2} = 0}^{n_{2}-2}\dots\sum_{x_{\kappa-1} = 0}^{n_{\kappa-1}-2}\Prob{x_{\kappa-1}}\sum_{x_{\kappa} = 2}^{n_{\kappa}}\begin{pmatrix}
            n_\kappa\\
            x_\kappa
        \end{pmatrix}\cdot a_\kappa ^{x_\kappa}\cdot (1-a_\kappa)^{n_\kappa - x_\kappa}\cdot  \frac{\begin{pmatrix}
            x_\kappa\\
            2
        \end{pmatrix}}{\begin{pmatrix}
            n\\
            2
        \end{pmatrix}}.
    \end{equation}
    Using the identity: 
    \begin{equation}
        \begin{pmatrix}
            n_\kappa\\
            x_\kappa
        \end{pmatrix} \cdot \begin{pmatrix}
            x_\kappa\\
            2
        \end{pmatrix} = \begin{pmatrix}
            n_\kappa\\
            2
        \end{pmatrix}\cdot \begin{pmatrix}
            n_\kappa-2\\
            x_\kappa-2
        \end{pmatrix},
    \end{equation} we obtain:
    \begin{equation}
        \Prob{\{b_1,b_2\}\in R_\kappa} =\sum_{x_{1} = 0}^{n_1-2}\Prob{x_{1}}\sum_{x_{2} = 0}^{n_{2}-2}\dots\sum_{x_{1} = 0}^{n_{\kappa-1}-2}\Prob{x_{\kappa-1}}\cdot \frac{\begin{pmatrix}
            n_\kappa\\
            2
        \end{pmatrix}}{\begin{pmatrix}
            n\\
            2
        \end{pmatrix}}\sum_{x_{\kappa} = 2}^{n_{\kappa}}\begin{pmatrix}
            n_\kappa-2\\
            x_\kappa-2
        \end{pmatrix}\cdot a_\kappa ^{x_\kappa}\cdot (1-a_\kappa)^{n_\kappa - x_\kappa}.
    \end{equation}
    Noting that:
    \begin{equation}
        \sum_{x_{\kappa} = 2}^{n_{\kappa}}\begin{pmatrix}
            n_\kappa-2\\
            x_\kappa-2
        \end{pmatrix}\cdot a_\kappa ^{x_\kappa}\cdot (1-a_\kappa)^{n_\kappa - x_\kappa} = a_\kappa ^2 \cdot \sum_{x_{\kappa} = 0}^{n_{\kappa}-2}\begin{pmatrix}
            n_\kappa-2\\
            x_\kappa
        \end{pmatrix}\cdot a_\kappa ^{x_\kappa}\cdot (1-a_\kappa)^{n_\kappa-2 - x_\kappa} =a_\kappa ^2.
    \end{equation}
    and thus:
    \begin{equation}
        \Prob{\{b_1,b_2\}\in R_\kappa} =\sum_{x_{1} = 0}^{n_1-2}\Prob{x_{1}}\sum_{x_{2} = 0}^{n_{2}-2}\dots\sum_{x_{\kappa-1} = 0}^{n_{\kappa-1}-2}\Prob{x_{\kappa-1}}\cdot \frac{\begin{pmatrix}
            n_\kappa\\
            2
        \end{pmatrix}}{\begin{pmatrix}
            n\\
            2
        \end{pmatrix}}\cdot a_{\kappa}^2.
    \end{equation}
    Now we focus on the last sum,
    \begin{equation}\label{eq:2}
        \sum_{x_{\kappa-1} = 0}^{n_{\kappa-1}-2}\Prob{x_{\kappa-1}}\cdot \frac{\begin{pmatrix}
            n_{\kappa}\\
            2
        \end{pmatrix}}{\begin{pmatrix}
            n\\
            2
        \end{pmatrix}}\cdot a_{\kappa}^2 =a_{\kappa}^2\sum_{x_{\kappa-1} = 0}^{n_{\kappa-1}-2} \begin{pmatrix}
            n_{\kappa-1}\\
            x_\kappa
        \end{pmatrix}\cdot a_{\kappa-1} ^{x_{\kappa-1}}\cdot (1-a_{\kappa-1})^{n_{\kappa-1} - x_{\kappa-1}}\cdot \frac{\begin{pmatrix}
            n_\kappa\\
            2
        \end{pmatrix}}{\begin{pmatrix}
            n\\
            2
        \end{pmatrix}}.
    \end{equation}
    We notice that:
    \begin{equation}\label{eq:1}
        \begin{pmatrix}
            n_{\kappa-1}\\
            x_\kappa
        \end{pmatrix}\cdot \frac{\begin{pmatrix}
            n_\kappa\\
            2
        \end{pmatrix}}{\begin{pmatrix}
            n\\
            2
        \end{pmatrix}} = \begin{pmatrix}
            n_{\kappa-1}-2\\
            x_{\kappa-1}
        \end{pmatrix}\cdot \frac{\begin{pmatrix}
            n_{\kappa-1}\\
            2
        \end{pmatrix}}{\begin{pmatrix}
            n\\
            2
        \end{pmatrix}}, 
    \end{equation}
    and by applying  (\ref{eq:1}) to  (\ref{eq:2}) we get:
    \begin{equation}\label{eq:4}
        \sum_{x_{\kappa-1} = 0}^{n_{\kappa-1}-2}\Prob{x_{\kappa-1}}\cdot \frac{\begin{pmatrix}
            n_\kappa\\
            2
        \end{pmatrix}}{\begin{pmatrix}
            n\\
            2
        \end{pmatrix}}\cdot a_{\kappa}^2 =a_{\kappa}^2\frac{\begin{pmatrix}
            n_{\kappa-1}\\
            2
        \end{pmatrix}}{\begin{pmatrix}
            n\\
            2
        \end{pmatrix}}\cdot \sum_{x_{\kappa-1} = 0}^{n_{\kappa-1}-2}\begin{pmatrix}
            n_{\kappa-1}-2\\
            x_{\kappa-1}
        \end{pmatrix}\cdot a_{\kappa-1} ^{x_{\kappa-1}}\cdot (1-a_{\kappa-1})^{n_{\kappa-1} - x_\kappa}.
    \end{equation}
    Next we notice that:
    \begin{equation}\label{eq:3}
        \sum_{x_{\kappa-1} = 0}^{n_{\kappa-1}-2}\begin{pmatrix}
            n_{\kappa-1}-2\\
            x_{\kappa-1}
        \end{pmatrix}\cdot a_{\kappa-1} ^{x_{\kappa-1}}\cdot (1-a_{\kappa-1})^{n_{\kappa-1} - x_\kappa} = (1-a_{\kappa -1})^{2},
    \end{equation} and combining (\ref{eq:4}), (\ref{eq:3}) we get that:
    \begin{equation}
        \sum_{x_{\kappa-1} = 0}^{n_{\kappa-1}-2}\Prob{x_{\kappa-1}}\cdot \frac{\begin{pmatrix}
            n_\kappa\\
            2
        \end{pmatrix}}{\begin{pmatrix}
            n\\
            2
        \end{pmatrix}}\cdot a_{\kappa}^2 =\frac{\begin{pmatrix}
            n_{\kappa-1}\\
            2
        \end{pmatrix}}{\begin{pmatrix}
            n\\
            2
        \end{pmatrix}}\cdot (1-a_{\kappa-1})^2\cdot a_{\kappa}^2
    \end{equation}
    It is now evident that by following the same procedure for the rest of the sums we get that:
    \begin{equation}
        \Prob{\{b_1,b_2\}\in R_\kappa} = a_{\kappa}^2\cdot\prod_{i=1}^{\kappa-1}(1-a_i)^2 = \Prob{b_1\in R_\kappa}\cdot\Prob{b_2\in R_\kappa} 
    \end{equation}
    This confirms that agent selections are independent.
\end{proof}

Lemma \ref{lemma:RoundPro} also proves that rounds are indeed equivalent with a random set of the whole input where each agent is included with the same probability.\\

An important property of the lengths drawn in Period 2 that will turn our very useful in the rest of our analysis is the following:
\begin{lemma}\label{lemma:SeqBound}
    For every $i\in \{2,\dots, T\}$ (where $T$ is the number of intervals in the partitioning of $[\vmax, n\cdot\vmax]$ used in Section~\ref{sec:SecondPeriod}), it holds that:
    \[a_i\cdot \gamma_i \le \frac{1}{i^{10}}\]
\end{lemma}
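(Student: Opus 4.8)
The plan is to substitute the definitions from Section~\ref{sec:SecondPeriod} and reduce the claim to the growth rate of the normalized thresholds $x_i := t_i/\vmax$. Using $a_i = 81\,e\,\vmax/t_i$ and $\gamma_i = \tfrac{3}{2}\log(t_i/\vmax)$, the product becomes $a_i\gamma_i = \tfrac{243e}{2}\cdot\tfrac{\log x_i}{x_i}$, so the statement is equivalent to $\tfrac{243e}{2}\cdot\tfrac{\log x_i}{x_i}\le i^{-10}$. From the construction of $\mathcal{T}$ I would extract two facts: that $x_2 = 10^7$, and that $x_i = 2^{x_{i-1}}x_{i-1}$ for every $i$ with $3\le i\le T$ (this is exactly the range where the recursion $t_i = 2^{t_{i-1}/\vmax}t_{i-1}$ applies), which in particular gives $\log x_i = x_{i-1}+\log x_{i-1}$.

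I would then proceed by induction on $i$. For the base case $i=2$, I would verify directly that $\tfrac{243e}{2}\cdot\tfrac{\log 10^7}{10^7}\le 2^{-10}$: since $\log_2 10^7 < 24$, the left-hand side is below $8\times 10^{-4}$, whereas $2^{-10} > 9.7\times 10^{-4}$, so the inequality holds with a little room.

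For the inductive step I would control the \emph{ratio} of consecutive terms rather than the terms themselves. The recursion yields
\[
\frac{a_i\gamma_i}{a_{i-1}\gamma_{i-1}} = \frac{\log x_i}{\log x_{i-1}}\cdot\frac{x_{i-1}}{x_i} = \left(1 + \frac{x_{i-1}}{\log x_{i-1}}\right)\cdot 2^{-x_{i-1}}.
\]
Because $x_{i-1}\ge x_2 = 10^7 \ge 2$, we have $\log x_{i-1}\ge 1$, hence $1 + x_{i-1}/\log x_{i-1}\le 1 + x_{i-1}\le 2x_{i-1}$, so the whole ratio is at most $2x_{i-1}\,2^{-x_{i-1}}$, which is astronomically small and in particular below $(2/3)^{10}$. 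Since $(\tfrac{i-1}{i})^{10}\ge (2/3)^{10}$ for all $i\ge 3$, we obtain $\tfrac{a_i\gamma_i}{a_{i-1}\gamma_{i-1}}\le (\tfrac{i-1}{i})^{10}$, and combining with the inductive hypothesis $a_{i-1}\gamma_{i-1}\le (i-1)^{-10}$ gives $a_i\gamma_i\le (\tfrac{i-1}{i})^{10}(i-1)^{-10} = i^{-10}$, as required.

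The only delicate point is the base case, which is essentially tight: the gap between $\approx 7.7\times 10^{-4}$ and $2^{-10}\approx 9.8\times 10^{-4}$ is precisely what forces the particular constant $10^7$ in the definition of $t_2$. The inductive step, by contrast, is very comfortable: the tetration-type growth $x_i = 2^{x_{i-1}}x_{i-1}$ makes the ratio decay like $2^{-x_{i-1}}$ up to a polynomial factor, dwarfing the target $(\tfrac{i-1}{i})^{10}$ by an enormous margin, so no fine estimation is needed there.
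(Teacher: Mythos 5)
Your proof is correct, and it is actually more careful than the paper's own argument, which follows the same initial substitution but then takes a different route: it invokes the ``fact'' that $\log(t_i/\vmax) < t_{i-1}/\vmax$ to dominate $a_i\gamma_i$ by the auxiliary sequence $h_{1,i} = 122e\cdot t_{i-1}/t_i = 122e\cdot 2^{-t_{i-1}/\vmax}$, verifies $h_{1,2} < 2^{-10}$ numerically, and then simply asserts that $h_{1,i}$ ``drops to $0$ faster'' than $i^{-10}$. Two things distinguish your write-up. First, the paper's intermediate inequality is false as stated: at $i=2$ one has $\log(t_2/\vmax)=\log(10^7)\approx 23.3$ while $t_1/\vmax = 1$ (so $a_2\gamma_2\approx 7.7\times10^{-4}$ is \emph{not} bounded by $122e/10^7\approx 3.3\times 10^{-5}$), and for $i\ge 3$ the recursion gives $\log(t_i/\vmax)= t_{i-1}/\vmax + \log(t_{i-1}/\vmax)$, which exceeds $t_{i-1}/\vmax$, though only by a harmless factor of at most $2$. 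You sidestep both issues by checking the base case $i=2$ against the exact quantity $\tfrac{243e}{2}\log(10^7)/10^7$ and by using the exact identity $\log x_i = x_{i-1}+\log x_{i-1}$ in the range $i\ge 3$ where it is valid. Second, where the paper leaves the growth-rate comparison at the level of ``it becomes evident,'' your ratio induction --- $\tfrac{a_i\gamma_i}{a_{i-1}\gamma_{i-1}} \le 2x_{i-1}2^{-x_{i-1}} \le (2/3)^{10} \le \bigl(\tfrac{i-1}{i}\bigr)^{10}$, combined with the inductive hypothesis --- makes that step airtight. In short, your route buys full rigor (and corrects the paper's intermediate inequality) at essentially no extra cost; the paper's buys brevity.
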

\begin{proof}
    We first notice the following:
    \begin{equation}
        \begin{split}
            a_i \cdot \gamma_i = 81e\cdot \frac{\vmax}{t_i}\cdot \frac{3}{2}\cdot\log\left(\frac{t_i}{\vmax}\right)
        \end{split}
    \end{equation}
    Applying the fact that $\log\left(\frac{t_i}{\vmax}\right) < \frac{t_{i-1}}{\vmax}$, we get:
    \begin{equation}
         a_i \cdot \gamma_i < 122e\cdot \frac{t_{i-1}}{t_i}
    \end{equation}
    Now consider the following sequences of real numbers:
    \begin{equation}
        \begin{split}
        h_{1,i} &= 122e\cdot \frac{t_{i-1}}{t_i}\\
        h_{2,i} &= \frac{1}{i^{10}}
        \end{split}
    \end{equation}
    we have that:
    \begin{equation}
      h_{1,2} = \frac{122e}{10^7} < \frac{1}{1024} = h_{2,2}  
    \end{equation}
    Taking into account that $h_{1,i} = 122e\cdot \frac{t_{i-1}}{t_i} = 122e\cdot 2^{-\frac{t_{i-1}}{\vmax}}$ it becomes evident that $h_1$ drops to $0$ faster than $h_2$ and thus the desired inequality is proved.
\end{proof}

\subsection{The Proof of Lemma~\ref{lemma:PART}}
\label{sec:PART}

We are now ready to prove Lemma~\ref{lemma:PART}
\begin{proof}[Proof of Lemma \ref{lemma:PART}:]\label{proof:PART}
    First let $R$ be the set of all the rounds that \hyperref[alg:LMMECH]{LM-Mechanism} uses.
    Using lemma \ref{lemma:RoundPro} we get the following trivial bound for the participation probability $q_j$ of the $j-$th round:
    \begin{equation}
        q_j \ge a_j\cdot \prod_{i\in R}(1-a_i) 
    \end{equation}
    All that remains is to bound $\prod_{i\in R}(1-a_i)$.
    We break this product into 3 parts:
    \begin{enumerate}
        \item The first term of the product is $\frac{2}{3}$ and it corresponds to $\tau$.
        \item Next comes the following product of terms from the tests of the second period:
        \begin{equation}
        \begin{split}
        \prod_{i=2}^{T}(1-a_i)^{\gamma_i} \ge \prod_{i=2}^{T} (1-a_i\cdot \gamma_i), 
        \end{split}
        \end{equation}
        where the last inequality comes from taking the first order Taylor approximation of the convex function $g(x) = (1-x)^{\log(\frac{c}{x})}$.
        Using Lemma \ref{lemma:SeqBound}, we get:
        \begin{equation}
           \prod_{i=2}^{T}(1-a_i)^{\gamma_i} \ge \prod_{i=2}^{T} \left(1-\frac{1}{i^{10}}\right) \ge \frac{9}{10}.
        \end{equation} 
        \item Lastly, we have the terms from the BinarySearch and Exploitation:
        \begin{equation}
            \prod_{i=1}^{2\cdot\ell} (1-a) = (1-a)^{2\cdot\ell}\ge \frac{1}{e}.
        \end{equation}
    \end{enumerate}
    Combining all of the above together:
    \begin{equation}
        q_j \ge a_j\cdot \prod_{i\in R}(1-a_i) \ge \frac{6}{10e}\cdot a_j
    \end{equation}
\end{proof}
Before we delve deeper into the properties of our rounds we first need to establish a basic property of Threshold-Based mechanisms.

\subsection{The Proof of Lemma \ref{lemma:roundProb}}\label{proof4.5}

\begin{lemma}[\cite{Bada2012}]\label{lem:threshold}
Let $k = \frac{\opt}{v_{\max}}$. For any $ \hat{t} = h \cdot \opt$, such that $h\in (0,1)$, the following holds:
    \begin{equation}
            \text{PostedPrices}(\hat{t},\mathcal{N},B) \ge \min\left\{(1-h)\cdot \opt,\ \  \left(h-\frac{1}{k}\right)\cdot \opt\right\}\,,
        \end{equation} where $\text{PostedPrices}(\hat{t},\mathcal{N},B)$ denotes the expected value gathered by applying linear pricing with threshold $\hat{t}$ to the set $\N$ of agents with budget $B$.
\end{lemma}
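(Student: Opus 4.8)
The plan is to run the standard analysis of threshold-based linear pricing, reusing the template already established in the proof of Property~\ref{property:b}. I would fix an arbitrary arrival order, let $S$ denote the final solution produced by linear pricing with the fixed threshold $\hat{t} = h\cdot\opt$, and let $S^\ast$ be a fixed optimal solution, so that $f(S^\ast) = \opt$ and $\sum_{b\in S^\ast} c_b \le B$. The first step is to record two elementary facts. First, because the marginals telescope, the total amount paid equals $\sum_{i\in S} f_{S_i}(i)\cdot B/\hat{t} = f(S)\cdot B/\hat{t}$, where $S_i$ is the partial solution just before $i$ is admitted; equivalently, $f(S)$ equals the total budget spent times $\hat{t}/B$. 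Second, by submodularity and $f(\emptyset)=0$, every price offered satisfies $p_i = f_{S_i}(i)\,B/\hat{t} \le \vmax\,B/\hat{t}$, since $f_{S_i}(i)\le f(\{i\})\le\vmax$.

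With these in hand, the argument splits on whether the budget constraint ever becomes binding. \textbf{Case A:} if at some point the remaining budget $\Brem$ drops below the price required for the current offer, then at that moment $\Brem < \vmax\,B/\hat{t}$, so the budget spent up to that point exceeds $B - \vmax\,B/\hat{t}$; multiplying by $\hat{t}/B$ shows that the value collected so far is at least $\hat{t} - \vmax = (h-1/k)\cdot\opt$, and monotonicity of $f$ guarantees the final value is no smaller. \textbf{Case B:} if the budget never binds, then every agent $b\in S^\ast\setminus S$ must have declined its offer, i.e. $c_b > f_{S_b}(b)\,B/\hat{t} \ge f_S(b)\,B/\hat{t}$, where the second inequality uses submodularity and $S_b\subseteq S$. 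Summing over these agents and invoking budget-feasibility of $S^\ast$ gives $\sum_{b\in S^\ast\setminus S} f_S(b) < (\hat{t}/B)\sum_{b\in S^\ast} c_b \le \hat{t}$; then monotonicity and submodularity yield $\opt = f(S^\ast)\le f(S\cup S^\ast)\le f(S) + \sum_{b\in S^\ast\setminus S} f_S(b) < f(S)+\hat{t}$, i.e. $f(S) > (1-h)\cdot\opt$.

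Since the two cases are exhaustive, in every realization $f(S)\ge \min\{(1-h)\opt,\ (h-1/k)\opt\}$. Because this bound is deterministic and holds pointwise for every arrival order, it immediately transfers to the expected value $\text{PostedPrices}(\hat{t},\N,B)$, completing the proof. I expect the only real subtlety to lie in the bookkeeping of Case A: pinning down that ``the budget binds'' yields the stated lower bound on the spent amount through the uniform price cap $p_i\le\vmax\,B/\hat{t}$, and verifying that the value accumulated by the binding moment is never subsequently decreased. The rejection argument of Case B is essentially identical to the one already carried out for Property~\ref{property:b}, so no new difficulty is anticipated there.
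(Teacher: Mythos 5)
Your proof is correct and takes essentially the same route as the paper's: the identical two-case split (budget binds versus every agent of $S^\ast\setminus S$ declining), the same telescoping identity equating spent budget with collected value times $B/\hat{t}$, and the same rejection-plus-submodularity argument giving $f(S) > (1-h)\cdot\opt$ in the non-binding case. The only cosmetic difference is in the budget-binding case, where you invoke the uniform price cap $\vmax\,B/\hat{t}$ while the paper bounds the cost $c(b)\le f(b)\,B/\hat{t}$ of the specific excluded optimal agent; both yield the same bound $(h-\frac{1}{k})\cdot\opt$.
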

\begin{proof}
    Suppose that we use \(\hat{t}=h \cdot \opt\) for the entire input. Let \( S \) be the allocation obtained using \(\hat{t}\).
    We consider the following cases:
    \begin{enumerate}
        \item There is an agent $b$ such that $b \in S^{*}\setminus S$ cause of insufficient Budget. In this case:
        \begin{equation}
            \begin{split}
                &\sum_{b_{j}\in S}p_{j} = \sum_{b_{j}\in S} \frac{B}{\hat{t}}(f_{S_{j}}(b_{j})) = \frac{B\cdot f(S)}{\hat{t}} = B\cdot \frac{f(S)}{\opt \cdot h} \text{ and } \\
            &c(b) \le \frac{B}{\hat{t}} \cdot f(b) \le B\cdot \frac{1}{\opt\cdot h} \cdot \frac{\opt}{k} = B\cdot \frac{1}{h\cdot k}
            \end{split}
        \end{equation}
        and thus
        \begin{equation}\begin{split}
            &B < c(b) + \sum_{b_{j}\in S}p_{j} \le B\cdot \frac{f(S)}{\opt \cdot h} + B\cdot \frac{1}{h\cdot k} \Longrightarrow \\
            &\opt\cdot \left(h-\frac{1}{k}\right) < f(S)
        \end{split}
        \end{equation}
        \item Otherwise: $c(b)> \frac{B}{\hat{t}}\cdot f_{S}(b)$ for every $b\in S^{*}\setminus S$ and thus
        \begin{equation}
            \sum_{b\in S^{*}\setminus S} f_S(b) = \sum_{b\in S^{*}\setminus S} \frac{f_{S}(b)}{c(b)} \cdot c(b) \le \sum_{b\in S^{*}\setminus S}\frac{\hat{t}}{B}\cdot c(b)\le \hat{t}  
        \end{equation}
        and it follows that \begin{equation}
        \begin{split}
                        &\opt-f(S) \le f(S^{*}\cup S) -f(S) \le \sum_{b\in S^{*}\setminus S} f_S(b)\le \hat{t} \Longrightarrow \\
            &f(S) \ge (1-h)\cdot \opt
        \end{split}
        \end{equation}
    \end{enumerate}
    
    Finally, we conclude that $f(S)\ge \min\{(1-h),(h-\frac{1}{k})\}\cdot \opt$. 
\end{proof}

We notice that the last bound is maximized for $\hat{h} = \frac{k+1}{2k}$ and thus we consider it to be exactly that for the rest of our analysis.

Let $\hat{t}=\hat{h}\cdot \opt = \frac{k+1}{2k}\cdot \opt$ and $S$ be the allocation we get by applying the threshold $\hat{t}$ to the whole input, for some permutation of the agents. Denote $\{s_i\}_{i=1}^{|S|}$ the elements of $S$ in order of appearance and let $S_i = \bigcup_{j=1}^{i}\{s_j\}$. Now, letting $w_i = f_{S_{i-1}}(s_i)$, we can bound the cost of all subsets $X\subseteq S$ the following way:
 
\begin{lemma}\label{lem4.6}
     For every subset $X\subseteq S$ it is true that:
   \begin{equation}
       \begin{split}
           c(X)\le \frac{\sum_{i: s_i \in X}w_i}{\hat{h}\cdot \opt}\cdot B
       \end{split}
   \end{equation}
\end{lemma}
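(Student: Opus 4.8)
The plan is to read the bound directly off the acceptance rule of linear pricing, so that no real work beyond bookkeeping is needed. First I would recall that an agent enters $S$ only when it accepts its take-it-or-leave-it offer, i.e.\ only when its cost is at most the posted price. Next I would pin down the price that was offered to the $i$-th selected agent $s_i$: at the moment $s_i$ arrives, the current partial solution is exactly $S_{i-1}$, since any agents arriving in between that were rejected do not alter the solution. Consequently the price offered to $s_i$ equals $f_{S_{i-1}}(s_i)\,B/\hat{t} = w_i B/\hat{t}$, and with $\hat{t} = \hat{h}\cdot\opt$ this becomes $w_i B/(\hat{h}\cdot\opt)$.

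Because $s_i \in S$, it accepted its offer, which gives the single per-agent inequality $c(s_i) \le w_i B/(\hat{h}\cdot\opt)$. To finish, I would sum this inequality over all $i$ with $s_i \in X$ and factor out the common constant $B/(\hat{h}\cdot\opt)$:
\[
 c(X) \;=\; \sum_{i:\, s_i \in X} c(s_i) \;\le\; \frac{B}{\hat{h}\cdot\opt}\sum_{i:\, s_i \in X} w_i,
\]
which is precisely the claimed bound. Note that this uses only that membership in $S$ certifies acceptance, so it holds irrespective of whether any later agent is eventually dropped for budget reasons.

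I do not expect a genuine obstacle here: the statement is an immediate consequence of the individual rationality of the accept decision under linear pricing, combined with the additivity of cost over agents. The only subtlety worth stating explicitly is the identification of the marginal value priced at $s_i$'s arrival with $w_i = f_{S_{i-1}}(s_i)$, which relies on the observation that rejected agents preceding $s_i$ leave $S_{i-1}$ — and hence the marginal on which $s_i$'s price is based — unchanged.
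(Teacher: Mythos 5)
Your proof is correct and is essentially the paper's own argument: both read off the per-agent inequality $c(s_i)\le w_i\,B/\hat{t}$ from the fact that an accepted agent's cost is at most its linear price $f_{S_{i-1}}(s_i)\,B/\hat{t}$, and then sum over the agents of $X$ with $\hat{t}=\hat{h}\cdot\opt$. Your explicit remark that rejected agents arriving between selections leave the partial solution (and hence the priced marginal $w_i$) unchanged is a detail the paper leaves implicit, but it is the same proof.
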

\begin{proof}
    The cost of each agent in $S$ is at most $\frac{f_{S_i}(s_i)}{\hat{t}}\cdot B = \frac{w_i}{\hat{t}}\cdot B.$ Consequently, for each subset $X$ of $S$ we get: \begin{equation}
        \begin{split}
            c(X) = \sum_{j:b_j \in X}c(b_j) 
            \le \sum_{i:s_i \in X}\frac{w_i}{\hat{t}}\cdot B . 
        \end{split}
    \end{equation}
\end{proof}
The following lemma showcases a way to lower bound the value of subsets $X \subseteq S$.
\begin{lemma}\label{lem4.7}
    Let $I$ be the set of indices of elements $s_i\in S$ sampled in a round. Then, the following holds:
    \begin{equation}
        \begin{split}
            f\lp(\bigcup_{j\in I}\{s_j\}\rp) \ge \sum_{j\in I} w_j
        \end{split}
    \end{equation}
\end{lemma}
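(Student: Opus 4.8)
The plan is to exploit the telescoping structure of $f$ together with submodularity. First I would enumerate the sampled indices in increasing order as $I = \{j_1 < j_2 < \cdots < j_k\}$, and introduce the partial unions $T_\ell = \{s_{j_1}, \ldots, s_{j_\ell}\}$, with $T_0 = \emptyset$ and $T_k = \bigcup_{j \in I}\{s_j\}$. By the standard telescoping decomposition of a set function along this chain, we have
\[
f(T_k) = \sum_{\ell=1}^{k} f_{T_{\ell-1}}(s_{j_\ell})\,.
\]

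The key step is to lower bound each marginal $f_{T_{\ell-1}}(s_{j_\ell})$ by the corresponding $w_{j_\ell} = f_{S_{j_\ell - 1}}(s_{j_\ell})$. Since every element of $T_{\ell-1}$ is some $s_{j_r}$ with $r < \ell$, hence with index $j_r < j_\ell$, all these elements lie in $S_{j_\ell - 1} = \{s_1, \ldots, s_{j_\ell-1}\}$; that is, $T_{\ell-1} \subseteq S_{j_\ell - 1}$. Submodularity (the non-increasing marginal values property) applied to this containment then yields $f_{T_{\ell-1}}(s_{j_\ell}) \ge f_{S_{j_\ell - 1}}(s_{j_\ell}) = w_{j_\ell}$.

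Summing the inequality over $\ell = 1, \ldots, k$ and combining with the telescoping identity gives $f(T_k) \ge \sum_{\ell=1}^{k} w_{j_\ell} = \sum_{j \in I} w_j$, which is exactly the claimed bound. I do not anticipate a genuine obstacle here, since the argument is a direct telescoping-plus-submodularity computation; the only subtlety worth flagging is that the weights $w_i$ are defined relative to the \emph{global} ordering of $S$ (the marginal of $s_i$ against everything preceding it in $S$), whereas inside the sampled subset we accumulate marginals only against previously sampled elements. The containment $T_{\ell-1} \subseteq S_{j_\ell - 1}$ is precisely what reconciles these two notions, and submodularity converts it into the desired one-sided inequality.
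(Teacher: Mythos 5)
Your proof is correct and follows essentially the same route as the paper: the paper telescopes $f(J)$ in the order induced by $S$, writing $f(J) = \sum_{j\in I} f_{S_{j-1}\cap J}(s_j)$, and your partial unions $T_{\ell-1}$ are exactly the sets $S_{j_\ell-1}\cap J$, with the same submodularity step $f_{T_{\ell-1}}(s_{j_\ell}) \ge f_{S_{j_\ell-1}}(s_{j_\ell}) = w_{j_\ell}$. The only difference is that your write-up makes the containment $T_{\ell-1} \subseteq S_{j_\ell - 1}$ explicit, which the paper leaves implicit.
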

\begin{proof}
   Let $J = \bigcup_{j\in I}\{s_j\}$. The value of the set $J$ is the same regardless of the order of appearance of elements $s_j$. Thus we can suppose that they appear in the same order as in $S$ and thus:
   \begin{equation}
       \begin{split}
           f(J) = \sum_{j\in I} f_{S_{j-1}\cap J}(s_j) \ge \sum_{j\in I} f_{S_{j-1}}(s_j) = \sum_{j\in I} w_j  
       \end{split}
   \end{equation}
\end{proof}

Combining the lemmas \ref{lem4.6} and \ref{lem4.7} we get the following bound for the cost of all subsets $X\subseteq S$.

\begin{corollary}\label{cor4.2}
    For every $X\subseteq S$ it holds that:
    \begin{equation}
        \begin{split}
            c(X) \le \frac{2k}{k+1}\cdot \frac{f(X)}{OPT}\cdot B
        \end{split}
    \end{equation}
\end{corollary}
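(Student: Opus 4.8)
The plan is to simply chain the two preceding lemmas, since the quantity $\sum_{i:s_i\in X} w_i$ appearing on the right-hand side of Lemma~\ref{lem4.6} is precisely the quantity that Lemma~\ref{lem4.7} lower-bounds by $f(X)$. First I would fix an arbitrary $X \subseteq S$ and write it as $X = \bigcup_{j\in I}\{s_j\}$ for the corresponding index set $I \subseteq \{1,\dots,|S|\}$. Applying Lemma~\ref{lem4.6} directly gives
\[
c(X) \le \frac{\sum_{j\in I} w_j}{\hat{h}\cdot\opt}\cdot B.
\]

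Next I would invoke Lemma~\ref{lem4.7} to replace the numerator. Although that lemma is phrased in terms of the index set of elements sampled in a round, its proof only uses the submodularity of $f$ together with the identity $X = \bigcup_{j\in I}\{s_j\}$, so it applies verbatim to \emph{any} subset $X\subseteq S$ and yields $\sum_{j\in I} w_j \le f(X)$. Substituting this into the displayed inequality (the composition is in the correct direction, since the $w_j$ are nonnegative and we are upper-bounding $c(X)$) gives $c(X) \le \frac{f(X)}{\hat{h}\cdot\opt}\cdot B$. Finally I would plug in the value $\hat{h} = \frac{k+1}{2k}$ fixed earlier in the analysis, so that $1/\hat{h} = \frac{2k}{k+1}$, turning the last bound into the claimed $c(X) \le \frac{2k}{k+1}\cdot\frac{f(X)}{\opt}\cdot B$.

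There is essentially no obstacle in this argument beyond bookkeeping: the only point genuinely worth flagging is that Lemma~\ref{lem4.7}, despite being stated for elements ``sampled in a round,'' is really a statement about an arbitrary union of elements of $S$, so that the two middle quantities match exactly. Once that is observed, the corollary is a one-line composition of Lemma~\ref{lem4.6} and Lemma~\ref{lem4.7} followed by the substitution of the optimized constant $\hat{h}$.
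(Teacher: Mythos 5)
Your proposal is correct and is exactly the paper's argument: the paper introduces Corollary~\ref{cor4.2} with the phrase ``Combining the lemmas \ref{lem4.6} and \ref{lem4.7}'' and leaves the one-line composition implicit, which you carry out faithfully, including the substitution $\hat{h} = \frac{k+1}{2k}$ fixed earlier in the analysis. Your remark that Lemma~\ref{lem4.7}, though phrased for elements sampled in a round, applies to any subset of $S$ because its proof uses only submodularity is a correct and worthwhile clarification of the paper's implicit reasoning.
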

Let, once again, $R$ be the set of rounds. Suppose that $X$ is the subset of agents of $S$ that is included in a round $j$ with length parameter $a_j$, such that $a_j \cdot \frac{\opt}{\vmax}\ge81\, e$. Now
consider the following random variables for $j\in R, i \in [|S|]$:
\begin{equation}
    \begin{split}
            X_i^{j} &= \begin{cases} 
      w_i & \text{ w.p.  } \frac{6}{10e}\cdot a_j\\
      0 & \text{ otherwise  } 
   \end{cases}\\
   X^{j}&= \sum_{i=1}^{|S|}X_{i}^{j}
    \end{split}
\end{equation}

 By lemmas ~\ref{lemma:RoundPro},~\ref{lem4.7} it is evident that $X^{j}$ is a lower bound of the value $f(X)$ contained in round $j$. Furthermore, we can prove the following:

\begin{lemma}\label{lemmaTool}
    Let $\opt_j$ be the value of an optimal solution for an $a_j$-round $j$, when restricted to budget $B_j = 3\cdot C\cdot a_j\cdot B$ budget, where $a_j$ is good with respect to $\opt$. If $C\cdot a_j\cdot \opt\le X^j$ then 
    \begin{equation}
        \begin{split}
            \opt_j \ge C\cdot a_j\cdot \opt
        \end{split}
    \end{equation}
\end{lemma}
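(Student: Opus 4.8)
The plan is to exhibit an explicit budget-feasible solution for round $j$ whose value already meets the target $C \cdot a_j \cdot \opt$, so that $\opt_j$ can only be at least as large. The starting point is the observation made immediately before the statement: $X^j$ is a lower bound on $f(X)$, where $X \subseteq S$ is the part of the greedy allocation $S$ falling inside round $j$. Combined with the hypothesis $C \cdot a_j \cdot \opt \le X^j$, this gives $f(X) \ge C \cdot a_j \cdot \opt$. The difficulty is that $X$ itself need not respect the round budget $B_j = 3 \cdot C \cdot a_j \cdot B$: its value, and hence via Corollary~\ref{cor4.2} its cost, could overshoot the target arbitrarily, so I cannot simply offer $\opt_j \ge f(X)$.

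To control this, I would pass to a minimal prefix. Listing the elements of $X$ in their order of appearance in $S$, let $Y \subseteq X$ be the shortest prefix with $f(Y) \ge C \cdot a_j \cdot \opt$; such a prefix exists because $f(X) \ge C \cdot a_j \cdot \opt$ and $f$ is monotone. By minimality, removing the last element $s_\ell$ of $Y$ drops the value below the target, so $f(Y) = f(Y \setminus \{s_\ell\}) + f_{Y \setminus \{s_\ell\}}(s_\ell) < C \cdot a_j \cdot \opt + \vmax$, where I used submodularity together with the fact that every agent value is at most $\vmax$ (otherwise \hyperref[alg:TestTHRESHOLD]{TestThreshold} would have aborted and the round would contribute nothing).

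Next I would bound the cost of $Y$. Corollary~\ref{cor4.2} gives $c(Y) \le \frac{2k}{k+1} \cdot \frac{f(Y)}{\opt} \cdot B < 2 \cdot \frac{C \cdot a_j \cdot \opt + \vmax}{\opt} \cdot B = 2\bigl(C \cdot a_j + \tfrac{\vmax}{\opt}\bigr) B$, using $k \ge 1$ so that $\frac{2k}{k+1} < 2$. Here the goodness hypothesis enters decisively: $a_j \ge 81\,e\,\vmax/\opt$ means $\vmax/\opt \le a_j/(81e)$, whence $c(Y) < 2\,a_j\bigl(C + \tfrac{1}{81e}\bigr) B$. The elementary inequality $2/(81e) \le C$ (which holds comfortably for the constant $C$ in use, since $2/81 < 1/8 < 1/7$) then gives $c(Y) < 3 \cdot C \cdot a_j \cdot B = B_j$.

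Consequently $Y$ is a feasible solution for round $j$ under budget $B_j$ with $f(Y) \ge C \cdot a_j \cdot \opt$, so $\opt_j \ge f(Y) \ge C \cdot a_j \cdot \opt$, as required. The only genuinely delicate point is this cost bound: all of $X$ may be too expensive, and the role of both the minimal-prefix trick, which caps the value overshoot at a single $\vmax$, and the goodness condition, which absorbs that extra $\vmax$ into the round budget, is precisely to force the cost to fit within $B_j$.
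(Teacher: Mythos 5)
Your proof is correct and follows essentially the same route as the paper's: both take a prefix of the part of the threshold-allocation $S$ lying in round $j$ whose value lands in $[C\cdot a_j\cdot \opt,\ C\cdot a_j\cdot \opt + \vmax]$, bound its cost via Corollary~\ref{cor4.2}, and absorb the $\vmax$ overshoot into the round budget $B_j = 3\cdot C\cdot a_j\cdot B$ using the goodness condition $a_j \ge 81\,e\,\vmax/\opt$ together with $2/(81e) \le C$. Your explicit minimality-plus-submodularity justification of the overshoot bound is a slightly more careful rendering of a step the paper only asserts.
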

\begin{proof}
Let $X$ be the set of agents included in round $j$. Define $S_{j} = X\cap S$ and denote by $s_{j,i}$ the $i-$th element of $S_j$ in the order of appearance within the round. We can select the first $\mu$ agents $ \{s_{j,i}\}_{i=1}^{\mu}$ (for some $\mu \in \mathbb{N}$) such that their total value $f( \{s_{j,i}\}_{i=1}^{\mu})$ does not exceed $C\cdot a_j\cdot \opt$ by more than $\vmax.$ More specifically, there exists $\mu \in \mathbb{N}$ such that:
\begin{equation}
    \begin{split}
        C\cdot a_j\cdot \opt &\le f\left(\bigcup_{i=1}^{\mu}\{s_{j,i}\}\right ) \\
        &\le C\cdot a_j\cdot \opt+ \vmax\ \\
        &= \left(C\cdot a_j+ \frac{\vmax}{\opt}\right)\cdot \opt 
    \end{split}
\end{equation}
Which, by Corollary \ref{cor4.2}, implies:
\begin{equation}
            c\lp(\bigcup_{i=1}^{\mu}\{s_{j,i}\}\rp) \le 2\cdot\left (C\cdot a_j \cdot + \frac{\vmax}{\opt}\right)\cdot B\le 3\cdot C\cdot a_j\cdot B,
\end{equation}
where the last inequality is due to $a_j$ being good with respect to $\opt$, i.e., $a_j \geq 81\,e\,\vmax\,/\,\opt$.
\end{proof}

Lemma~\ref{lemmaTool} enables us to analyze the value of $X^j$ to determine whether a round is dense, rather than relying on $\opt_j$. Now we can use concentration bounds on the random variable $X_{j}$ to bound the probability that the $j$-th round is dense.

We now ready to conclude this section with presenting the proof of Lemma~\ref{lemma:roundProb}.

\begin{proof}[Proof of lemma \ref{lemma:roundProb}:]

     By Lemma \ref{lemmaTool}:
    \begin{equation}
        \begin{split}
            \Prob{\text{round $j$ is not dense}} \le \Prob{X^{j}<C\cdot a_j \cdot \opt}
        \end{split}
    \end{equation}
    We notice that $\frac{1}{2}\cdot \frac{6}{10e} \cdot \frac{k-1}{2k} \ge C$ (for $k\ge 10^7$) and thus:
    \begin{equation}
        \Prob{\text{round $j$ is not dense}} \le \Prob{X^{j}<\frac{1}{2}\cdot \frac{6}{10e}\cdot a_j \cdot \frac{k-1}{2k}\cdot \opt}
    \end{equation}
    Now using the fact that $\E{}{X^j} = q_j\cdot f(S)$, along with Lemma \ref{lemma:RoundPro}:
        \begin{equation}
        \Prob{\text{round $j$ is not dense}} \le\Prob{X^{j}\le \frac{1}{2}\cdot \E{}{X^{j}}} 
    \end{equation}
    We can now apply Chernoff bound (setting $\delta = \frac{1}{2}$) on random variable $X^{j}$ to get:
    \begin{equation}
        \begin{split}
           \Prob{\text{round $j$ is not dense}} &\le \Prob{X^{j}\le (1-\delta)\E{}{X^{j}}}\\
            &\le \exp\lp\{\frac{-\delta^2 \E{}{X^{j}}}{2\cdot \max{X_{i}^{j}}}\rp\}
        \end{split}
    \end{equation}
    Using the fact that $\E{}{X^{j}} =  \frac{6}{10e}\cdot a_j\cdot f(S) \ge \frac{6}{10e}\cdot a_j\cdot \frac{k-1}{2k}\cdot \opt$:
    \begin{equation}
        \Prob{\text{round $j$ is not dense}} \le \exp\lp\{\frac{-\delta^2 \cdot \frac{6}{10e}\cdot a_j\cdot\frac{k-1}{2k}\cdot \opt}{2\vmax}\rp\} 
    \end{equation}
    After calculations we reach the following:
    \begin{equation}
        \Prob{\text{round $j$ is not dense}} \le\exp\lp\{\frac{- a_j \cdot  (k-1)}{27e}\rp\} \le 0.1
    \end{equation}
    The last inequality follows from the fact that $a_j$ is good with respect to $\opt$, i.e., $a_j \geq 81\,e\,\vmax\,/\,\opt$, and from the hypothesis that $k\ge 10^7$.
\end{proof}

\subsection{The Proof of Lemma~\ref{lemma:PhaseProbability}}\label{negdep}

To extend our analysis to dense phases we should first pay attention to the dependency between the probability of success of consecutive rounds. Indeed, we should notice that conditional to the previous round having failed the next round is more likely to succeed. Such a dependency bares a significant resemblance to the Balls and Bins problem \cite{johnson1977urn}. Below we present how to utilize results from \cite{negativedepend} to prove that concentration of mass arguments still hold in this setting.

In the following we will present a complete proof of the validity of Chernoff bound on the following random variables:

\begin{equation}
    \begin{split}
    Q_i &= \begin{cases} 
      1 ,& \text{ if round } $i$ \text{ is dense}\\
      0, & \text{ otherwise  } 
   \end{cases}\\
    \end{split}
\end{equation}

Firstly, we need to present some important notions from \cite{negativedepend}.

\begin{definition}[Negative Association]
   Let $\mathbf{X} := (X_1, \dots, X_n)$ be a vector of random variables.

\begin{enumerate}
    \item [$(-A)$] The random variables, $\mathbf{X}$ are \emph{negatively associated} if for any two disjoint index sets, $I, J \subseteq [n]$,
\[
\mathbb{E}[f(X_i, i \in I)g(X_j, j \in J)] \leq \mathbb{E}[f(X_i, i \in I)]\mathbb{E}[g(X_j, j \in J)]
\]
for all functions $f : \mathbb{R}^{|I|} \to \mathbb{R}$ and $g : \mathbb{R}^{|J|} \to \mathbb{R}$ that are both non-decreasing or both non-increasing.
\end{enumerate}
\end{definition}

\begin{lemma}[Zero-One Lemma for ($-A$)] \label{ZeroOne}
    If $X_1, \dots, X_n$ are zero-one random variables such that $\sum_i X_i = 1$, then $X_1, \dots, X_n$ satisfy ($-A$).
\end{lemma}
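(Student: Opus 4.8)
The plan is to exploit the extreme rigidity forced by the constraint $\sum_i X_i = 1$: the random vector $\mathbf{X}$ takes values only among the standard basis vectors $\{e_1,\ldots,e_n\}$, so exactly one coordinate equals $1$ and all others are $0$. Write $p_i = \Pr[X_i = 1]$, so that $\sum_{i=1}^n p_i = 1$. The crucial consequence is that for \emph{disjoint} index sets $I,J$, the events ``the unique $1$ lies in $I$'' and ``the unique $1$ lies in $J$'' are mutually exclusive; in particular, whenever some $X_i=1$ with $i\in I$, every $X_j$ with $j\in J$ is forced to $0$, and vice versa. This disjointness is precisely what will later kill all cross terms.

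First I would reduce $f$ and $g$ to a handful of values. Since among the coordinates $(X_i)_{i\in I}$ at most one is $1$, the value $f(X_i, i\in I)$ depends only on which coordinate of $I$ (if any) equals $1$. Let $f_{\mathbf 0}$ be the value of $f$ on the all-zeros argument and $f_i$ its value when coordinate $i\in I$ equals $1$. Monotonicity enters here: since the indicator of a single coordinate dominates the zero vector coordinatewise, $f$ non-decreasing gives $f_i \ge f_{\mathbf 0}$ for every $i\in I$, while $f$ non-increasing gives $f_i \le f_{\mathbf 0}$; either way all differences $f_i - f_{\mathbf 0}$ share one common sign. Define $g_{\mathbf 0}, g_j$ and $P_I=\sum_{i\in I}p_i$, $P_J=\sum_{j\in J}p_j$ analogously.

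Next I would compute the three expectations directly by conditioning on the location of the unique $1$:
\begin{align*}
\mathbb{E}[f] &= \sum_{i\in I} p_i f_i + (1-P_I)\,f_{\mathbf 0}, \qquad
\mathbb{E}[g] = \sum_{j\in J} p_j g_j + (1-P_J)\,g_{\mathbf 0},\\
\mathbb{E}[fg] &= g_{\mathbf 0}\sum_{i\in I} p_i f_i + f_{\mathbf 0}\sum_{j\in J} p_j g_j + (1-P_I-P_J)\,f_{\mathbf 0}\,g_{\mathbf 0},
\end{align*}
where the formula for $\mathbb{E}[fg]$ uses mutual exclusivity: when the $1$ lands in $I$ we have $g=g_{\mathbf 0}$, when it lands in $J$ we have $f=f_{\mathbf 0}$, and otherwise both take their zero-argument values. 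A short algebraic manipulation then yields the clean factorization
\[
\mathbb{E}[f]\,\mathbb{E}[g] - \mathbb{E}[fg]
= \Bigl(\sum_{i\in I} p_i (f_i - f_{\mathbf 0})\Bigr)\Bigl(\sum_{j\in J} p_j (g_j - g_{\mathbf 0})\Bigr).
\]
Because $f$ and $g$ are both non-decreasing or both non-increasing, the two factors carry the same sign, so their product is non-negative, which is exactly the defining inequality of $(-A)$.

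The arithmetic is routine once the structure is set up, so the only real content — and the step I would guard most carefully — is recognizing that the covariance collapses to the product $\bigl(\sum_{i\in I} p_i(f_i-f_{\mathbf 0})\bigr)\bigl(\sum_{j\in J}p_j(g_j-g_{\mathbf 0})\bigr)$; after that, the sign of each factor is pinned down entirely by monotonicity. I would also note the degenerate boundary cases (e.g.\ $P_I=1$, which forces $P_J=0$), where one factor vanishes and the inequality reduces to equality, consistent with the coordinates outside $I$ being deterministically zero there.
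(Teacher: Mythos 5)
Your proof is correct, and it is worth noting that the paper itself does not prove this lemma at all: it imports it verbatim from the cited reference on negative dependence, so your argument is a genuinely self-contained addition rather than a variant of anything in the text. Your computation is sound: conditioning on the location of the unique $1$ gives exactly the three expectations you wrote, and the covariance does collapse to
\[
\mathbb{E}[f]\,\mathbb{E}[g] - \mathbb{E}[fg]
= \Bigl(\sum_{i\in I} p_i (f_i - f_{\mathbf 0})\Bigr)\Bigl(\sum_{j\in J} p_j (g_j - g_{\mathbf 0})\Bigr),
\]
whose factors share a sign by monotonicity, in either the non-decreasing or non-increasing case. For comparison, the standard proof in the literature the paper cites is shorter but less informative: one first observes that the inequality is invariant under adding constants to $f$ and $g$, so one may normalize $f(\mathbf 0) = g(\mathbf 0) = 0$; then for non-decreasing $f,g$ one has $f(X_I)\ge 0$ and $g(X_J)\ge 0$ pointwise, yet $f(X_I)\,g(X_J) = 0$ identically because the unique $1$ cannot lie in both disjoint sets, so $\mathbb{E}[fg] = 0 \le \mathbb{E}[f]\,\mathbb{E}[g]$ (and symmetrically for non-increasing $f,g$, where $\mathbb{E}[f]\,\mathbb{E}[g]$ is a product of two non-positive numbers). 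Your explicit factorization buys more than that argument — it identifies the covariance exactly, not just its sign, and makes the degenerate equality cases transparent — at the cost of a page of algebra that the centering trick avoids.
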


\begin{proposition}\label{proptrans}
    \begin{enumerate}
    \item If $\mathbf{X}$ and $\mathbf{Y}$ satisfy ($-A$) and are mutually independent, then the augmented vector $(\mathbf{X}, \mathbf{Y}) = (X_1, \dots, X_n, Y_1, \dots, Y_m)$ satisfies ($-A$).
    
    \item Let $\mathbf{X} := (X_1, \dots, X_n)$ satisfy ($-A$). Let $I_1, \dots, I_k \subseteq [n]$ be disjoint index sets, for some positive integer $k$. For $j \in [k]$, let $h_j : \mathbb{R}^{|I_j|} \to \mathbb{R}$ be functions that are all non-decreasing or all non-increasing, and define $Y_j := h_j(X_i, i \in I_j)$. Then the vector $\mathbf{Y} := (Y_1, \dots, Y_k)$ also satisfies ($-A$). That is, non-decreasing (or non-increasing) functions of disjoint subsets of negatively associated variables are also negatively associated.
\end{enumerate}
\end{proposition}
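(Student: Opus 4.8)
The plan is to prove both closure properties by reducing each to the defining inequality $(-A)$ for the single vector whose negative association is already assumed, exploiting the fact that monotonicity in a fixed direction is preserved both under composition and under partial integration.

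For part~(1) I would argue by conditioning on $\mathbf{Y}$. Fix disjoint index sets $I, J \subseteq [n+m]$ and monotone (say both non-decreasing) functions $f, g$ of the coordinates indexed by $I$ and $J$ respectively. Split each index set into its $\mathbf{X}$-block and its $\mathbf{Y}$-block, $I = I_X \sqcup I_Y$ and $J = J_X \sqcup J_Y$, so that $I_X \cap J_X = \emptyset$ and $I_Y \cap J_Y = \emptyset$. For a fixed realization $\mathbf{Y} = \mathbf{y}$, the maps $\mathbf{x} \mapsto f$ and $\mathbf{x} \mapsto g$ are non-decreasing functions of the disjoint subsets $I_X, J_X$ of $\mathbf{X}$, so the assumed $(-A)$ property of $\mathbf{X}$ gives $\mathbb{E}_{\mathbf{X}}[fg \mid \mathbf{Y} = \mathbf{y}] \le \tilde f(\mathbf{y})\,\tilde g(\mathbf{y})$, where $\tilde f(\mathbf{y}) := \mathbb{E}_{\mathbf{X}}[f \mid \mathbf{Y} = \mathbf{y}]$ and similarly for $\tilde g$; here I use independence of $\mathbf{X}$ and $\mathbf{Y}$ to drop the conditioning inside the $\mathbf{X}$-expectation. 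The key observation is that $\tilde f$ and $\tilde g$ are non-decreasing functions of the disjoint subsets $I_Y, J_Y$ of $\mathbf{Y}$, because integrating a coordinatewise non-decreasing function against a fixed measure preserves monotonicity in the remaining coordinates. Applying the $(-A)$ property of $\mathbf{Y}$ to $\tilde f, \tilde g$ and then taking the outer expectation over $\mathbf{Y}$ yields $\mathbb{E}[fg] \le \mathbb{E}_{\mathbf{Y}}[\tilde f\,\tilde g] \le \mathbb{E}_{\mathbf{Y}}[\tilde f]\,\mathbb{E}_{\mathbf{Y}}[\tilde g] = \mathbb{E}[f]\,\mathbb{E}[g]$, the last equality again using independence. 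The both-non-increasing case is identical, with every inequality in the same direction.

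For part~(2) I would use that a composition of coordinatewise-monotone maps is monotone. Take disjoint $A, B \subseteq [k]$ and monotone functions $f, g$ of the coordinates $(Y_a)_{a\in A}$ and $(Y_b)_{b\in B}$. Substituting $Y_j = h_j(X_i : i \in I_j)$ expresses $F(\mathbf{X}) := f((Y_a)_{a\in A})$ as a function of $(X_i : i \in \bigcup_{a\in A} I_a)$ and $G(\mathbf{X}) := g((Y_b)_{b\in B})$ as a function of $(X_i : i \in \bigcup_{b\in B} I_b)$; since the $I_j$ are pairwise disjoint and $A, B$ are disjoint, these two supports are disjoint. Because all $h_j$ point the same way and $f, g$ point the same way, $F$ and $G$ are monotone in a common direction, so the $(-A)$ property of $\mathbf{X}$ gives $\mathbb{E}[FG] \le \mathbb{E}[F]\,\mathbb{E}[G]$, which is exactly the required inequality for $\mathbf{Y}$.

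I expect the only delicate point to be the monotonicity-preservation step in part~(1): verifying that the partial expectation $\tilde f$ stays non-decreasing in the $\mathbf{Y}$-coordinates and that the direction of monotonicity is tracked correctly when the functions (or the $h_j$) are non-increasing. The disjointness bookkeeping — splitting $I, J$ across the two blocks and confirming the induced supports remain disjoint — is routine but must be stated carefully, and the whole argument implicitly assumes the usual integrability of $f, g$ so that all conditional expectations are well defined.
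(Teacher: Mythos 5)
Your proof is correct. One point of comparison is worth noting: the paper itself does not prove this proposition at all --- it is stated as an imported result from the negative-dependence literature (the citation accompanying Lemma~\ref{ZeroOne} and Proposition~\ref{proptrans}), so there is no in-paper argument to match against. What you have written is essentially the standard proof from that literature: for part~(1), the conditioning/tower argument --- apply $(-A)$ of $\mathbf{X}$ conditionally on $\mathbf{Y}=\mathbf{y}$ (using independence to keep the conditional law of $\mathbf{X}$ equal to its marginal), observe that the partial expectations $\tilde f,\tilde g$ inherit monotonicity in the disjoint $\mathbf{Y}$-blocks, then apply $(-A)$ of $\mathbf{Y}$; for part~(2), the observation that compositions $F = f\circ(h_a)_{a\in A}$ and $G = g\circ(h_b)_{b\in B}$ have disjoint supports in $[n]$ and are monotone in a common direction (with the direction flipping coherently when the $h_j$ are non-increasing), so $(-A)$ of $\mathbf{X}$ applies directly. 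Both steps are sound, the disjointness bookkeeping is handled correctly, and the only omitted edge case --- an empty block $I_X$ or $J_X$ in part~(1), where the corresponding function is constant in $\mathbf{x}$ and the conditional inequality degenerates to an equality --- is trivial. So your proposal correctly fills in a proof the paper delegates to its references.
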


\begin{proposition}\label{lemChern}
    The Chernoff--Hoeffding bounds are applicable to sums of variables that satisfy the negative association condition ($-A$).
\end{proposition}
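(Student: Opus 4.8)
The plan is to isolate the single ingredient that the standard Chernoff--Hoeffding derivation borrows from independence and to show that negative association supplies exactly that ingredient as an inequality. Recall that both the upper- and lower-tail bounds for a sum $S = \sum_{i=1}^{n} X_i$ are obtained by the exponential Markov inequality: for $\lambda > 0$ one has $\Pr[S \ge t] \le e^{-\lambda t}\,\mathbb{E}[e^{\lambda S}]$, and symmetrically for $\lambda < 0$ one controls the lower tail. The only place where independence enters is the factorization $\mathbb{E}[e^{\lambda S}] = \prod_i \mathbb{E}[e^{\lambda X_i}]$ of the moment generating function; every subsequent step (selecting and optimizing $\lambda$, substituting per-variable MGF estimates) is an inequality that never again refers to the joint law. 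Hence it suffices to prove, for every real $\lambda$,
\begin{equation}\label{eq:mgf-na}
    \mathbb{E}\!\left[\prod_{i=1}^{n} e^{\lambda X_i}\right] \le \prod_{i=1}^{n} \mathbb{E}\!\left[e^{\lambda X_i}\right],
\end{equation}
because substituting the upper bound \eqref{eq:mgf-na} into the exponential Markov inequality reproduces verbatim the same MGF product that appears in the independent case, so all resulting tail bounds carry over with identical constants.

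First I would establish \eqref{eq:mgf-na} by induction on $n$ directly from the definition $(-A)$. Fix $\lambda > 0$ (the case $\lambda < 0$ is handled symmetrically below). For each $i$ the map $x \mapsto e^{\lambda x}$ is non-negative and non-decreasing, and since a product of non-negative non-decreasing functions is again non-decreasing, any partial product $\prod_{i \in I} e^{\lambda X_i}$ is a non-decreasing function of the coordinates $(X_i)_{i \in I}$. For the inductive step, split the index set as $I = \{1,\dots,n-1\}$ and $J = \{n\}$, put $f(X_i, i \in I) = \prod_{i=1}^{n-1} e^{\lambda X_i}$ and $g(X_n) = e^{\lambda X_n}$; both are non-decreasing, so the $(-A)$ condition gives
\begin{equation}\label{eq:na-step}
    \mathbb{E}\!\left[\prod_{i=1}^{n} e^{\lambda X_i}\right] = \mathbb{E}[f\,g] \le \mathbb{E}[f]\,\mathbb{E}[g] = \mathbb{E}\!\left[\prod_{i=1}^{n-1} e^{\lambda X_i}\right] \mathbb{E}\!\left[e^{\lambda X_n}\right].
\end{equation}
Applying the induction hypothesis to the $(n-1)$-fold product on the right-hand side of \eqref{eq:na-step} yields \eqref{eq:mgf-na}, with the base case $n = 1$ trivial.

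For $\lambda < 0$ I would rerun the identical induction, now using that $x \mapsto e^{\lambda x}$ is non-negative and non-increasing, so every partial product is a non-increasing function of its coordinates; since $(-A)$ applies equally to pairs of non-increasing functions, step \eqref{eq:na-step} holds unchanged and \eqref{eq:mgf-na} follows again. With \eqref{eq:mgf-na} available for all $\lambda$, the proof concludes: for the upper tail take $\lambda > 0$ and write $\Pr[S \ge t] \le e^{-\lambda t} \prod_i \mathbb{E}[e^{\lambda X_i}]$, then optimize over $\lambda$ exactly as in the independent derivation, and for the lower tail take $\lambda < 0$ analogously. The one point requiring care—the main obstacle—is the monotonicity bookkeeping in the inductive step: one must verify that each partial product is monotone in the \emph{correct} direction matching $g$, so that the two-function form of $(-A)$ is genuinely applicable. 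This reduces to the elementary observation that products of non-negative functions preserve the common direction of monotonicity, which is precisely the closure property formalized in Proposition~\ref{proptrans}, and so presents no real difficulty once stated explicitly.
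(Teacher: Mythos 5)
Your proof is correct, and it is essentially the canonical argument: the paper itself gives no proof of this proposition but imports it from \cite{negativedepend}, and your reduction---isolating the MGF factorization as the only use of independence and recovering the domination $\mathbb{E}\bigl[\prod_i e^{\lambda X_i}\bigr] \le \prod_i \mathbb{E}\bigl[e^{\lambda X_i}\bigr]$ by induction via the two-function form of $(-A)$ applied to the monotone functions $\prod_{i<n} e^{\lambda X_i}$ and $e^{\lambda X_n}$---is exactly the proof given in that reference, with the induction legitimate because any subvector of an $(-A)$ vector is again $(-A)$ (disjoint index sets within $\{1,\dots,n-1\}$ are disjoint index sets within $[n]$). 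One small correction: the closure fact you invoke at the end (a product of non-negative functions sharing a direction of monotonicity is monotone in that direction) is an elementary pointwise observation and is \emph{not} what Proposition~\ref{proptrans} states---that proposition concerns preservation of negative association under monotone maps of disjoint blocks---but since you also justify the fact inline, this misattribution does not affect the validity of the argument.
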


 We are now ready to complete our proof:
 \begin{lemma}
     We can apply Chernoff bound on $\{Q_i\}_{i\in[\ell]}$
 \end{lemma}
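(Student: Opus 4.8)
The plan is to establish that the density indicators $Q_1,\dots,Q_\ell$ satisfy the negative association condition $(-A)$; once this is in hand, Proposition~\ref{lemChern} immediately licenses the Chernoff--Hoeffding bound on $\sum_{i} Q_i$, which is all that the phase analysis of Lemma~\ref{lemma:PhaseProbability} requires. The guiding intuition is the balls-and-bins picture flagged above: each agent is assigned to at most one round, so the more agents crowd into one round the fewer remain for the others, and this competition is precisely the negative dependence that $(-A)$ captures.

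First I would identify the correct atomic random variables. For each agent $b$, introduce zero-one indicators $Y_b^0, Y_b^1, \dots, Y_b^\ell$, where $Y_b^j = 1$ iff $b$ is placed in round $j$ (for $1 \le j \le \ell$) and $Y_b^0 = 1$ iff $b$ is placed in none of the $\ell$ rounds of the phase. Since the rounds are disjoint consecutive segments of the sequence, each agent occupies at most one of them, so $\sum_{j=0}^{\ell} Y_b^j = 1$ for every $b$. By the Zero--One Lemma (Lemma~\ref{ZeroOne}), the vector $(Y_b^0,\dots,Y_b^\ell)$ satisfies $(-A)$ for each fixed $b$. Extending the pairwise computation of Lemma~\ref{lemma:RoundPro} (which rests only on the uniformly random arrival order and the binomially drawn round lengths) gives that the placement vectors of distinct agents are mutually independent, so Proposition~\ref{proptrans}(1), applied inductively over the agents, shows that the whole family $\{Y_b^j : b \in \N,\ 0 \le j \le \ell\}$ satisfies $(-A)$.

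Next I would write each $Q_j$ as a monotone function of a block of these variables that is disjoint from the blocks of the other rounds. The round value $\opt_j = \max\{ f(T) : T \subseteq \N_j,\ \sum_{i \in T} c_i \le B_j \}$, with $B_j = 3\,C\,a_j\,B$ fixed, depends only on the set $\N_j = \{ b : Y_b^j = 1 \}$ of agents assigned to round $j$ (and their fixed costs). Enlarging $\N_j$ can only enlarge the feasible region of this maximization, so $\opt_j$ is non-decreasing in the indicators $\{ Y_b^j : b \in \N \}$; hence $Q_j$, the indicator of the event $\{ \opt_j \ge C\,a_j\,\opt \}$, is a non-decreasing function of that block. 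These blocks are pairwise disjoint across $j$, so Proposition~\ref{proptrans}(2) yields that $(Q_1,\dots,Q_\ell)$ satisfies $(-A)$, and Proposition~\ref{lemChern} finishes the argument.

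The hard part is not any single calculation but correctly isolating the randomness so that both structural propositions apply: we must ensure that each $Q_j$ is determined by, and monotone in, a block of indicators disjoint from the blocks of the remaining rounds. The monotonicity of $\opt_j$ in the round's agent set is what makes $Q_j$ a legitimate non-decreasing function, and the fact that each agent lies in at most one round is what makes the blocks disjoint; with those two observations in place, everything else is a direct invocation of Lemma~\ref{ZeroOne} and Proposition~\ref{proptrans}.
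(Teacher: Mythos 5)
Your proof follows essentially the same route as the paper's: per-agent zero-one placement vectors handled by the Zero--One Lemma (Lemma~\ref{ZeroOne}), augmented across mutually independent agents via Proposition~\ref{proptrans}(1), then mapped through disjoint monotone blocks via Proposition~\ref{proptrans}(2), and finished with Proposition~\ref{lemChern}. The only deviations are minor (and arguably improvements): you take the monotone function to be the density indicator $\mathbb{1}[\opt_j \ge C\,a_j\,\opt]$ itself, using monotonicity of the round's knapsack optimum in its agent set, where the paper instead uses the linear surrogate $\sum_i w_i Y_i^j$; and your explicit dummy indicator $Y_b^0$ makes the hypothesis $\sum_j Y_b^j = 1$ of the Zero--One Lemma hold literally, a detail the paper elides.
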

 \begin{proof}
     Consider the following random variables:
     \begin{equation}
         \begin{split}
              Y_i^{j} &= \begin{cases} 
      1 & \text{ if agent $b_i$ is in round $j$}\\
      0 & \text{ otherwise  } 
   \end{cases}\\
    Q_j &= \begin{cases} 
      1 & \text{ if  } \sum_{i=1}^{n}w_i \cdot Y_{i}^{j} \ge  C\cdot a_j \cdot  OPT\\
      0 & \text{ otherwise  } 
   \end{cases}
         \end{split}
     \end{equation}

     By lemma \ref{ZeroOne} it is immediate that $(Y_i^{1},Y_{i}^2,\dots,Y_{i}^{\ell})$ are negatively associated. By Proposition \ref{proptrans}.1, along with the independency among agents, we can conclude that the whole set $(Y_{i}^{j})_{i\in [n],j\in[\ell]}$ is negatively associated. Finally, we use proposition \ref{proptrans}.2, setting 
     \begin{equation}
         \begin{split}
             h_j = \begin{cases} 
      1 & \text{ if  } \sum_{i=1}^{n}w_i\cdot Y_{i}^{j} \ge C\cdot a\cdot OPT\\
      0 & \text{ otherwise  } 
   \end{cases}
         \end{split}
     \end{equation} which is an increasing function of $Y_i^{j}$, to get that
     $(Q_1,Q_2,\dots,Q_\ell)$ are negatively associated. We now get the desired result directly from Proposition \ref{lemChern}
 \end{proof}

The following lemma effectively leverages this dependency to demonstrate that all phases, whose rounds are good with respect to their thresholds, are dense with significant probability.

\begin{proof}[Proof of Lemma \ref{lemma:PhaseProbability}]
Proposition \ref{lemma:roundProb} suggests that every round succeeds with probability at least $0.9$. Now let $\{Q_i\}_{i=1}^{\delta}$ be random variables under the following distribution:
\begin{equation}
    \begin{split}
    Q_i &= \begin{cases} 
      1 ,& \text{ if round } $i$ \text{ is dense}\\
      0, & \text{ otherwise  } 
   \end{cases}\\
    \end{split}
\end{equation}
 By leveraging the previous results, we can argue that the variables $ Q_i $ exhibit negative dependence, which permits the use of concentration inequalities.

Let $Q = \sum_{i=1}^m  Q_i$. The probability a Phase is not dense is, by applying Chernoff bound on $Q$, at most:
\begin{equation}
    \begin{split}
       \Prob{ Q < \frac{\delta}{2} } &\le \Prob{ Q < \frac{1}{1.8} \cdot \E{}{Q} } \\
       &\le \exp\lp\{ - \frac{(\frac{0.8}{1.8})^2\cdot 0.9 \cdot \delta}{2} \rp\} \\
       & = \exp\lp\{ - \frac{4 \cdot \delta}{45} \rp\}
    \end{split}
\end{equation}
\end{proof}

 \subsection{Lemmas Used in Estimating the Correctness Probability in Section~\ref{sec:estimate}}
 \label{A3}

\begin{lemma}\label{lemma:E2}
    The event $\mathcal{E}_2$ happens with probability at least $0.9$.
\end{lemma}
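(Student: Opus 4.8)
The plan is to show that the single event ``the $(\gamma_{i^\ast},a_{i^\ast},t_{i^\ast})$-phase is dense'' already forces $\Event_2$, and then to lower bound the probability of this dense event via Lemma~\ref{lemma:PhaseProbability}. First I would fix the index $i^\ast$ with $\opt/8 \in (t_{i^\ast},t_{i^\ast+1}]$ and treat the generic case $i^\ast \ge 2$, in which the phase with threshold $t_{i^\ast}$ is actually run by \hyperref[alg:PowerTower]{PowerTowerSearch}.

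For the containment ``dense $\Rightarrow \Event_2$'', I would combine two facts. Since $t_{i^\ast} < \opt/8 \le \opt/4$, Lemma~\ref{lemma:phasesucc} guarantees that a dense $(\gamma_{i^\ast},a_{i^\ast},t_{i^\ast})$-phase is successful, so the last successful phase has index at least $i^\ast$. In the other direction, Lemma~\ref{lemma:PERIOD1} (applied with $\opt/8 \in (t_{i^\ast},t_{i^\ast+1}]$) shows that every phase with threshold $t_j$, $j \ge i^\ast+2$, fails, so the last successful phase has index at most $i^\ast+1$. Hence the last successful phase is either $i^\ast$ or $i^\ast+1$; in the former case \hyperref[alg:PowerTower]{PowerTowerSearch} returns the pair $(t_{i^\ast-1},t_{i^\ast}]$, $(t_{i^\ast},t_{i^\ast+1}]$ and in the latter the pair $(t_{i^\ast},t_{i^\ast+1}]$, $(t_{i^\ast+1},t_{i^\ast+2}]$. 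Both selected pairs contain $(t_{i^\ast},t_{i^\ast+1}] \ni \opt/8$, so $\opt/8 \in A \cup B$ and $\Event_2$ holds.

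It then remains to bound the probability that the $(\gamma_{i^\ast},a_{i^\ast},t_{i^\ast})$-phase is dense. By Proposition~\ref{prop:good1} its rounds are good, and $t_{i^\ast} < \opt$, so Lemma~\ref{lemma:PhaseProbability} gives that it is dense with probability at least $1-\exp(-4\gamma_{i^\ast}/45)$. Plugging in $\gamma_{i^\ast} = \tfrac32 \log(t_{i^\ast}/\vmax)$ and using that $i^\ast \ge 2$ forces $t_{i^\ast} \ge t_2 = 10^7\vmax$, hence $\gamma_{i^\ast} \ge \tfrac32 \log(10^7) > \tfrac{45}{4}\ln 10$, makes this bound at least $0.9$. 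Therefore $\Prob{\Event_2} \ge \Prob{\text{phase } i^\ast \text{ dense}} \ge 0.9$ in this case.

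The main subtlety—and the step I would be most careful about—is the boundary case $i^\ast=1$, i.e.\ $\opt/8 \in (t_1,t_2] = (\vmax,10^7\vmax]$, which the large-market hypothesis $\opt>10^7\vmax$ does not rule out. Here the phase with threshold $t_1$ is never tested, so the density argument above does not apply directly. Instead I would argue that $\Event_2$ holds \emph{with certainty}: Lemma~\ref{lemma:PERIOD1} still forces every phase of index $\ge 3$ to fail, so the last successful index is $1$ or $2$, and in either case the interval $(t_1,t_2]$ (which contains $\opt/8$) lies in the returned pair by the inference rule used when the lowest intervals are not explicitly tested. A related point to verify is that Lemma~\ref{lemma:PERIOD1} is genuinely deterministic, rather than merely high-probability—it follows from the fact that any collected set is budget-feasible and hence has value at most $\opt$—since the entire pinning-down of the last successful phase, in both the generic and the boundary case, rests on it.
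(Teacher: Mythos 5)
Your proof is correct and takes essentially the same route as the paper's: a dense phase at the index $i^\ast$ with $\opt/8 \in (t_{i^\ast}, t_{i^\ast+1}]$ forces $\Event_2$ (success of $P_{i^\ast}$ via Lemma~\ref{lemma:phasesucc}, failure of every $P_j$ with $j \ge i^\ast+2$ via the deterministic Lemma~\ref{lemma:PERIOD1}), and Lemma~\ref{lemma:PhaseProbability} with $\gamma_{i^\ast} \ge \gamma_2 = \frac{3}{2}\log(10^7)$ yields the $0.9$ bound. Your explicit treatment of the boundary case $i^\ast = 1$ --- which is indeed possible when $10^7\vmax < \opt \le 8\cdot 10^7\,\vmax$, so that $P_1$ is never tested yet $\Event_2$ holds with certainty --- is in fact more careful than the paper's own proof, which implicitly assumes $i^\ast \ge 2$.
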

\begin{proof}
    To successfully identify the correct intervals, we need the last, or the second to last, successful Phase to include $\frac{\opt}{8}$. Let us denote by $P_i$ the Phase that tests threshold $t_i$. Let $\frac{\opt}{8}\in (t_j,t_{j+1}]$, then, if $P_j$ is dense, $P_j$ will be successful and all $P_i$, with $i>j+1$ will fail, by Lemma \ref{lemma:PERIOD1}. Thus, the set of intervals we end up with will include $(t_j,t_{j+1}].$ Thus, we need to investigate the probability that Phase $P_j$ is dense. By Lemma \ref{lemma:PhaseProbability} we have that:
\begin{equation}
    \begin{split}
            \Prob{P_j \text{ is dense}} &\ge 1-\exp\left\{-\frac{4\cdot \gamma_{j}}{45}\right\}\\
        &\ge 1-\exp\left\{-\frac{4\cdot \gamma_{2}}{45}\right\}  =1-\exp\left\{-3\right\}\ge 0.9 
    \end{split}
\end{equation}
\end{proof}

 \begin{lemma}\label{lemma:BinaryProb}
    All phases tested with a threshold $t\le\frac{\opt}{8}$, during BinarySearch and Exploitation, are dense with probability at least $0.9$ (i.e., the event $\Event_3$ occurs with probability at least $0.9$). 
\end{lemma}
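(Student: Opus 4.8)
The plan is to prove $\Event_3$ by a union bound over all phases run in periods $3$ and $4$, feeding each phase into the per-phase guarantee of Lemma~\ref{lemma:PhaseProbability}. First I would fix the conditioning under which $\Event_3$ is meaningful: we work in the runs where the interval $[t_{\min},t_{\max}]$ selected by PowerTowerSearch contains $\opt/8$ (i.e.\ given $\Event_2\cap\Event_5$), so that $t_{\min}\le \opt/8\le \opt$. Proposition~\ref{prop:good2} and Proposition~\ref{prop:good3} guarantee that every round of BinarySearch and of Exploitation is a good $(a,t_{\min})$-round, i.e.\ $a\ge 81\,e\,\vmax/t_{\min}$; since $t_{\min}\le\opt$, this immediately upgrades to $a\ge 81\,e\,\vmax/\opt$. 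Hence every round in periods $3$ and $4$ is good \emph{with respect to} $\opt$.

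The key point I would emphasize is that whether a round (and hence a phase) is dense is a property of the round's realized set of agents and the fixed budget $B_j=3Ca_jB$ alone, and is independent of the threshold $\hat t$ that TestThreshold happens to test on it. Consequently, even a phase whose tested threshold exceeds $\opt$ has exactly the same density as the corresponding $(m,a,\opt)$-phase, all of whose rounds are good $(a,\opt)$-rounds with effective threshold $\opt\le\opt$. Lemma~\ref{lemma:PhaseProbability} therefore applies verbatim and gives that each such phase fails to be dense with probability at most $\exp(-4m/45)$. This is also why the threshold restriction ``$t\le\opt/8$'' in the statement is immaterial: the same density bound holds for every phase, which is precisely what $\Event_3$ demands.

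Next I would count the phases. Both BinarySearch and Exploitation use the common phase size $m=\left\lceil 8\log\log\!\left(\tfrac{t_{\max}}{t_{\min}}\right)\right\rceil$ and length parameter $a$, and each runs over $\ell/m=\left\lceil\log\log\!\left(\tfrac{t_{\max}}{t_{\min}}\right)\right\rceil$ phases (BinarySearch performs binary search on an exponent range of size $\left\lceil\log\!\left(\tfrac{t_{\max}}{t_{\min}}\right)\right\rceil$, while Exploitation's loop runs $\tfrac{1}{6ma}$ times). Thus there are at most $2\left\lceil\log\log\!\left(\tfrac{t_{\max}}{t_{\min}}\right)\right\rceil$ phases in total, and a union bound gives
\[
\Prob{\lnot\Event_3}\;\le\;2\left\lceil\log\log\!\left(\tfrac{t_{\max}}{t_{\min}}\right)\right\rceil\exp\!\left(-\tfrac{4m}{45}\right)\;\le\;2\left(\log\log\!\left(\tfrac{t_{\max}}{t_{\min}}\right)+1\right)\exp\!\left(-\tfrac{32}{45}\log\log\!\left(\tfrac{t_{\max}}{t_{\min}}\right)\right),
\]
where the last step uses $m\ge 8\log\log\!\left(\tfrac{t_{\max}}{t_{\min}}\right)$ and $\lceil x\rceil\le x+1$.

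Finally I would close the estimate in the large-market regime. Since $t_{\max}=2^{t_{\min}/\vmax}\,t_{\min}$, we have $\log\log\!\left(\tfrac{t_{\max}}{t_{\min}}\right)=\log\!\left(\tfrac{t_{\min}}{\vmax}\right)$, and the regime of Proposition~\ref{prop:good2} gives $t_{\min}/\vmax\ge 10^7$, so this quantity is at least $\log(10^7)>23$. Writing $x=\log\log\!\left(\tfrac{t_{\max}}{t_{\min}}\right)\ge 23$, the bound above is $2(x+1)e^{-32x/45}$, which is decreasing for $x\ge23$ and already far below $0.1$ at $x=23$; hence $\Prob{\Event_3}\ge 0.9$. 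The main obstacle, and the reason $m$ is chosen proportional to $\log\log\!\left(\tfrac{t_{\max}}{t_{\min}}\right)$, is exactly this union bound: the number of phases grows like $\log\log\!\left(\tfrac{t_{\max}}{t_{\min}}\right)$, so the per-phase failure probability $\exp(-4m/45)$ must decay fast enough to absorb that factor, which the choice of $m$ together with the large-market lower bound on $t_{\min}/\vmax$ secures.
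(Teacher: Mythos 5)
Your proof is correct and follows essentially the same route as the paper's: a union bound over the at most $2\left\lceil\log\log\left(\tfrac{t_{\max}}{t_{\min}}\right)\right\rceil$ phases of Periods 3 and 4, with each phase's failure probability $\exp(-4m/45)$ coming from the negative-association Chernoff bound (which the paper re-derives inline rather than citing Lemma~\ref{lemma:PhaseProbability}), and the final estimate closed using $t_{\min}/\vmax \ge 10^7$. Your explicit observation that density is a property of the realized round and its budget alone, independent of the threshold being tested---so that phases tested with thresholds exceeding $\opt$ are covered too---is a clarification of a point the paper leaves implicit, not a different argument.
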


\begin{proof}
In the worst case we might need all of the $\frac{2\ell}{m}$ Phases to be dense.
Lemma \ref{lemma:roundProb} suggests that every round succeeds with probability at least $0.9$. Now let $\{Q_i\}_{i=1}^{m}$ be random variables under the following distribution:
\begin{equation}
    \begin{split}
    Q_i &= \begin{cases} 
      1 ,& \text{ if round } $i$ \text{ is dense}\\
      0, & \text{ otherwise  } 
   \end{cases}\\
    \end{split}
\end{equation}

Let $Q = \sum_{i=1}^m  Q_i$.
As argued before (\ref{negdep}), we can apply concentration of mass arguments to $Q$. The probability a Phase is not dense is, by applying Chernoff bound on $Q$, at most:
\begin{equation}
           \Prob{ Q < \frac{m}{2} } \le \Prob{ Q < \frac{1}{1.8} \cdot \E{}{Q} } 
       \le \exp\lp\{ - \frac{(\frac{0.8}{1.8})^2\cdot 0.9 \cdot m}{2} \rp\}
\end{equation}
Using the definitions of $m$ and $t_{\max}$, we get:
\begin{equation}
    \Prob{ Q < \frac{m}{2} } \le\frac{1}{\log\left(\frac{t_{\max}}{t_{\min}}\right)}= \frac{1}{t_{\min}}
\end{equation}
Leveraging our hypothesis that $t_{\min}\ge 10^7$:
\begin{equation}\label{eq:77}
    \Prob{ Q < \frac{m}{2} } \le \frac{1}{20\left\lceil\log\left(t_{\min}\right)\right\rceil}
\end{equation}
To get the desired probability bound, we will now use Union Bound on the event where the first $\frac{2\cdot\ell}{m}$ Phases are dense. Namely:
\begin{equation}
    \begin{split}
         \Prob{\text{First }\frac{2\cdot\ell}{m} \text{ Phases are dense} } &\ge 1 - 2\cdot \left\lceil\log\log\left(\frac{t_{\max}}{t_{\min}}\right)\right\rceil \cdot \Prob{\text{A Phase is not dense}} 
    \end{split}
\end{equation}
Now using the definition of $t_{\max} = 2^{\frac{t_{\min}}{\vmax}}\cdot t_{\min}$, along with inequality~\ref{eq:77}:
\begin{equation}
    \Prob{\text{First }\frac{2\cdot\ell}{m} \text{ Phases are dense} } \ge 1 - \frac{\left\lceil\log\left(t_{\min}\right)\right\rceil}{10\cdot \left\lceil\log\left(t_{\min}\right)\right\rceil} = 0.9
\end{equation}
\end{proof}

Sampling round lengths from a distribution introduces the risk that we run out of agents during the execution of our mechanism. Consequently, our analysis relies on the assumption that the total round lengths remain within the input size. Fortunately, the following lemma guarantees that this condition holds with constant probability:  

\begin{lemma}\label{lemma:Fit}
    We will not run out of agents during the execution of \hyperref[alg:LMMECH]{LM-Mechanism} with probability at least $0.97$ (i.e., the event $\Event_4$ occurs with probability at least $0.97$).
\end{lemma}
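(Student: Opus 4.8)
The plan is to show that the total number of agents consumed by \hyperref[alg:LMMECH]{LM-Mechanism}, i.e.\ the sum $N=\sum_k x_k$ of all realized round lengths drawn across the four periods, is strictly less than $n$ except with exponentially small probability. Since every round length is drawn as $\Bin(|\Nrem|,a)$ from the currently remaining agents, the mechanism can abort for lack of agents only after it has consumed all of $\N$, i.e.\ only when $N=n$; hence $\lnot\Event_4\subseteq\{N\ge n\}$ and it suffices to bound $\Prob{N\ge n}$.

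The first step is a deterministic bound on the sum of all length parameters. Listing the rounds in the order they are drawn, let $\mathcal{F}_{k-1}$ be the history just before round $k$ and let $a_k$ (which is $\mathcal{F}_{k-1}$-measurable) be the length parameter of round $k$. I would verify that $s:=\sum_k a_k<2/3+10^{-3}$ in \emph{every} realization: Period~$1$ contributes exactly $1/3$; Period~$2$ contributes $\sum_{i=2}^{T}\gamma_i a_i\le\sum_{i\ge 2} i^{-10}<10^{-3}$ by Lemma~\ref{lemma:SeqBound}; and Periods~$3$ and~$4$ each use $\ell=1/(6a)$ rounds of length parameter $a$, hence each contributes exactly $\ell a=1/6$. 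The point is that these four bounds hold uniformly over all values of $\vmax$, $t_{\min}$ and $t_{\max}$, since the adaptive choices change only which numbers the parameters take, never their per-period sums. Thus $s<0.7$ surely.

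The second step is a concentration argument that copes with the adaptivity of the $a_k$. Because $x_k\mid\mathcal{F}_{k-1}\sim\Bin(n_{k-1},a_k)$ with $n_{k-1}\le n$, for any $\lambda>0$ we have $\Exp[e^{\lambda x_k}\mid\mathcal{F}_{k-1}]=(1-a_k+a_k e^{\lambda})^{n_{k-1}}\le\exp\bigl(n\,a_k(e^{\lambda}-1)\bigr)$. Defining $Z_k=\exp\bigl(\lambda\sum_{j\le k}x_j-n(e^{\lambda}-1)\sum_{j\le k}a_j\bigr)$, the previous inequality and the $\mathcal{F}_{k-1}$-measurability of $a_k$ show that $(Z_k)$ is a supermartingale, so $\Exp[Z_K]\le 1$; using $\sum_k a_k<0.7$ surely this yields $\Exp[e^{\lambda N}]\le\exp\bigl(0.7\,n\,(e^{\lambda}-1)\bigr)$. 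A Chernoff bound then gives $\Prob{N\ge n}\le\exp\bigl(n(0.7(e^{\lambda}-1)-\lambda)\bigr)$, and optimizing at $\lambda=\ln(1/0.7)$ produces $\Prob{N\ge n}\le\exp(-0.056\,n)$.

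Finally, under the large-market assumption of \hyperref[alg:LMMECH]{LM-Mechanism} we have $\opt>10^{7}\vmax$, and submodularity gives $\opt=f(S^{*})\le\sum_{i\in S^{*}}f(\{i\})\le n\,\vmax$, so $n\ge\opt/\vmax>10^{7}$. Therefore $\Prob{\lnot\Event_4}\le\Prob{N\ge n}\le\exp(-0.056\cdot 10^{7})$, which is far below $0.03$, establishing $\Prob{\Event_4}\ge 0.97$. The main obstacle is the concentration step: the round length parameters are not fixed in advance but depend on data observed in earlier periods, so $N$ is neither a single binomial nor a sum of independent variables; the supermartingale/MGF peeling is precisely what lets the surely-valid bound $\sum_k a_k<0.7$ be combined with the conditional binomial structure of each round without assuming any independence across rounds.
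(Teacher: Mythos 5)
Your proof is correct, but it takes a genuinely different route from the paper's. The paper splits the failure event period by period: it allocates $\tfrac{2}{5}n$ agents to Period 1, $\tfrac{1}{5}n$ to Period 2, and $\tfrac{2}{5}n$ to Periods 3--4, then bounds each overshoot probability by $0.01$ separately --- a Chernoff bound on the single $\Bin(n,1/3)$ draw, Markov's inequality (with a factor-$100$ slack) combined with Lemma~\ref{lemma:SeqBound} for the power-tower phases, and a Chernoff bound for the $2\ell$ rounds of BinarySearch/Exploitation --- and concludes by a union bound, $1-3\cdot 0.01 = 0.97$. Crucially, the paper handles the adaptivity of the round lengths only informally, by a stochastic-domination comparison with i.i.d.\ variables $m_{i,j}\sim\Bin(n,a_i)$ (``if we were to use $m_i$ as our round lengths then our risk \ldots would be greater''). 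You instead prove a single deterministic bound $\sum_k a_k < 2/3+10^{-3}$ valid in every realization (using exactly the same per-period facts: $1/3$ for Period 1, Lemma~\ref{lemma:SeqBound} for Period 2, and $\ell a = 1/6$ for each of Periods 3 and 4), and then run one supermartingale/MGF argument that treats all rounds uniformly. This buys two things: the adaptivity of the length parameters (which are $\mathcal{F}_{k-1}$-measurable, not fixed in advance) is handled rigorously rather than by an appeal to domination, and the resulting failure bound $\Prob{N\ge n}\le e^{-0.056 n}\le e^{-0.056\cdot 10^7}$ is exponentially small rather than $0.03$, making clear that the constant $0.97$ is nowhere near tight. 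The paper's argument, in exchange, is more elementary (Chernoff, Markov, union bound) and needs no martingale machinery. Both proofs invoke the large-market assumption in the same way, via $n\ge \opt/\vmax > 10^7$, and both read $\lnot\Event_4$ as the event that the realized round lengths exhaust all $n$ agents, so your reduction $\lnot\Event_4\subseteq\{N\ge n\}$ is consistent with the paper's definition of $\Event_4$.
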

\begin{proof}
    We will investigate the round lengths drawn as follows:
\begin{enumerate}
    \item Applying Chernoff bound onto the random length $\tau$ drawn for the estimation of $\vmax$ we get that:
    \begin{equation}
        \begin{split}
            \Prob{\tau \ge \frac{2}{5}\cdot n} \le \exp\left\{-\frac{n}{225}\right\}\le 0.01,
        \end{split}
    \end{equation}
    where the last inequality holds for $n\ge10^7$, which must be true for our assumption $\opt\ge10^7\vmax$ to hold.
    Thus:
    \begin{equation}
        \Prob{\tau < \frac{2}{5}\cdot n}\ge 0.99
    \end{equation}
    \item We now consider the random lengths $n_{i,j}$ drawn during Period 2. Consider $m_{i,j}\sim B(n,a_{i})$. If we were to use $m_i$ as our round lengths then our risk of running out of agents would be greater than with $n_i$. We observe that: 
    \begin{equation}
        \E{}{\sum_{i=2}^{T}\sum_{j=1}^{\gamma_i} n_{i,j}}\le\E{}{\sum_{i=2}^{T}\sum_{j=1}^{\gamma_i} m_{i,j}} = n\cdot \sum_{i=2}^{T}\gamma_i \cdot a_i
    \end{equation}
    By applying Markov's inequality on the sum of these random variables, and noticing that  $\E{}{m_{i,j}} = n\cdot a_{i}$, we get:
    \begin{equation}
        \Prob{\sum_{i=2}^{T}\sum_{j=1}^{\gamma_i} n_{i,j} \ge 100\cdot n \cdot\sum_{i=2}^{T} a_{i} \cdot \gamma_i}\le\Prob{\sum_{i=2}^{T}\sum_{j=1}^{\gamma_i} m_{i,j} \ge 100\cdot n \cdot\sum_{i=2}^{T} a_{i} \cdot \gamma_i}\le 0.01
    \end{equation}
    Thus:
    \begin{equation}
        \begin{split}
            \Prob{\sum_{i=2}^{T}\sum_{j=1}^{\gamma_i} n_{i,j} \ge 100\cdot n \cdot\sum_{i=2}^{T} a_{i} \cdot\gamma_i}\le 0.01 
        \end{split}
    \end{equation}
    By lemma \ref{lemma:SeqBound} we get  $\sum_{i=2}^{T}a_i \cdot \gamma_i < \frac{1}{500}$, and thus:
    \begin{equation}
        \begin{split}
            100\cdot n \cdot\sum_{i=2}^{T} a_{i} \cdot \gamma_i < \frac{n}{5}
        \end{split}
    \end{equation}
    Overall:
    \begin{equation}
        \Prob{\sum_{i=2}^{T}\sum_{j=1}^{\gamma_i} n_{i,j} < \frac{n}{5}}\ge 0.99 
    \end{equation}
    \item Finally, we draw $2\cdot\ell$ lengths $n_i\sim B(N_i,a)$ for the rounds of BinarySearch and Exploitation with the same length parameter $a$. Suppose, similarly with before, the random variables $m_i \sim B(n,a)$ for $i\in[2\,\ell]$. It is evident that 
    \begin{equation}
        \E{}{\sum_{i=1}^{2\,\ell}n_i}\le\E{}{\sum_{i=1}^{2\,\ell}m_i} =  n\cdot 2\,\ell \cdot a = \frac{n}{3}.
    \end{equation} 
    By applying now Chernoff bound on $m_i$ we get that:
        \begin{equation}
        \begin{split}
                        \Prob{\sum_{i=1}^{2\,\ell} n_{i} \ge \frac{2}{5}\cdot n}\le\Prob{\sum_{i=1}^{2\,\ell} m_{i} \ge \frac{2}{5}\cdot n}\le 0.01. 
        \end{split}
    \end{equation}
    Overall we get:
            \begin{equation}
        \begin{split}
\Prob{\sum_{i=1}^{2\,\ell} n_{i} < \frac{2}{5}\cdot n}\ge 0.99
        \end{split}
    \end{equation}
\end{enumerate}
All in all, by Union Bound on the events described before, we get that we will not run out of agents during our Mechanism with probability at least $1 - 0.01-0.01-0.01 = 0.97$ 
\end{proof}


\section{The Proof of Lemma~\ref{BUDGET}}\label{budgetproofs}

Below we prove that all needed rounds can be conducted without fear of running out of budget.
\begin{proof}[Proof of Lemma \ref{BUDGET}:]
According to Property \ref{property:a} during Period 2 in worst case the overall budget expended is the following:
\begin{equation}
    \begin{split}
        \sum_{i=2}^{T}  \gamma_i\cdot3\cdot C\cdot a_i\cdot B &\le\sum_{i=2}^{T}\frac{3\cdot C}{i^{10}}\cdot B < 0.04\cdot B ,
    \end{split}
\end{equation}
where the second to last inequality follows from Lemma~\ref{lemma:SeqBound}.

According to Property \ref{property:a}, during Period 3 and 4, in worst case the overall budget expended is the following:
\begin{equation}
    \begin{split}
        \sum_{i=1}^{2\cdot\ell} 3\cdot C\cdot a\cdot B         &\le 6\cdot C\cdot \ell \cdot a\cdot B \\
        &= C\cdot B\\
        &\le 0.06\cdot B
    \end{split}
\end{equation}    
Overall we have expended a budget of at most $B/10$.
\end{proof}

\end{document}